\newtheorem{theorem}{Theorem}
\newtheorem{lemma}{Lemma}
\newtheorem{cor}{Corollary}
\newtheorem{assumption}{Assumption}
\begin{document}

\def\spacingset#1{\renewcommand{\baselinestretch}%
{#1}\small\normalsize} \spacingset{1}

{
  \title{A semi-parametric model for assessing the effect of temperature on ice accumulation rate from Antarctic ice core data}
  \author{Radhendushka Srivastava\hspace{.2cm}\\
    Department of Mathematics, IIT Bombay India\\
   and \\
    Debasis Sengupta  \\
    Applied Statistics Unit, ISI Kolkata, India
    }
  \maketitle
} 

\begin{abstract}
In this paper, we present a semiparametric model for describing the effect of temperature on Antarctic ice accumulation on a paleoclimatic time scale. The model is motivated by sharp ups and downs in the rate of ice accumulation apparent from ice core data records, which are synchronous with movements of temperature. We prove strong consistency of the estimators under reasonable conditions. We conduct extensive simulations to assess the performance of the estimators and bootstrap based standard errors and confidence limits for the requisite range of sample sizes. Analysis of ice core data from two Antarctic locations over several hundred thousand years shows a reasonable fit. The apparent accumulation rate exhibits a thinning pattern that should facilitate the understanding of ice condensation, transformation and flow over the ages. There is a very strong linear relationship between temperature and the apparent accumulation rate adjusted for thinning. 
\end{abstract}

\noindent%
{\it Keywords:} Nonlinear regression; strong consistency; kernel smoothing; model based bootstrap; dependent errors.    
\vfill

\newpage
\spacingset{1.9} 
\section{Introduction}\label{s1}
An important question in the context of climate change is how the global stocks of frozen water at high elevations and the polar regions would hold up against the warming climate. Reports of melting glaciers from various parts of the planet have raised the stakes for the answer to this question in respect of the largest mass of ice, namely the Antarctic ice sheet (see \cite{IPCC} for a summary of recent findings in this regard). While there is empirical evidence \citep{Allan2008,Algarra2020} as well as theoretical explanations \citep{Wang2013} of greater transportation of moisture in the atmosphere resulting from higher temperatures, there is no clear evidence of the same translating into excess precipitation that can offset the loss of ice from the Antarctic continent. Recent claims of such an offset based on satellite radar and laser altimetry \citep{zwally2015} have since been contested \citep{scambos2016}. A major source of confusion in this regard is the large margin of error in altimetry based estimates. Here we look for an answer from the climatological history of the Earth over several hundred thousand years. 

Ice core data records are the only source of information on the long-term linkage between temperature and ice accumulation. It is important to note that the reconstructed age and temperature corresponding to a specific ice core are based on concentration of isotopic elements extracted in the laboratory. Reconstruction of the temperature is a matter of continued research \citep{Markle}, as errors may occur during the process of extraction of isotopic material as well as reconstruction of age and temperature of the ice cores. There is added complication arising from adjustments of reconstructed ages for alignment with the available domain knowledge. Further, ever since a layer of ice is formed, it goes through a process of thinning under the weight of subsequently deposited layers. A relationship between temperature and ice accumulation over the years needs a careful investigation that takes into account the thinning of ice sheets.       

In this article, we explore this relationship by using ice core data from two locations in Antarctica, Lake Vostok and Dome C of the European Project for Ice Coring in Antarctica (EPICA). The temperature of the different slices of the ice cores is estimated from the concentration of Deuterium isotopes present in the core using dating techniques. The ages of the ice layers at Lake Vostok and EPICA Dome C were originally documented in the GT4 scale \citep{Petit1999} and the EDC3 scale \citep{Parrenin2007}, respectively. The GT4 scale utilizes an ice flow model (aided by isotopic stratigraphic control points at 1534 m (110 kyr) and at 3254 m (390 kyr)) under the assumption that the rate of accumulation of ice varies proportionally to the derivative of the water vapour saturation pressure, which is itself dependent on temperature \citep{Siegert2003}. This procedure may be susceptible to inducing distortions of the role of temperature. The EDC3 scale is also closely related. In order to avoid this possibility, we use for both the data sets the AICC2012 time scale, which is more broadly based \citep{bazin2013, VERES}.  

For ice core data, the temperature anomaly, which is the deviation from long-term average temperature, is reported in degree Celsius, and the age of ice cores is expressed in Kilo Years Before Present (KYrBP, the ``present" being the year 1950). An overview of the two data sets, compiled for uniform intervals of depth, is given in Table~\ref{DATA_info}.
\begin{table}[h]
  \begin{center}
  \caption{Overview of the two ice core data sets \citep{Petit1999,Jouzel2007,Rapp2019}}\label{DATA_info}
  \medskip
    \begin{tabular}{lrrrrrr}
  \hline
  Location & \multicolumn{1}{l}{Age}  & \multicolumn{1}{l}{Depth} & \multicolumn{1}{l}{Base altitude}& \multicolumn{1}{l}{Sample} & \multicolumn{1}{l}{Sampling}\\
  & \multicolumn{1}{l}{limit}  & \multicolumn{1}{l}{limit} & \multicolumn{1}{l}{above MSL}& \multicolumn{1}{l}{size} & \multicolumn{1}{l}{interval}\\
  &\multicolumn{1}{l}{(KYrBP)}    &\multicolumn{1}{l}{(meters)} &\multicolumn{1}{l}{(meters)}& &\multicolumn{1}{l}{(meters)}\\
  \hline
  Lake Vostok  & 403.780 & 3263 &$-500$&3311& 1.00\\
  ($77^\circ 50'$S $106^\circ 00' $E)& & & & &\\
  EPICA Dome C  & 801.588 &  3189.45 &$-40$&  5800  & 0.55 \\
  ($75^\circ 06' $S $123^\circ 20' $E) & & & & &\\
  \hline
   \end{tabular}
  \end{center}
\end{table}

The age difference at successive depths indicates the number of kilo years of net ice accumulation represented by the particular range of depths. We define the apparent accumulation rate (AAR) as the ratio of the height difference of successive ice core samples (measured in meters) and the age difference (in KYrBP) of these samples, as a function of the average age of the two successive layers (in KYrBP). Let $\mathsf{d}_i$, $\mathsf{z}_i$, and $\mathsf{x}_i$ denote the depth, the age and the temperature anomaly, respectively, of the $i$th observation of the ice core for $i=1,2,\ldots,n+1$, where $n+1$ is the total number of observations. The AAR over the age interval $(\mathsf{z}_i,\mathsf{z}_{i+1})$, represented by the average age $z_i=(\mathsf{z}_i+\mathsf{z}_{i+1})/2$ at  the average depth $d_i=(\mathsf{d}_i+\mathsf{d}_{i+1})/2$, is
\begin{equation}
b_i = \frac{\mathsf{d}_{i+1}-\mathsf{d}_i}{\mathsf{z}_{i+1}-\mathsf{z}_i},\quad i=1,2,\ldots,n.    
\end{equation}
This rate corresponds to the age $z_i$ and the average of the temperature anomalies at the end-points of the slice intervals, i.e., $x_i=(\mathsf{x}_i+\mathsf{x}_{i+1})/2$ (henceforth referred to as `temperature').

\begin{figure}[!h]
\centering
\begin{tabular}{cc}
 {\underline{Lake Vostok}}&{\underline{EPICA Dome C}}\\
\includegraphics[width=3in,height=2in]{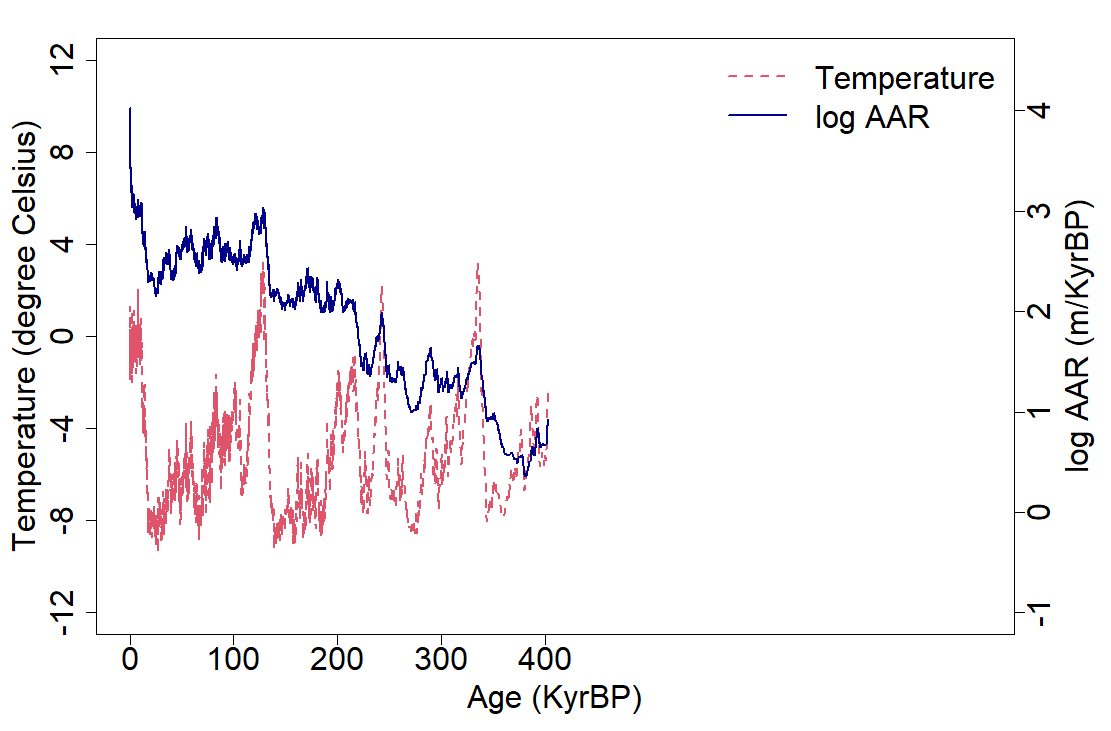}&
\includegraphics[width=3in,height=2in]{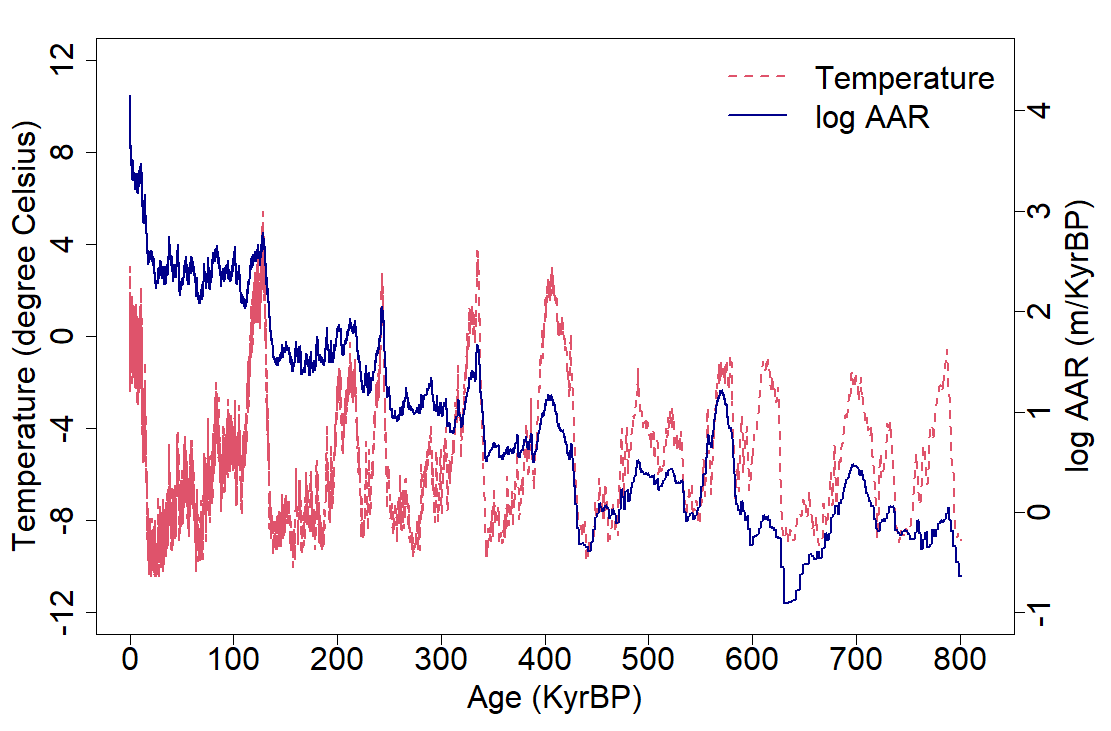}\\
\end{tabular}
\caption{The plots of temperature (in {}$^\circ$C, red curve, labeled along the left vertical axis) and logarithm of AAR (in m/KYr, blue curve and labeled along the right vertical axis) versus age (KYrBP)}
\label{Fig2}
\end{figure}

Figure~\ref{Fig2} shows the plots of temperature (on the left vertical axis) and AAR (in log scale, on right vertical axis)  against age at the two locations. The plots indicate that the fluctuations of AAR with those of the temperature move in the same direction together with a general long-term decreasing trend in AAR for both the data sets.

The reason for our focus on the \textit{apparent} accumulation rate is that the actual rate of ice accumulation cannot be observed over a paleo-climatic scale of time. AAR is the net accumulation observed on site today as the overall effect of several factors like precipitation, thinning under the pressure from the weight of ice lying above (compression factor) through relocation/expulsion of gas and water molecules trapped in ice, and lateral flow of ice. There have been efforts to delineate the actual accumulation rate from thinning, and the results of that exercise are available in the public domain (\url{https://www.ncei.noaa.gov/access/paleo-search/study/15076}). We choose not to separate the two, so as to avoid unnecessary imprecision in assessing the linkage between temperature and accumulation.
 
In Section~\ref{s2}, we present a multiplicative model for AAR that combines a decreasing function of age with a factor that depends linearly on the temperature. We also present an algorithm for estimating the parameters. In Section~\ref{s3}, we establish strong consistency of the estimators under appropriate conditions. In Section~\ref{s4}, we provide a model based bootstrap method to estimate the standard error and the confidence limits of the parameters. In Section~\ref{s5}, we report the findings of a simulation study to validate the applicability of the theoretical results as well as  the bootstrap method to the requisite sample size. In Section~\ref{s6}, we present the analysis of both the ice core data sets and discuss the implications of the findings.  
Proofs of all the theoretical results are given in the supplementary material.

\section{The model and the method}\label{s2} We aim to capture the movement of AAR, observed in Figure~\ref{Fig2}, by the semi-parametric and multiplicative model
\begin{equation}\label{raw_model}
b_i=(1+\gamma_0 x_i) \ g_0(z_i) \  \xi_i, \quad\mbox{for} \quad  i=1,2,\ldots,n,
\end{equation}
where $\gamma_0$ is the parameter associated with the temperature ($x$), $g_0$ is a positive valued, continuous and decreasing function of the age ($z$), and $\xi$ is a sequence of positive valued disturbance term. We presume that $z_i\le  A$, $|x_i|\le x^{(M)}$ and $(1+\gamma_0 x_i)>0$ and  for some positive numbers $A$ and $x^{(M)}$, for $i=1,2,\ldots,n$. 

In the model \eqref{raw_model}, the AAR and temperature are related in a linear manner. This presumption is based on indications from findings of various glacial studies over shorter periods of time \citep{Fudge2016,Yang}. The parameter $\gamma_0$ is the rate of increase in AAR with $1^\circ$C rise in temperature. The function $g_0$ is meant to capture the long-term decreasing trend visible in AAR (see Figure~\ref{Fig2}). We regard the function $g_0$ as a nuisance parameter that represents the nominal rate of accumulation at temperature $x=0$, subject to thinning. Specifically, the thinning factor at a specific age $z$ is $g_0(z)/g_0(0)$. The disturbance term $\xi$ represents the possible error in indirect measurements, and various other atmospheric phenomena that affect AAR but are not included in the model \eqref{raw_model}. Semiparametric models like~\eqref{raw_model} are commonly used in various application areas, though mostly in an additive set-up \cite{ruppert2003semiparametric}.

By allowing $g_0$ to be a general decreasing function, we are able to free the analysis from uncertainties in the determination of age. As mentioned in the foregoing section, there have been multiple attempts to determine the scale of time accurately. The different scales of time may be regarded as monotonically increasing transformations of one another that preserve the order among ages estimated for successive slices of the ice core. In particular, if the reconstructed age $z_i$ is related to the correct age $z_i^*$ through the relationship $z_i^*=g^*(z_i)$ for $i=1,2,\ldots,n$ and some function $g^*$, and the model~\eqref{model} holds with $z_i^*$ in place of $z_i$, then it can be written as $b_i=(1+\gamma_0x_i)g_0(g^*(z_i))\xi_i$. If $g^*$ is an increasing function, then this model has exactly the same form as~\eqref{model} with $g_0\circ g^*$ taking the place of $g_0$. Since this function is allowed to be determined by the data (subject to the constraint of monotonicity), fitting the model to data containing the $z_i$'s is equivalent to fitting it to data containing the $z_i^*$'s. 

We consciously left out an intercept parameter from the linear factor of the model~\eqref{raw_model}. Inclusion of an intercept parameter leads to confounding of the linear factor with $g_0$. An alternative way of ensuring identifiability of the model is to impose a boundary condition like $g_0(0)=1$, which makes $g_0$ the thinning factor. However, that strategy would make all the estimators vulnerable to the well known issues of functional estimation at the boundaries~\citep{Hardle1990}.

The log transformation of model \eqref{raw_model} produces the additive semiparametric model
\begin{equation}\label{model}
y_i=\log(1+\gamma_0 x_i)+\log g_0(z_i)+\varepsilon_i,\quad i=1,2,\ldots,n,
\end{equation}
where $y_i= \log{b_i}$ and $\varepsilon_i=\log{\xi_i}$. The values of AAR at successive slices of the ice core make use of common values of depth and age at the boundary. This construction induces dependence in the AAR sequence. Dependence is also expected to arise from the use of the AICC2012 scale of time, where the ages are subjected to the constraint of monotonicity and are calibrated with other information. In order to address this dependency, we model the noise sequence $\varepsilon$ as a sample from an Ornstein-Uhlenbeck process (see Assumption~\ref{Aerror} below).

The parameters of model \eqref{model} are estimated by minimizing the average squared error loss function
\begin{equation}\label{Square_loss}
 Q_n(\gamma,g)= \frac1{n}\sum_{i=1}^n(y_i-\log(1+\gamma x_i)-\log g(z_i))^2,
\end{equation}
with respect to the scalar parameter $\gamma$ and the functional parameter $g$. The parameter of main interest is $\gamma$, while $g$ is a nuisance parameter. The optimization is made over the space $\Gamma\times \mathcal{G}$, where $\Gamma=[-(1-\lambda)/x^{(M)},(1-\lambda)/x^{(M)}]$ for some positive fraction $\lambda$ (which ensures that $1+\gamma x_i\ge \lambda$ for $i=1,2,\ldots,n$) 
and $\mathcal{G}$ is a set containing positive valued, non-increasing and continuous functions defined over the interval $[0, \ A]$.

We have chosen the simple criterion \eqref{Square_loss}, which does not exploit the presumed dependence in the errors, in order that the method can be shown to be consistent. 

An iterative two stage coordinate descent algorithm is adapted for minimizing $Q_n$ with respect to $\gamma$ and $\log g$. 
The initial iterate for $\gamma$ is $\hat{\gamma}_n^{(0)}=0$. Substitution of this value in~\eqref{Square_loss} leads to the initial iterate of $g$, 
\begin{eqnarray}\label{eq:iter1}
\log\hat{g}^{(0)}_n=\mathop{\arg\min}_{g\in \mathcal{G}} \frac{1}{n}\sum_{i=1}^{n}\left(y_i-\log g(z_i)\right)^2.
\end{eqnarray}
The values of $\log \hat g^{(0)}_n$ at the requisite points may written explicitly as  
\begin{equation}
 \log \hat{g}_n^{(0)}(z_i) = \min_{1\leq j\leq i} \ \max_{i\leq l\leq n}\frac{y_j+\ldots + y_l}{l-j+1},
\end{equation}
and obtained by the pool-adjacent-violators algorithm (PAVA, \cite{RWR,guntuboyina2018}). The $k^{\rm th}$ iterate of $\gamma$ is
\begin{eqnarray}
\hat{\gamma}_n^{(k)}=\mathop{\arg\min}_{\gamma\in \Gamma} \frac{1}{n}\sum_{i=1}^{n}\left(y_i-\log \hat{g}^{(k-1)}(z_i)-\log (1+\gamma x_i)\right)^2\quad k\ge1.\label{bar_Qn}
\end{eqnarray}
The Gauss-Newton algorithm is used to minimize \eqref{bar_Qn}. The $k^{\rm th}$ iterate of $g$ is 
\begin{eqnarray}\label{eq:iter4}
\log\hat{g}_n^{(k)} =\mathop{\arg\min}_{g\in \mathcal{G}} \ \  \frac{1}{n}\sum_{i=1}^{n}\left(y_i-\log \left(1+\hat{\gamma}_n^{(k)} x_i\right)-\log g(z_i)\right)^2,\quad k\ge1,\label{tilde_Qn}
\end{eqnarray}
where $\log\hat{g}^{(k)}$ is obtained by using PAVA on the sequence $y_i-\log \left(1+\hat{\gamma}_n^{(k)} x_i\right)$, $i=1,2,\ldots,n$.

The iterations \eqref{bar_Qn} and \eqref{tilde_Qn} are carried out alternately. We denote the limiting values of ${\hat\gamma}_n^{(k)}$ and $\log\hat{g}_n^{(k)}$ by ${\hat\gamma}_n$ and $\log\hat{g}_n$, respectively, and presume linear interpolation of the latter in between the locations of estimation. The iterations are continued until the relative change in the average squared error loss criterion in successive iterations is less than the desired threshold. 

As $\log g_0$ is presumed to be a smooth function, a smoothed version of the estimator $\log\hat{g}_n$ is desirable. We propose the kernel smoothed estimator
 \begin{eqnarray}\label{tilde_g_est}
        \log\tilde{g}_{n}(z)&=& \frac{\frac1{nh}\sum_{i=1}^n K\left(\frac{z-z_i}{h}\right)\log\hat{g}_n(z_i) }{\frac1{nh}\sum_{i=1}^n K\left(\frac{z-z_i}{h}\right)}\mbox{ for } z>0, 
\end{eqnarray}
where $K(\cdot)$ is the kernel function and $h$ is the bandwidth.
Based on $\log \tilde{g}_{n}$, we update the estimator of $\gamma_0$ as
\begin{eqnarray}\label{tilde_gamma_est}
\tilde{\gamma}_n&=&\mathop{\arg\min}_{\gamma\in \Gamma} \frac{1}{n}\sum_{i=1}^{n}\left(y_i-\log \tilde{g}_n(z_i)-\log (1+\gamma x_i)\right)^2.
\end{eqnarray}
\section{Strong consistency of estimators}\label{s3}
The parameter space of $\gamma$ is $\Gamma$, a compact subset of $\mathbb{R}$. The space of $g$ is $\mathcal{G},$\footnote{A compact subset of $C[0,A]$ is the set of non-negative valued, non-increasing and continuous functions defined over the interval $[0,A]$, with uniformly bounded left and right derivatives (compactness follows from the Arzel\`{a}–Ascoli theorem).} a compact subset of $C[0,A]$ (set of real valued continuous functions defined over the interval $[0,A]$) such that the elements of $\mathcal{G}$ are positive valued, non-increasing and uniformly bounded. We set the parameter space of the model as $\Gamma\times \mathcal{G}$.   
Consequently, the parameter space $\Gamma\times \mathcal{G}$ with the \textit{product distance function}  
\begin{equation*}
d((\gamma_1,g_1),(\gamma_2,g_2))=| \gamma_1-\gamma_2 |+\sup_{z\in[0,A]} |g_1(z)-g_2(z)|  \quad\mbox{for} \ \  \gamma_1,\ \gamma_2\in \Gamma \ \mbox{and} \ g_1, \ g_2\in \mathcal{G}
\end{equation*}
is a complete separable metric space.
The existence of the estimators for such a parameter space follows from arguments similar to those in \cite{jennrich1969}. 

For proving almost sure convergence of the criterion and strong consistency of the estimators, we make some assumptions on the sequences of age ($z$), temperature ($x$), and errors ($\varepsilon$). 

\begin{assumption}\label{Aage}
The numbers $z_1<z_2<\cdots<z_n$ lying within the interval $[0,A]$ are the values of a Lipschitz continuous function  $z(d)$ at uniformly spaced depths $0=d_1<d_2<\cdots<d_n=\Delta$ with $d_i=\frac{i\Delta}{n}$, for $i=1,2,\ldots,n$ and $\Delta$ is a positive number. 
\end{assumption}

\begin{assumption}\label{Atemp}
The numbers $x_1,x_2,\ldots,x_n$ lying within the interval $(-x^{(M)},x^{(M)})$ are the values of a Lipschitz continuous function $x(d)$ at uniformly spaced depths $0=d_1<d_2<\cdots<d_n=\Delta$ with $d_i=\frac{i\Delta}{n}$, for $i=1,2,\ldots,n$ and $\Delta$ is a positive number. 
\end{assumption} 

\begin{assumption}\label{Aerror}
The error sequence $\varepsilon_1,\varepsilon_2,\ldots, \varepsilon_n$ are uniformly spaced samples of  a stationary Ornstein-Uhlenbeck (OU) process
$\{\varepsilon(t), t\in[0,\infty)\}$ of the form 
\begin{eqnarray}
    \varepsilon(t)=e^{-\beta t} \varepsilon(0)+\sigma\int_{0}^t e^{-\beta(t-u)} dW(u),\label{OU_sol}
\end{eqnarray}
where $\beta,\sigma>0$, $\{W(u), u\in[0,\infty]\}$ is a standard Wiener process and $\varepsilon(0)$ is independent of $W(t)$ for all $t$ and has the invariant distribution of the OU process, namely $ N(0,\frac{\sigma^2}{2\beta})$.
Further, the sampling interval $\rho_n$ of the errors $\varepsilon_i=\varepsilon(i\rho_n)$ for $i=1,2,\ldots,n$ is such that $\rho_n\to 0$  
as $n\to \infty$ and $\sum_{n=1}^\infty e^{-cn\rho_n}<\infty$ for all $c>0$. 
\end{assumption}

Assumptions~\ref{Aage} and~\ref{Atemp} are equivalent to the continuity of age and temperature, respectively, over depth. The OU process is a simple model for allowing dependence in the observations of AAR appearing on the left side of the model~\eqref{model}. The condition $\sum_{n=1}^\infty e^{-cn\rho_n}<\infty$ implies that $n\rho_n\to \infty$ as $n\to\infty$, which corresponds to an increasing range of sampling ($n\rho_n$) with increasing sample size ($n$). The pair of conditions on $\rho_n$ is weaker than the condition $\rho_n\propto n^{-\kappa} $ for some positive fraction $\kappa$.  
All the indices for $d, \ x, \ z$ and $\varepsilon$ depend on $n$, but we omit it for notational simplicity. 

The asymptotic regime used here is somewhat different from the regimes used for increasingly frequent samples from an underlying continuous time signal-plus noise process over a fixed time span \citep{robinson1997large} or for continuous time processes sampled over an increasing span of time \citep{kessler1997,lu2019}. Here, the signal is sampled over a fixed span, but the span of the noise samples expands with sample size. We adopt this asymptotic regime to ensure that the correlation between the observations at any two fixed depths go to zero with sample size going to infinity, which we expect to happen from the indications obtained from the data at hand (see the discussion on Figure 6). This set-up requires modification of available results also.

\begin{theorem}\label{thm1}
Under Assumptions~\ref{Aage}-\ref{Aerror}, the average squared error loss function $Q_{n}(\gamma,g)$ defined in~\eqref{Square_loss} converges uniformly to $Q(\gamma,g)+\frac{\sigma^2}{2\beta}$ with probability 1 as $n\to \infty$,
where 
$$
Q(\gamma,g)=\frac{1}{\Delta}\int_{0}^\Delta[\log (1+\gamma_0 x(w))-\log (1 +\gamma x(w))+\log g_0(z(w))-\log g(z(w))]^2dw.
$$
\end{theorem}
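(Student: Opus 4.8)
The plan is to substitute the data-generating model \eqref{model} into the criterion \eqref{Square_loss} and decompose it into a purely deterministic part, a stochastic cross term, and a noise-average part. Writing $r_i(\gamma,g)=\log(1+\gamma_0 x_i)-\log(1+\gamma x_i)+\log g_0(z_i)-\log g(z_i)$, the $i$th summand becomes $y_i-\log(1+\gamma x_i)-\log g(z_i)=r_i(\gamma,g)+\varepsilon_i$, so that
$$
Q_n(\gamma,g)=\frac1n\sum_{i=1}^n r_i(\gamma,g)^2+\frac2n\sum_{i=1}^n r_i(\gamma,g)\,\varepsilon_i+\frac1n\sum_{i=1}^n\varepsilon_i^2 .
$$
I would then show that the three terms converge, uniformly over $\Gamma\times\mathcal G$ and with probability one, to $Q(\gamma,g)$, to $0$, and to $\sigma^2/(1-\phi^2)$, respectively.

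For the first (deterministic) term, Assumptions~\ref{Aage}--\ref{Atemp} give $x_i=x(d_i)$ and $z_i=z(d_i)$ at the equally spaced depths $d_i=i\Delta/n$, so $\frac1n\sum_i r_i(\gamma,g)^2$ is precisely a Riemann sum for $Q(\gamma,g)$. Uniform convergence over the parameter space would follow from compactness of $\Gamma\times\mathcal G$ together with equicontinuity: the family $\{r(\cdot;\gamma,g)^2\}$ is uniformly bounded, and the uniform derivative bound built into $\mathcal G$ and the uniform continuity of $x,z$ on $[0,\Delta]$ furnish a common modulus of continuity, which makes the Riemann-sum error uniformly small. The third term does not involve $(\gamma,g)$: since $\{\varepsilon_i\}$ is a causal AR(1) process driven by i.i.d. innovations (Assumption~\ref{Aerror}), it is stationary and ergodic with $E\varepsilon_1^2=\sigma^2/(1-\phi^2)<\infty$, so the ergodic theorem yields $\frac1n\sum_i\varepsilon_i^2\to\sigma^2/(1-\phi^2)$ a.s., producing the additive constant.

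The crux is the cross term $\frac2n\sum_i r_i(\gamma,g)\,\varepsilon_i$, which must vanish uniformly over $\Gamma\times\mathcal G$. I would split $r_i(\gamma,g)=a_i(\gamma)+c_i(g)$ with $a_i(\gamma)=\log(1+\gamma_0 x_i)-\log(1+\gamma x_i)$ and $c_i(g)=\log g_0(z_i)-\log g(z_i)$. The scalar part is handled by a finite-grid argument on the compact interval $\Gamma$: at each grid point $\frac1n\sum_i a_i(\gamma)\varepsilon_i\to0$ a.s. by bounding a high even moment of the weighted partial sum (this is where the strong assumption $E U_1^8<\infty$ is used, to secure enough summability for Borel--Cantelli along the full sequence and over a refining grid), while $\gamma\mapsto\frac1n\sum_i a_i(\gamma)\varepsilon_i$ is Lipschitz with constant $O\!\left(\frac1n\sum_i|\varepsilon_i|\right)=O(1)$ a.s., so the grid values control the supremum.

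The functional part is where the monotonicity of $\mathcal G$ is essential. Since $z_i$ is increasing and $g$ is non-increasing, $w_i:=\log g(z_i)$ is monotone in $i$; writing $E_k=\sum_{j=1}^k\varepsilon_j$, Abel summation gives
$$
\Big|\tfrac1n\sum_{i=1}^n \log g(z_i)\,\varepsilon_i\Big|\le \frac1n\Big(\max_{1\le k\le n}|E_k|\Big)\Big(|w_n|+\textstyle\sum_{i=1}^{n-1}|w_{i+1}-w_i|\Big),
$$
and by monotonicity the increments telescope, bounding the last factor by the range of $\log g$, which is uniform over $\mathcal G$ by the uniform boundedness of $\{\log g:g\in\mathcal G\}$. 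Since $\frac1n\max_{k\le n}|E_k|\to0$ a.s. as a consequence of the SLLN for $\{\varepsilon_i\}$, the entire infinite-dimensional supremum over $\mathcal G$ collapses to a single a.s.-negligible maximal quantity. I expect this step to be the main obstacle, since it is where the order structure of $\mathcal G$ does the real work of replacing an uncontrolled functional supremum by a one-dimensional partial-sum bound; combining the three limits then gives uniform a.s. convergence of $Q_n$ to $Q+\sigma^2/(1-\phi^2)$.
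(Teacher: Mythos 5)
Your decomposition of $Q_n$ into deterministic Riemann sum, cross term, and noise average is the same as the paper's (the paper merely writes the cross term as two pieces, $v_i$ for the $\gamma$-part and $v_i'$ for the $g$-part), and your handling of the first and third terms matches in substance, except that the paper gets $\frac1n\sum_i\varepsilon_i^2\to\sigma^2/(1-\phi^2)$ from an elementary Borel--Cantelli lemma (Lemma~\ref{lemma0}) rather than the ergodic theorem. Where you genuinely diverge is the cross term. The paper substitutes the causal representation $\varepsilon_i=\sum_{j=0}^{i-1}\phi^jU_{i-j}+\phi^i\varepsilon_1$, interchanges the order of summation to obtain $\frac1n\sum_i U_i\bigl(\sum_{j\ge i}v_j\phi^{j-i}\bigr)$ plus a negligible $\varepsilon_1$ term, and kills it via a Borel--Cantelli argument for weighted sums of i.i.d.\ innovations with bounded weights (Lemma~\ref{lemma2}); you instead use chaining over the compact interval $\Gamma$ for the scalar part and Abel summation for the functional part, exploiting monotonicity of $i\mapsto\log g(z_i)$ to dominate the supremum over all of $\mathcal G$ by $\frac1n\max_{k\le n}|E_k|$ times a class-uniform range bound. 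Your route is heavier, but it buys something real: it bounds $\sup_{\gamma,g}$ of the cross term by a single almost-surely vanishing random quantity, which is what the theorem's claim of \emph{uniform} convergence actually demands. The paper's Lemma~\ref{lemma2} is stated for one fixed bounded sequence $v$, so its exceptional null set can depend on $(\gamma,g)$; the step from that pointwise statement to uniformity is left implicit (it can be repaired by compactness of $\Gamma\times\mathcal G$ plus an a.s.\ Lipschitz bound in the parameters, exactly in the spirit of your grid argument). Two caveats on your side: the uniform lower bound on $\log g$ over $\mathcal G$ that your Abel-summation step needs is not explicit in \eqref{G_par} (positivity alone does not bound $\log g$ below), though the paper leans on the same implicit assumption when it asserts uniform boundedness of $v_i'$ and compactness of $\mathcal G$; and the eighth-moment condition is not really what your grid-point step needs (fourth moments suffice there, as in Lemma~\ref{lemma2} --- the eighth moment is what makes the Borel--Cantelli treatment of $\frac1n\sum_i\varepsilon_i^2$ in Lemma~\ref{lemma0} work).
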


Note that $Q(\gamma,g)\ge0$ for every $\gamma\in\Gamma$ and $g\in\mathcal{G}$ and that $Q(\gamma_0,g_0)=0$.

\begin{theorem}\label{thm2}
If $Q(\gamma,g)$ has a unique minimum at $(\gamma_0,g_0)$, and Assumptions~\ref{Aage}-\ref{Aerror} hold, then $(\hat{\gamma}_n,\hat{g}_n)$ converges almost surely to $(\gamma_0,g_0)$ as $n\to\infty$. 
\end{theorem}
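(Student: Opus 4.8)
The plan is to run the classical Wald--Jennrich consistency argument for extremum estimators, using Theorem~\ref{thm1} as the uniform-convergence engine and combining it with the stated compactness of $\Gamma\times\mathcal{G}$ and the unique-minimum hypothesis for identification. Write $c=\frac{\sigma^2}{1-\phi^2}$ for the additive constant in the limit; since $c$ does not depend on $(\gamma,g)$, the minimizer of the limiting criterion coincides with that of $Q(\gamma,g)$, namely $(\gamma_0,g_0)$, at which $Q$ vanishes. Throughout I take $(\hat\gamma_n,\hat g_n)$ to be a global minimizer of $Q_n$ over $\Gamma\times\mathcal{G}$ (existence being asserted in the excerpt via the Jennrich-type argument), so that in particular $Q_n(\hat\gamma_n,\hat g_n)\le Q_n(\gamma_0,g_0)$.

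First I would record the continuity of $Q$ on the metric space $(\Gamma\times\mathcal{G},d)$. For the temperature term, since $x(\cdot)$ is continuous on the compact interval $[0,\Delta]$ (Assumption~\ref{Atemp}) it attains bounds $\underline{x},\bar{x}$ lying strictly inside $(-x^{(m)},x^{(M)})$; evaluating the bilinear map $(\gamma,x)\mapsto 1+\gamma x$ at the corners of the rectangle $\Gamma\times[\underline{x},\bar{x}]$ shows $1+\gamma x(w)$ is bounded below by a positive constant uniformly in $\gamma\in\Gamma$ and $w\in[0,\Delta]$, so $\gamma\mapsto\log(1+\gamma x(w))$ is uniformly bounded and Lipschitz. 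For the nuisance term, at any fixed $g\in\mathcal{G}$ monotonicity and positivity give $g\ge g(A)>0$, so for $g'$ with $\sup_z|g'(z)-g(z)|<g(A)/2$ the map $g'\mapsto\log g'(z(w))$ stays controlled; a dominated-convergence bound on the integrand then yields joint continuity of $Q$ in the product metric.

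Next comes the identification step. Fix $\epsilon>0$ and let $B_\epsilon$ be the open $d$-ball of radius $\epsilon$ about $(\gamma_0,g_0)$. The complement $(\Gamma\times\mathcal{G})\setminus B_\epsilon$ is compact, so the continuous function $Q$ attains its infimum there at some point which, by the uniqueness hypothesis, differs from $(\gamma_0,g_0)$; hence this infimum $Q^\star$ satisfies $Q^\star>Q(\gamma_0,g_0)=0$. Put $\eta=Q^\star/3$. By Theorem~\ref{thm1}, with probability one there is $N$ such that $\sup_{(\gamma,g)}|Q_n(\gamma,g)-Q(\gamma,g)-c|<\eta$ for all $n\ge N$. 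On that event and for such $n$, every $(\gamma,g)\notin B_\epsilon$ satisfies $Q_n(\gamma,g)\ge Q(\gamma,g)+c-\eta\ge Q^\star+c-\eta=c+2\eta$, whereas $Q_n(\gamma_0,g_0)\le c+\eta$. Since $c+\eta<c+2\eta$, we get $Q_n(\gamma_0,g_0)<\inf_{(\gamma,g)\notin B_\epsilon}Q_n(\gamma,g)$, so the global minimizer $(\hat\gamma_n,\hat g_n)$ must lie in $B_\epsilon$. As $\epsilon$ was arbitrary, $d((\hat\gamma_n,\hat g_n),(\gamma_0,g_0))\to0$ almost surely.

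I expect the main obstacle to be the continuity and attainment of the infimum of $Q$ over the function space $\mathcal{G}$ in the sup-metric, rather than the final comparison, which is routine once the pieces are assembled. The delicate point is the behaviour of $\log g$ where $g$ may be small---near the largest age $A$---which is why I lean on monotonicity to bound $g$ below pointwise; one must check this is compatible with the asserted compactness of $\mathcal{G}$. A second point worth flagging is that the argument presumes $(\hat\gamma_n,\hat g_n)$ is the \emph{global} minimizer of $Q_n$; if one only knows it as the limit of the coordinate-descent iterations \eqref{bar_Qn}--\eqref{tilde_Qn}, an additional argument is needed to identify that limit with the global minimizer. The unique-minimum hypothesis is precisely what supplies the strict separation $Q^\star>0$ and thereby drives identification.
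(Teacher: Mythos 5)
Your proposal is correct, and it rests on the same three pillars as the paper's proof --- the uniform almost-sure convergence of $Q_n$ to $Q+\frac{\sigma^2}{1-\phi^2}$ from Theorem~\ref{thm1}, compactness of $\Gamma\times\mathcal{G}$ with continuity of $Q$, and the basic inequality $Q_n(\hat\gamma_n,\hat g_n)\le Q_n(\gamma_0,g_0)$ --- but the mechanics differ. The paper runs the subsequence variant of Wald's argument: it takes an arbitrary convergent subsequence $(\hat\gamma_{n_k},\hat g_{n_k})\to(\gamma',g')$, uses Theorem~\ref{thm1} and continuity to get $Q_{n_k}(\hat\gamma_{n_k},\hat g_{n_k})\to Q(\gamma',g')+\frac{\sigma^2}{1-\phi^2}$, bounds this by $\lim Q_{n_k}(\gamma_0,g_0)=\frac{\sigma^2}{1-\phi^2}$, concludes $Q(\gamma',g')=0$, and invokes uniqueness; it leaves implicit that compactness supplies convergent subsequences and that agreement of all subsequential limits yields convergence of the full sequence. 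You instead run the quantitative $\epsilon$-ball separation variant: compactness is used to show $Q^\star=\inf_{(\Gamma\times\mathcal{G})\setminus B_\epsilon}Q>0$, and uniform convergence then forces the global minimizer into $B_\epsilon$ eventually, with no subsequence extraction at all. Your route makes explicit exactly where uniqueness enters (the strict separation $Q^\star>0$) and closes the ``subsequential limits $\Rightarrow$ full convergence'' step that the paper glosses over, at the cost of being slightly longer; you also supply the continuity argument for $Q$ (positive lower bound on $1+\gamma x(w)$ over the rectangle, pointwise lower bound $g\ge g(A)>0$ via monotonicity) which the paper merely asserts. Two caveats you flag are shared by both proofs and are genuine: $(\hat\gamma_n,\hat g_n)$ is defined in Section~\ref{s2} as the limit of coordinate-descent iterations, and neither proof shows this limit is the \emph{global} minimizer of $Q_n$ that the argument requires; and both lean on the asserted (not proved) compactness of $\mathcal{G}$ in the sup metric, which is delicate since uniform limits of positive non-increasing functions need not stay positive.
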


In view of the fact that $\log$ is a monotone function, we have the following corollary to Theorem~\ref{thm2}. 
\begin{cor}\label{Cor1}
    If the conditions of Theorem~~\ref{thm2} hold, then $\sup_{z}|\log\hat{g}_n(z)-\log g_0(z)|$ converges to $0$ with probability $1$.
\end{cor}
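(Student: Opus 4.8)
The plan is to transfer the uniform convergence of $\hat{g}_n$ to $g_0$ supplied by Theorem~\ref{thm2} across the logarithm, exploiting the fact that $\log$ is Lipschitz on any interval bounded away from zero. Unwinding the product distance function $d$, the almost sure convergence $d((\hat\gamma_n,\hat g_n),(\gamma_0,g_0))\to0$ from Theorem~\ref{thm2} yields, as its $\mathcal{G}$-component, that $\delta_n:=\sup_{z\in[0,A]}|\hat g_n(z)-g_0(z)|\to0$ with probability $1$. The observation that makes everything work is that $g_0\in\mathcal{G}$ is continuous, positive valued and non-increasing on the compact set $[0,A]$, so it is bounded below by the strictly positive constant $c_0:=g_0(A)=\min_{z\in[0,A]}g_0(z)>0$.

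I would first fix a realization in the probability-one event on which $\delta_n\to0$, and restrict attention to it. On this event there is an index $N$, depending on the realization, with $\delta_n<c_0/2$ for all $n\ge N$. For such $n$ and every $z\in[0,A]$ this forces $\hat g_n(z)\ge g_0(z)-\delta_n\ge c_0/2$, so that both $\hat g_n(z)$ and $g_0(z)$ lie in $[c_0/2,\infty)$. The mean value theorem then gives $|\log u-\log v|\le (2/c_0)|u-v|$ for all $u,v\ge c_0/2$, and hence
\begin{equation*}
\sup_{z\in[0,A]}|\log\hat g_n(z)-\log g_0(z)|\le \frac{2}{c_0}\sup_{z\in[0,A]}|\hat g_n(z)-g_0(z)|=\frac{2}{c_0}\,\delta_n\longrightarrow 0.
\end{equation*}
Since this holds on a set of probability $1$, the claimed almost sure uniform convergence follows.

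The one point that needs care -- and the reason the brief appeal to the monotonicity of $\log$ in the text should be read together with the above -- is that $\log$ is \emph{not} uniformly continuous on all of $(0,\infty)$, so continuity alone cannot propagate uniform convergence through it. The crux of the argument is therefore to secure a uniform positive lower bound: positivity and continuity of $g_0$ on the compact interval $[0,A]$ pin down such a bound for $g_0$ (conveniently attained at the right endpoint by non-increasingness), and the uniform convergence of $\hat g_n$ carries it over to $\hat g_n$ for all large $n$, after which the Lipschitz estimate closes the argument. No further use of Assumptions~\ref{Aage}--\ref{Aerror} is needed beyond what Theorem~\ref{thm2} already encapsulates.
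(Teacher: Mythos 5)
Your proof is correct, and it is genuinely different from what the paper offers, since the paper supplies no written argument for this corollary at all---only the remark that it holds ``in view of the fact that $\log$ is a monotone function.'' Your mechanism is: extract $\delta_n=\sup_{z\in[0,A]}|\hat g_n(z)-g_0(z)|\to 0$ almost surely from the product metric in Theorem~\ref{thm2}; observe that $g_0\in\mathcal{G}$ is positive, continuous and non-increasing on the compact interval $[0,A]$, so $c_0=g_0(A)>0$ is a uniform lower bound; conclude $\hat g_n\ge c_0/2$ for all large $n$ on the almost sure event; and finish with the Lipschitz estimate $|\log u-\log v|\le (2/c_0)|u-v|$ valid on $[c_0/2,\infty)$. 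This is airtight and even quantitative, giving $\sup_z|\log\hat g_n(z)-\log g_0(z)|\le (2/c_0)\,\delta_n$. The route the paper's remark presumably intends is different: $\log\hat g_n$ is again non-increasing and converges pointwise to $\log g_0$ (by continuity of $\log$ at each value $g_0(z)>0$), and a P\'olya/Dini-type theorem upgrades pointwise convergence of monotone functions to a continuous limit on a compact interval to uniform convergence. That route needs only pointwise convergence but leans on the upgrade theorem; yours exploits the uniform convergence Theorem~\ref{thm2} already delivers and needs nothing beyond the mean value theorem. One small caveat on your closing diagnostic: you are right that bare continuity of $\log$ cannot transport uniform convergence (the failure of uniform continuity near $0$ is exactly the danger), but monotonicity is not quite as powerless as you suggest, since the P\'olya argument does make it work; what is truly indispensable in your approach---and what the paper leaves entirely implicit---is the uniform positive lower bound on $g_0$, which positivity and compactness supply, with monotonicity conveniently identifying it as $g_0(A)$.
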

For establishing almost sure convergence of the smoothed estimator, we make the following assumption on the kernel function.
\begin{assumption}\label{Kern}
The kernel $K>0$ is a differentiable function supported on $(-1,1)$.   
\end{assumption}

\begin{theorem}\label{thm3}
Let $g_0$ be a positive valued and continuously differentiable function bounded away from 0. Then under Assumptions~\ref{Aage}-\ref{Kern}, as $h\to 0$ and $n\to\infty$, we have the following.
    \begin{enumerate}[(i)]
        \item $\mathop{\sup}_{z}|\log(\tilde{g}_{n}(z))-\log(g_0(z))| \to 0$ \ almost surely. 
        \item $\tilde{\gamma}_{n} \to \gamma_0$ \ almost surely.
  \end{enumerate}
\end{theorem}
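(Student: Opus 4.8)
The plan is to handle the two parts in order, using Corollary~\ref{Cor1} as the engine for part (i) and then part (i) as the engine for part (ii). The whole argument is structural: once the isotonic estimator $\log\hat g_n$ is known to be uniformly consistent, smoothing it cannot destroy consistency, and re-estimating $\gamma$ against the smoothed nuisance function leaves the $\gamma$-estimator consistent as well.

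For part (i), I would first rewrite the smoothed estimator \eqref{tilde_g_est} as a weighted average
\[
\log\tilde g_n(z)=\sum_{i=1}^n w_i(z)\,\log\hat g_n(z_i),\qquad w_i(z)=\frac{K((z-z_i)/h)}{\sum_{j=1}^n K((z-z_j)/h)},
\]
where, since $K>0$ on $(-1,1)$ by Assumption~\ref{Kern}, the weights are nonnegative, sum to one, and vanish unless $|z-z_i|<h$. Using $\sum_i w_i(z)=1$ I would split
\[
\log\tilde g_n(z)-\log g_0(z)=\sum_{i=1}^n w_i(z)\bigl(\log\hat g_n(z_i)-\log g_0(z_i)\bigr)+\sum_{i=1}^n w_i(z)\bigl(\log g_0(z_i)-\log g_0(z)\bigr).
\]
The first (estimation) sum is bounded in absolute value by $\sup_z|\log\hat g_n(z)-\log g_0(z)|$ for every $z$, and this tends to $0$ almost surely by Corollary~\ref{Cor1}. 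For the second (bias) sum, the hypothesis that $g_0$ is continuously differentiable and bounded away from $0$ on the compact range of ages makes $\log g_0$ Lipschitz, say with constant $L$; since $w_i(z)\neq0$ forces $|z_i-z|<h$, this sum is bounded by $Lh$, which tends to $0$ as $h\to0$. Both bounds are uniform in $z$, so $\sup_z|\log\tilde g_n(z)-\log g_0(z)|\to0$ almost surely, which is (i). Notice that neither bound requires the kernel denominator to be large—only that $w_i(z)\geq0$ and $\sum_i w_i(z)=1$, i.e.\ that the denominator be \emph{nonzero}.

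For part (ii), substituting the model \eqref{model} into \eqref{tilde_gamma_est} I would write the residual at $\gamma$ as $a_i(\gamma)+r_i+\varepsilon_i$, where $a_i(\gamma)=\log(1+\gamma_0 x_i)-\log(1+\gamma x_i)$, $r_i=\log g_0(z_i)-\log\tilde g_n(z_i)$ is the smoothing error, and $\varepsilon_i$ is the AR(1) noise. Expanding the objective $\tilde Q_n(\gamma)=\frac1n\sum_i\bigl(a_i(\gamma)+r_i+\varepsilon_i\bigr)^2$ into its three squared and three cross terms, the leading term $\frac1n\sum_i a_i(\gamma)^2$ is, by Assumption~\ref{Atemp}, a Riemann sum converging uniformly in $\gamma\in\Gamma$ to $Q(\gamma,g_0)=\frac1\Delta\int_0^\Delta[\log(1+\gamma_0 x(w))-\log(1+\gamma x(w))]^2\,dw$ (the same Riemann-sum argument used in Theorem~\ref{thm1}). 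The term $\frac1n\sum_i\varepsilon_i^2$ converges almost surely to $\sigma^2/(1-\phi^2)$ and the cross term $\frac2n\sum_i a_i(\gamma)\varepsilon_i$ converges to $0$ almost surely; both are exactly the ingredients for dependent errors already established in the proof of Theorem~\ref{thm1}, so I would reuse that machinery. Every term containing $r_i$ is controlled by $\sup_i|r_i|\le\sup_z|\log\tilde g_n(z)-\log g_0(z)|$, which vanishes almost surely by part (i), together with the uniform boundedness of $a_i(\gamma)$ and the almost sure boundedness of $\frac1n\sum_i|\varepsilon_i|$.

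Assembling these gives $\sup_{\gamma\in\Gamma}\bigl|\tilde Q_n(\gamma)-Q(\gamma,g_0)-\sigma^2/(1-\phi^2)\bigr|\to0$ almost surely. Since $Q(\gamma_0,g_0)=0$ and $Q(\cdot,g_0)$ is uniquely minimized at $\gamma_0$ whenever the temperature profile $x(\cdot)$ is non-degenerate (the same identifiability that underlies the uniqueness hypothesis of Theorem~\ref{thm2}, now restricted to the slice $g=g_0$), the limit is uniquely minimized over the compact set $\Gamma$ at $\gamma_0$; the standard argmin-consistency argument—uniform convergence of the criterion plus a well-separated unique minimizer—then yields $\tilde\gamma_n\to\gamma_0$ almost surely. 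The main obstacle I anticipate lies not in this outer structure but in legitimizing the weight representation of part (i): I must verify that the kernel denominator in \eqref{tilde_g_est} is nonzero uniformly in $z$ for all large $n$, equivalently that every evaluation point has at least one design point within bandwidth $h$. This ties the bandwidth to the age spacing induced by Assumption~\ref{Aage} (through $nh\to\infty$, so that $h$ eventually exceeds the maximal gap between consecutive ages) and requires separate care near the endpoints of the observed age interval, where the kernel window protrudes beyond the data.
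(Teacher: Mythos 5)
Your proposal follows essentially the same route as the paper: part (i) is the paper's own split of the kernel average into an estimation term controlled by Corollary~\ref{Cor1} and a bias term of order $h$ (the paper uses a first-order Taylor expansion of $\log g_0$ with the bound $|g_0'/g_0|\le \mathcal{C}_{g_0}$, which is exactly your Lipschitz constant), and part (ii) is the paper's six-term expansion of $\tilde Q_n(\gamma)$ followed by the argmin-consistency argument borrowed from Theorem~\ref{thm2}. Your two closing caveats---that the kernel denominator must be nonzero, which ties $h$ to the design spacing, and that part (ii) needs non-degeneracy of $x(\cdot)$ for $\gamma_0$ to be the unique minimizer---are valid points of rigor that the paper leaves implicit rather than differences of method.
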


For the purpose of resampling, described in the next section, we have to estimate the error parameters mentioned in Assumption~\ref{Aerror}. The estimation of parameters of stochastic diffusion processes based on increasingly faster sampling of the processes has been well studied \citep{ait2002,lu2019}. Here, we estimate the parameters of the error process based on residuals of the model, which can be viewed as samples of the OU process with additive noise that diminishes as the sample size increases. By using the set of residuals 
\begin{eqnarray}\label{tilde_eps_esst}
\tilde{\varepsilon}_i&=&y_i-\log (1+\tilde{\gamma}_n x_i)-\log \tilde{g}_n(z_i), \mbox{ for } i=1,2,\ldots,n, 
\end{eqnarray}
we use the method of moment  estimators of $\beta$ and $\sigma^2$ as
\begin{eqnarray}
\tilde{\beta}_n&=&-\frac1{\rho_n}\log\left(\frac{\frac1n\sum_{i=1}^{n-1}\tilde{\varepsilon}_i\tilde{\varepsilon}_{i+1}}{\frac1n\sum_{i=1}^n\tilde{\varepsilon}_i^2}\right),\label{beta_tilde}\\
\tilde{\sigma}_n^2&=&2\tilde{\beta}_n\frac1n \sum_{i=1}^n \tilde{\varepsilon}^2_i.\label{sig_tilde}
\end{eqnarray}
The next theorem establishes the  consistency of the estimators $\tilde{\beta}_{n}$ and $\tilde{\sigma}_{n}^2$ over the parameter space $(0,\infty)$.
\begin{theorem}\label{thm4}
If the assumptions of Theorem~\ref{thm3} hold, and additionally $n\rho_n^2\to \infty$ as $n\to\infty$, then we have the following.
\begin{enumerate}[(i)]
    \item $\tilde{\beta}_{n} \to \beta$    in probability.
    \item $\tilde{\sigma}_{n}^2 \to \sigma^2$ in probability.
\end{enumerate}

\end{theorem}

\section{Bootstrap standard error and confidence limits}\label{s4}
Nonparametric and model based bootstrap methods provide a computational alternative to the calculation of sampling distribution of estimators, for obtaining standard errors \citep{Efron1993} and confidence intervals \citep{Efron1996}. While the bootstrap was originally proposed for independent and identically distributed data, model based methods for dependent data, pertinent for the present work, have been studied in subsequent literature (see \cite{Lahiri2003}). An obstacle to model based bootstrap in the present case is the PAVA estimator for isotonic regression, which is known to lead to overfitting  \citep{Nicules2005,Luss2012}, and consequently to systematically underestimated magnitudes of residuals. 
For the analogous problem of monotonic density estimation, \cite{Bodhi2010} showed that the bootstrap confidence interval of the Grenander estimator is inconsistent, and provided a consistent procedure obtained by smoothing the Grenander estimator, which averts the pitfalls of resampling from excessively small residuals. It is for this reason that the smoothed nonparametric estimator $\log\tilde{g}_n$ was chosen for specification of variance by bootstrap.

The following are the suggested steps for generating model-based bootstrap replicates of $\tilde{\gamma}_n$ and $\log\tilde{g}_n$.
\begin{enumerate}[(i)]
    \item Estimate the error innovation sequence as $$\tilde{\zeta}_i=\tilde{\varepsilon}_i-e^{-\tilde{\beta}_n\rho_n} \tilde{\varepsilon}_{i-1}, \mbox{ for } i=2,3,\ldots,n,$$ where $\tilde{\varepsilon}_i$ and $\tilde{\beta}_n$ are as defined in \eqref{tilde_eps_esst} and \eqref{beta_tilde}, respectively.
    \item Draw an innovation sample of size $n+1000$ by using simple random sampling with replacement from $\{\tilde{\zeta}_i, \ i =1,3,\ldots,n\}$ (see Chapter~2 of \cite{Lahiri2003}),
    and denote it by $\{\zeta_{-999}^*, \zeta_{-998}^*,\ldots, \zeta_n^*\}$.
    \item Compute a bootstrap replicate of AAR from model \eqref{model} as $$y_i^*=\log(1+\tilde{\gamma}_n x_i) +\log\tilde{g}_n(z_i)+\tilde{\varepsilon}_i^*,\qquad i=2,\ldots,n,$$ 
    where $\log\tilde{g}_n$ and $\tilde{\gamma}_n$ are as defined in \eqref{tilde_g_est} and \eqref{tilde_gamma_est}, respectively, and $\tilde{\varepsilon}_i^*$ is computed from the recursion $\tilde{\varepsilon}_i^* =e^{-\tilde{\beta}_n\rho_n}\tilde{\varepsilon}_{i-1}^*+\zeta_i^*$, after starting with an initial value of zero and discarding the first 1000 values of the recursion.
    \item Estimate $\log\tilde{g}_n^*$ and $\tilde{\gamma}_n^*$ by substituting $y_i^*$ in place of $y_i$ in \eqref{tilde_g_est} and \eqref{tilde_gamma_est} respectively. 
\end{enumerate}

The above steps are repeated to produce the bootstrap replicates $\log\tilde{g}_n^*$ and $\tilde{\gamma}_n^*$ independently a large number of times. The bootstrap estimates of the standard errors of $\tilde{\gamma}_n$, and $\log\tilde{g}_n(z)$ are the sample standard deviations of the corresponding bootstrap replicates. For any given value of $\alpha$, the middle $(1-\alpha)\%$ of the relevant bootstrap replicates would constitute bootstrap confidence intervals of $\tilde{\gamma}_n$ and  $\log\tilde{g}_n(z)$ for any $z$.  

\section{Simulation Study}\label{s5}
In order to understand the finite sample performance of the methods presented in Sections~\ref{s3} and~\ref{s4}, we run some simulations under conditions chosen to mimic the circumstances of the ice core data. The sequence of depths is presumed to be a uniform grid over the interval $0$ meters to $\Delta$ meters, for sample sizes $n=750, \ 1500, \ 3000$ and $6000$. The choices $n=3000$ and $n=6000$ are similar to the sample sizes of the ice core data from Lake Vostok and EPICA Dome C, respectively, and the other two choices extend the explored range of sample sizes. The grid spacing is $\frac{\Delta}{n}$. We choose the age $z_i$ (in KYrBP) and temperature $x_i$ (in degree Celsius) at depth $d_i=\frac{i\Delta}{n}$ by linear interpolation of the EPICA Dome C series over the range of depths 40 to 3000 meters. We choose the effect of temperature as $\gamma_0=0.06$ and the nominal accumulation rate at temperature $x=0$ as the exponentially decaying function $g_0(z)=25 \ e^{-z/150}+1$. We choose the parameters of the underlying error process $\varepsilon$ as $\beta=1.5$ and $\sigma=0.15$. The sampling interval $\rho_n$ of the error process is chosen as $n^{-0.35}$, which is similar to the data driven heuristic choices for the two data sets, mentioned in Section~\ref{s6}. For the kernel smoothed estimator $\log\tilde{g}_n$, we choose the Epanechnikov kernel
\begin{eqnarray*}
    K(w)&=& \begin{cases}
     \frac34 \left(1-w^2\right)  & \mbox{ if } |w|\le 1 \\
     0&\mbox{ Otherwise.}
    \end{cases}
\end{eqnarray*}

All the empirical measures of performance are computed on the basis of 1000 simulation runs. The bootstrap estimates reported in Sections~\ref{s5.1} and~\ref{s5.3} are based on 1000 sets of re-samples for each simulation run.

\subsection{Choice of bandwidth}\label{s5.1}
There are generic considerations for choosing the bandwidth $h$ of the kernel function, which is the smoothing parameter~\citep{Hardle1990}. However, generic considerations do not always work well in some specific situations~\citep{GhoshChau2004}. The focus of this article is on modeling and estimation of the effect of temperature on AAR for the ice core data, along with a reliable measure of its uncertainty. Therefore, we studied the coverage proportion of bootstrap confidence intervals of $\gamma$ and $\log g$ at various choices of bandwidth $h$, and sought a bandwidth that can produce reasonable coverage under simulated conditions designed to match the circumstances of the data sets.

For this purpose, we simulated data sets as described above, fitted the model~\eqref{model} and drew 1000 bootstrap samples from the residuals in the manner described in Section~\ref{s4}. As mentioned in that section, the range of the middle 95\%\ of the bootstrap replicates of $\tilde\gamma_n$ is the bootstrap confidence interval of $\gamma_0$ with nominal coverage probability $1-\alpha=0.95$. We noted for each simulation run the actual inclusion status of the correct value of the parameter ($\gamma_0=0.06$), and repeated the process 1000 times. We observe the inclusion status of $\log g_0(z_i)$ in the middle 95\% of the bootstrap replicates of $\log \tilde{g}_n(z_i)$ in the same manner, for $i=2,\ldots,n$. In initial simulations, a pattern of under-coverage was observed, which was traced to underestimation of the error variance. (In other words, the corrective measure of \cite{Bodhi2010} mentioned in Section~\ref{s4} does not completely address the well-known issue of bias in this case.) In order to improve coverage, the following heuristic step was inserted after step (ii) of the bootstrap procedure described in Section~\ref{s4}: If the sample variance of $(\zeta_i^*,  \ i = 2,\ldots,n)$ is smaller than that of $\{\tilde{\zeta}_i, \ i =2,3,\ldots,n\}$, repeat step (ii); else, proceed to the next step. The left panel of Figure~\ref{h_search} shows the proportion of times, out of the 1000 runs, that the bootstrap confidence interval included the correct value of the parameter $\gamma_0$, for a range of values of bandwidths $h$ and sample sizes $n=750, \ 1500,\ 3000$ and $6000$. The right panel shows the 5\%\ trimmed mean of the coverage proportions of $\log g_0(z_2)$, $\log g_0(z_3)$, $\ldots,$ $\log g_0(z_n)$ for the same values of $h$ and $n$. The values of $h$ were varied over a grid of size 2 over the interval $[10,30]$.  
\begin{figure}
\centering
\begin{tabular}{cc}
\includegraphics[width=2.5in,height=1.8in]{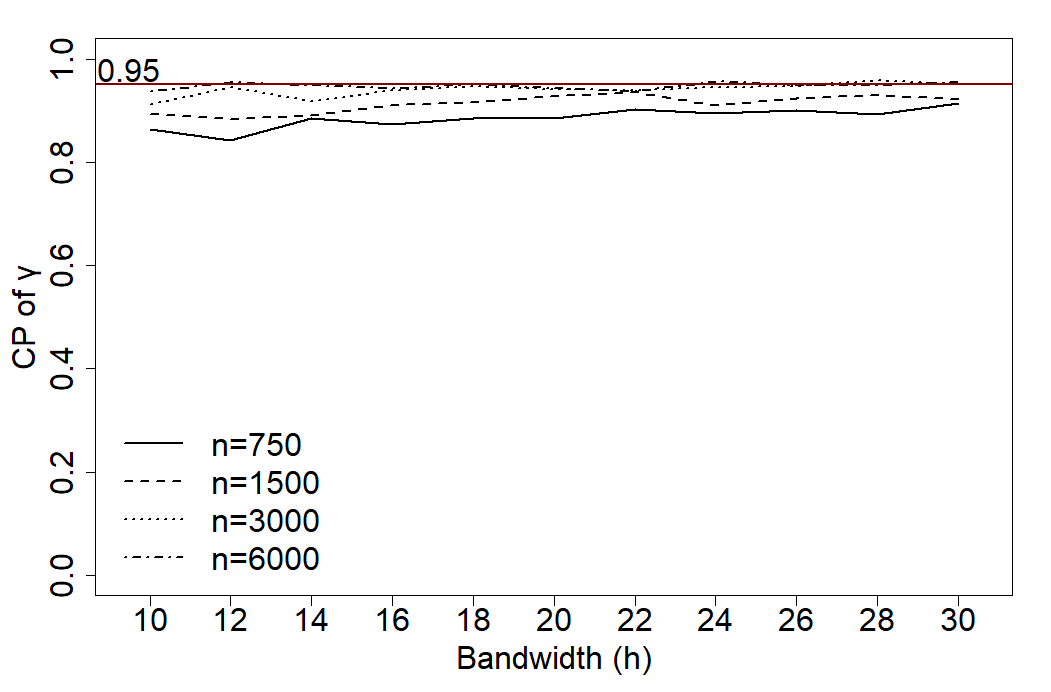}&
\includegraphics[width=2.5in,height=1.8in]{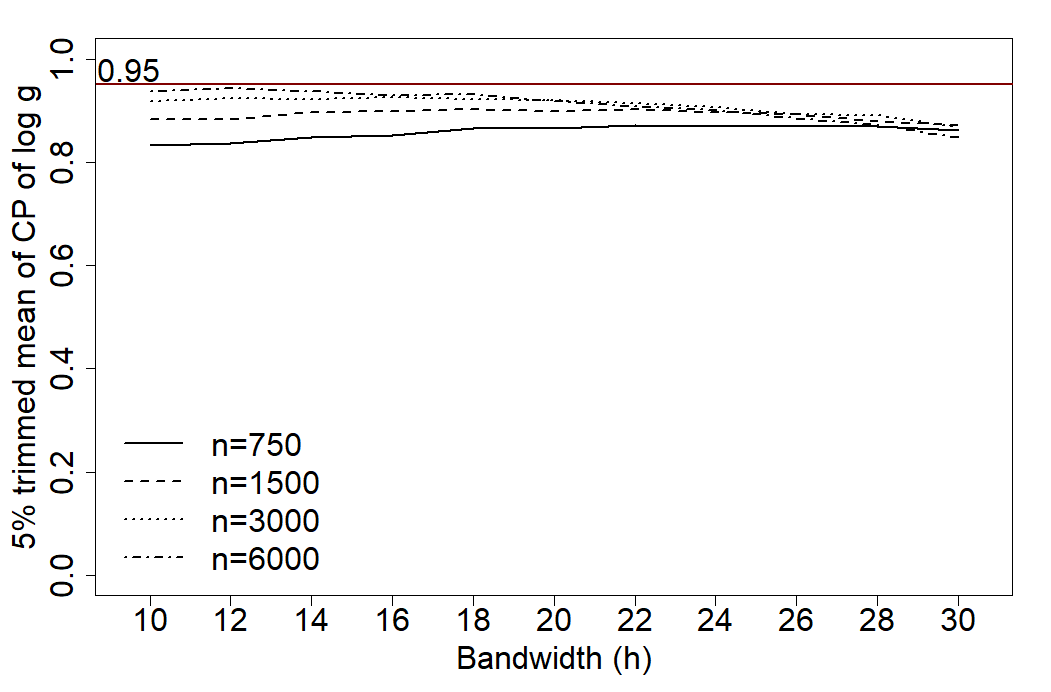}
\end{tabular}
\caption{Plots of coverage proportion (CP) for $\gamma_0$ (left panel) and 5\% trimmed mean of CP of $\log g_0$ (right panel) for different values of the bandwidth $h$. }
\label{h_search}
\end{figure}

It is seen from Figure~\ref{h_search} that the coverage proportions of $\gamma_0$ and $\log g_0$ are lower than the nominal coverage probability for smaller sample sizes and, in the case of $\log g_0$, the coverage proportion reduces for larger $h$. The graphs of the coverage proportions for $\gamma_0$ are somewhat flat. The peak of the graph of the 95\%\ trimmed mean of coverage proportions for $\log g_0$ occurs at different values of $h$ (ranging from 14 to 22) for different sample sizes. In the next two subsections we report simulation results for $h=14$ for all the four choices of $n$.

\subsection{Finite sample performance of estimators across runs}\label{s5.2}

Table~\ref{tab2} shows the empirical bias and standard deviation (sd) of the estimators $\tilde{\gamma}_n$, $\tilde{\beta}_n$, and $\tilde{\sigma}_n$. Evidently, all the biases and standard deviations reduce with increase in the sample size, as one expects for consistent estimators. The bias generally tends to be smaller in magnitude than the standard deviation, except in the case of $\tilde{\beta}_n$. The issue of small sample bias of $\tilde{\beta}_n$ has been noted in the literature, and a bootstrap based remedial measure has also been proposed \citep{Tang2009}. However, in trial simulation runs, the solution (previously reported for much smaller values of $\beta_0$ than the value used here) did not produce useful reduction of bias, or noticeable impact on the coverage proportions of the confidence intervals of the parameters of interest. 

\begin{table}[!h]
  \caption{Empirical bias and standard deviation (sd) of $\tilde{\gamma}_n$, $\tilde{\beta}_n$ and $\hat{\sigma}_n$ based on $1000$ simulation runs}
  \vskip0.1in
  \centering
  \begin{tabular}{|l|l|rcr|}
    \hline
     $n$ & Attribute & \multicolumn{1}{c}{$\tilde{\gamma}_n$} &
     \multicolumn{1}{c}{$\tilde{\beta}_n$} &
     \multicolumn{1}{c|}{$\tilde{\sigma}_n$}\\
     && \multicolumn{1}{c}{(${\gamma}_0=0.06$)} &
     \multicolumn{1}{c}{(${\beta}_0=1.5$)} &
     \multicolumn{1}{c|}{(${\sigma}_0=0.15$)}\\
    \hline
    750 & bias & $-0.310\times10^{-3}$ & 1.044 & $9.954\times10^{-2}$\\
     & sd &$1.556\times10^{-3}$ & 0.358 & $0.438\times10^{-2}$\\
     \hline
     1500 & bias & $-0.280\times10^{-3}$ & 0.646 & $0.193\times10^{-2}$\\
     & sd & $1.469\times10^{-3}$ & 0.240 & $0.302\times10^{-2}$\\
     \hline
     3000 & bias & $-0.320\times10^{-3}$ & 0.406 & $0.125\times10^{-2}$\\
     & sd & $1.336\times10^{-3}$ & 0.164 & $0.199\times10^{-2}$\\
     \hline 
     6000 & bias & $-0.220\times10^{-3}$ & 0.252 & $0.060\times10^{-2}$\\
     & sd & $1.155\times10^{-3}$ & 0.119 & $0.141\times10^{-2}$\\
     \hline
  \end{tabular}\label{tab2}
\end{table}

The left and right panels of Figure~\ref{f2} show the plots of the estimates of pointwise bias and standard deviation, respectively, of the estimator $\log(\tilde{g}_n(z))$. Apart from the excessive errors at the boundaries, which is a well-known feature of nonparametric estimators of functions including isotonic regression estimators~\citep{dai2020bias}, the notable features of these plots are (a)~smaller magnitude of bias than standard error, (b)~reduction of both of these measures with increase in the sample size and (c)~increase of the standard error with age ($z$). 
\begin{figure}[!h]
\centering
\begin{tabular}{cc}
 bias: ($ \log \tilde g_n$)&
 sd: ($\log \tilde g_n$) \\
\includegraphics[width=3in,height=2in]{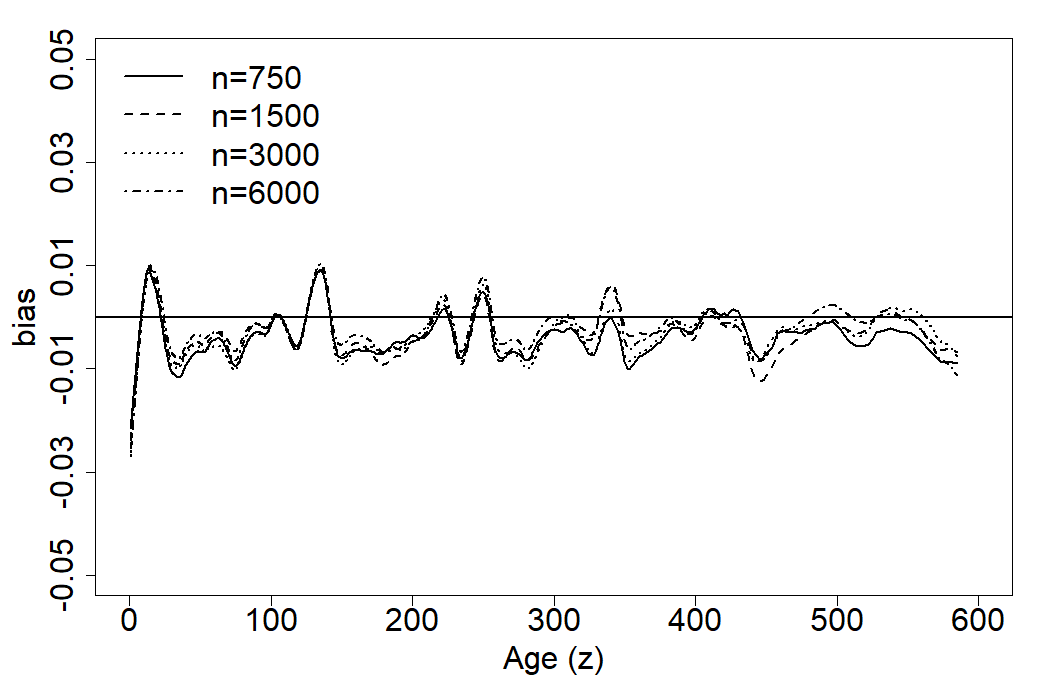}&
\includegraphics[width=3in,height=2in]{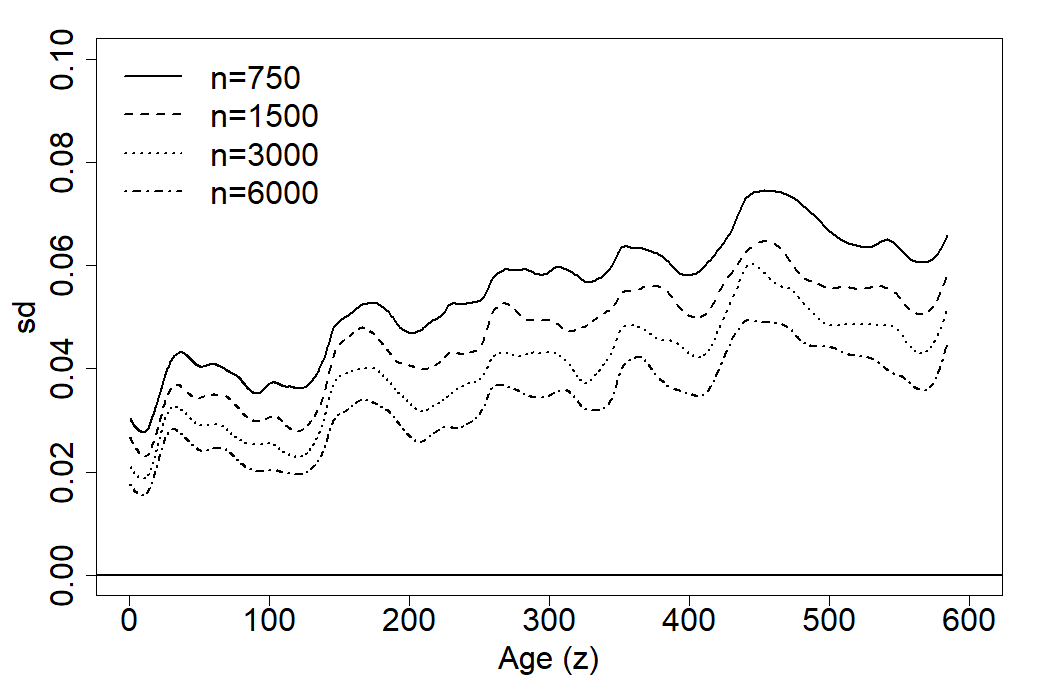} 
\end{tabular}
\caption{Plots of estimates of pointwise bias (left column) and sd (right column) of $\log(\tilde{g}_n(z))$.}
\label{f2}
\end{figure}

\subsection{Finite sample performance of bootstrap method }\label{s5.3}
Figure~\ref{f3} shows the histograms of bootstrap estimates of the standard error of $\tilde\gamma_n$ for sample sizes $n=750$ (top left) and $n=1500$ (top right), $n=3000$ (bottom left) and $n=6000$ (bottom right), respectively. The estimate of standard error computed directly from the multiple simulation runs (without bootstrap) is shown as a maroon vertical line on each plot. The histograms reflect a general pattern of underestimation and a greater concentration of the bootstrap estimates around the maroon line for larger sample sizes. The coverage proportions of the bootstrap confidence interval (intended for 95\% coverage), reported on the left side of each of the histograms, is close to the nominal coverage probability for larger sample sizes. For further insight on the coverage probability, the reader is referred to the supplementary material, where the behavior of $0.025$ and $0.975$ bootstrap quantiles of $\tilde{\gamma}$ is discussed.

\begin{figure}[!h]
\centering
\begin{tabular}{cc}
 $n=750$ & $n=1500$\\
\includegraphics[width=2.5in,height=1.8in]{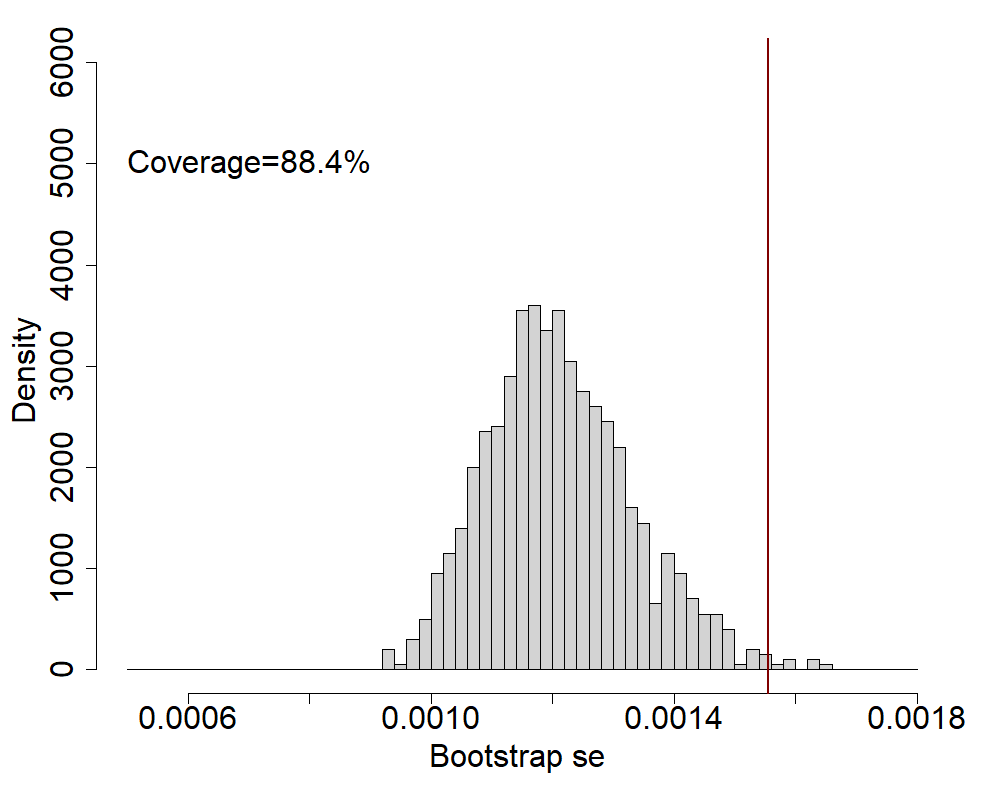}&
\includegraphics[width=2.5in,height=1.8in]{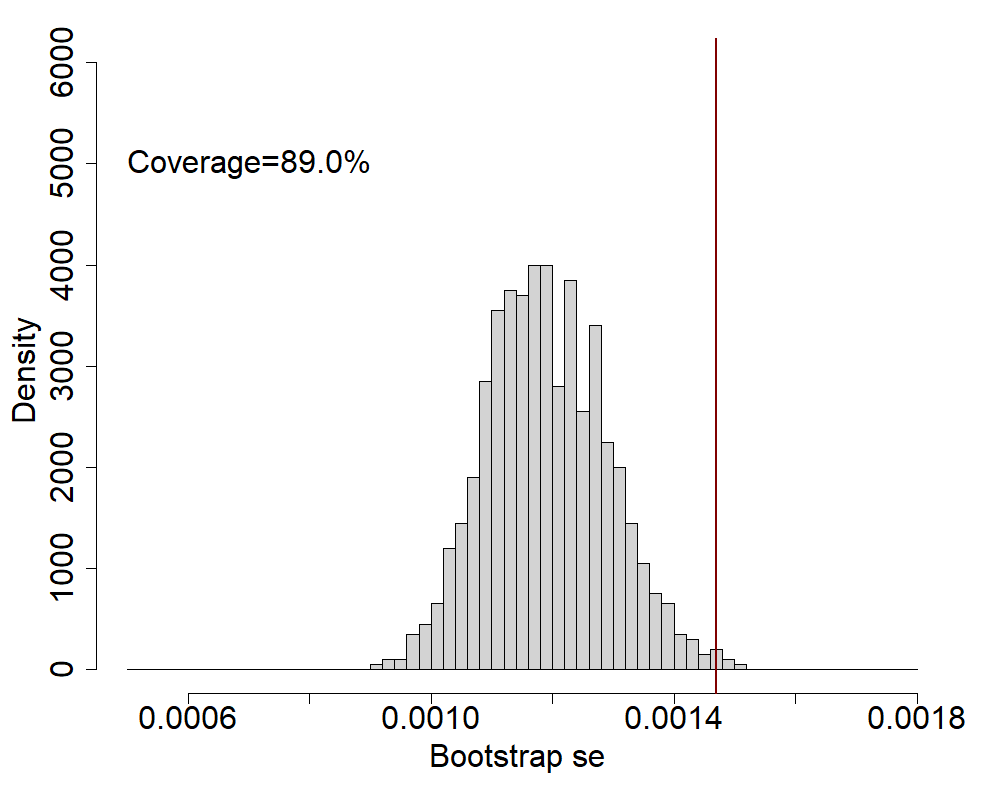}\\
 $n=3000$ & $n=6000$\\
 \includegraphics[width=2.5in,height=1.8in]{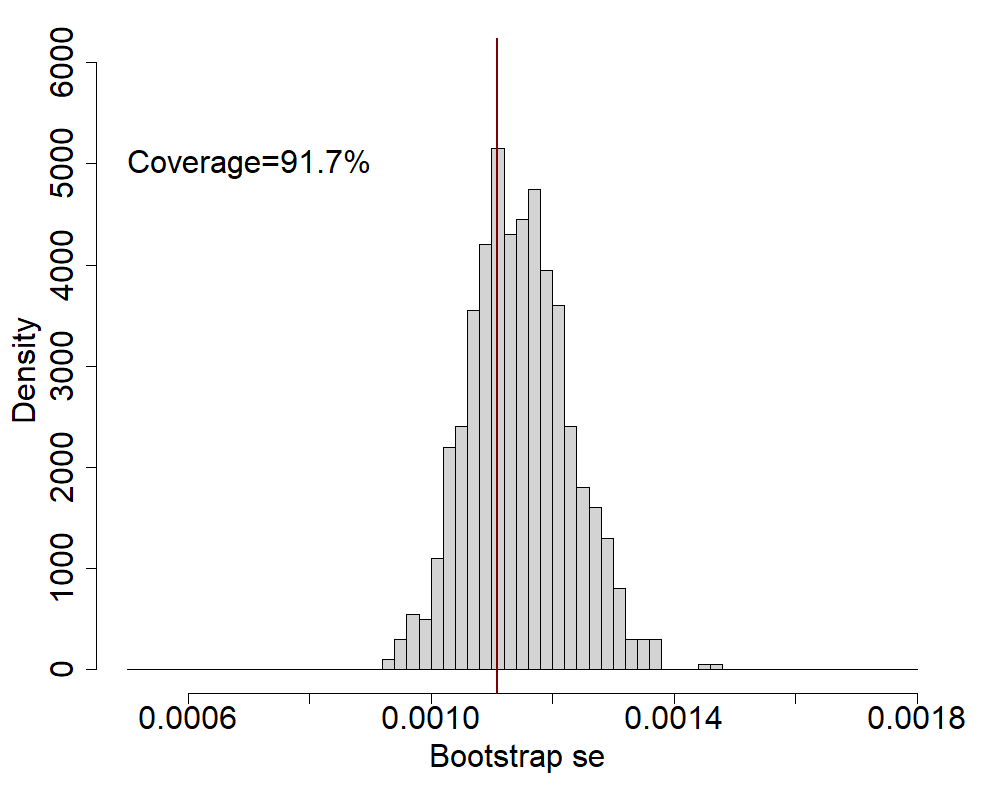}&
\includegraphics[width=2.5in,height=1.8in]{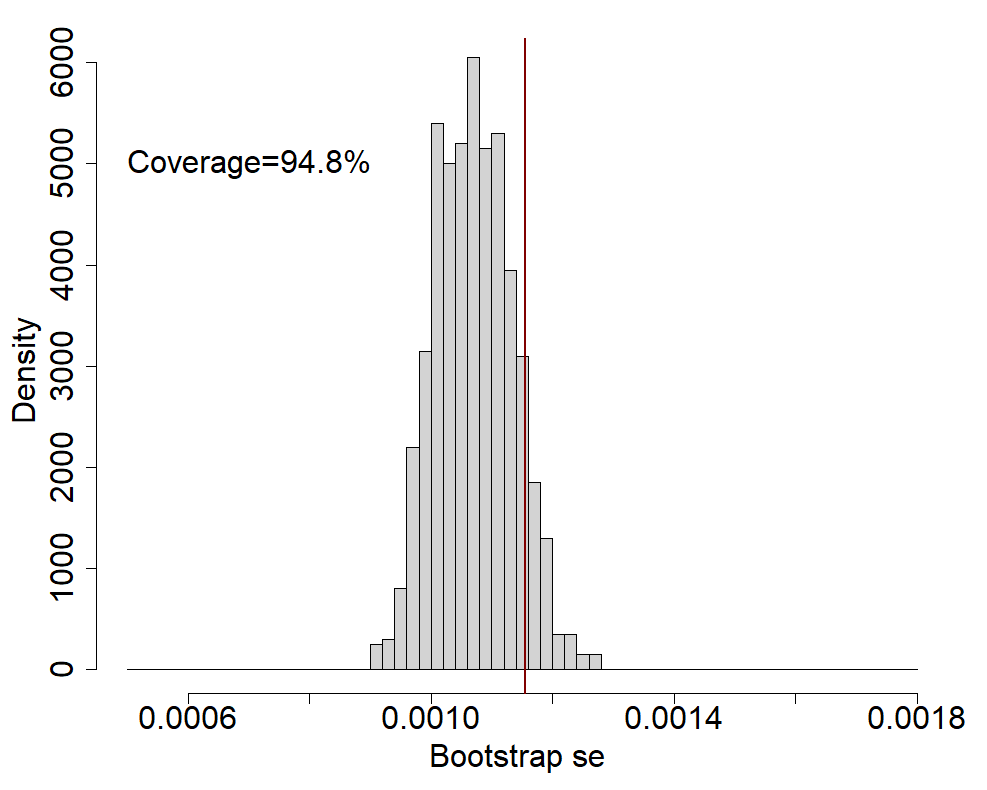}
\end{tabular}
\caption{Histograms of bootstrap estimates of the standard error of $\hat\gamma_n$ and coverage proportion of nominally 95\% coverage confidence intervals}
\label{f3}
\end{figure}

Figure~\ref{covergae_log_g} shows the pointwise coverage proportion of the bootstrap confidence intervals (with nominal coverage probability 0.95) of $\log{g}$. As the sample size increases, the pointwise coverage proportions approach the nominal level. The coverage proportions are lower than the nominal level around the boundaries even for larger sample sizes. This may have been due to excessive bias in the estimates in these areas, as observed in the section~\ref{s5.2}. The behavior of average pointwise confidence limits for $\log{g}$ and its standard errors are reported in the supplementary material.

\begin{figure}[!h]
\centering
\includegraphics[width=3in,height=2in]{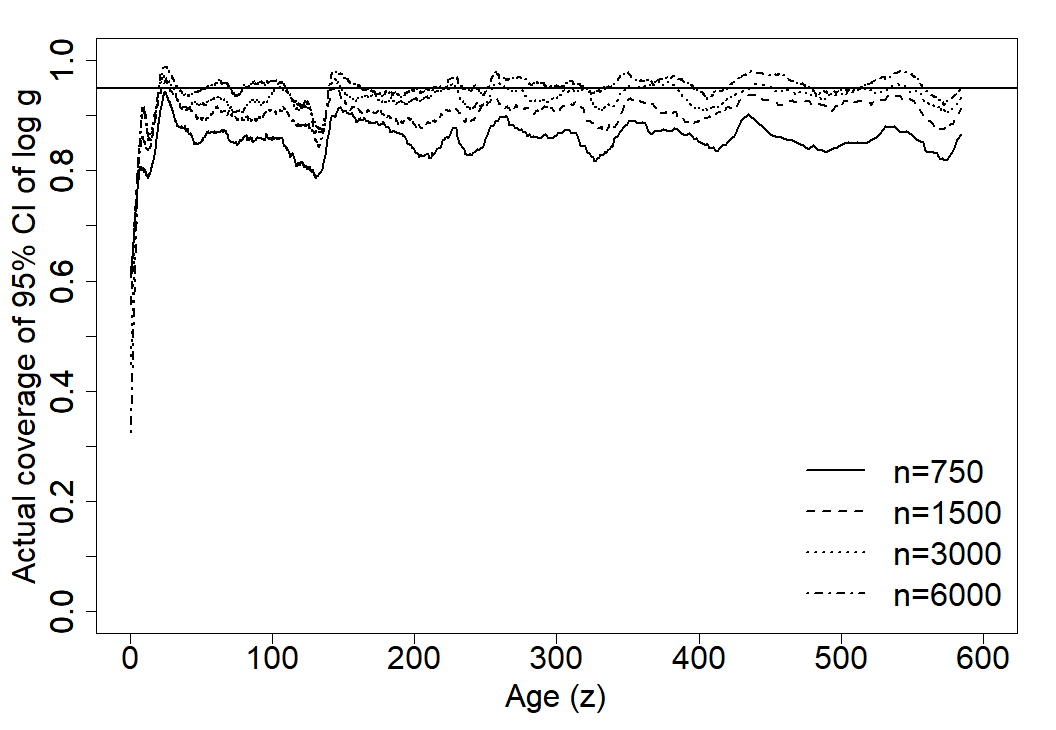}
\caption{Plots of coverage proportions of the pointwise bootstrap confidence intervals of $\log g$.}
\label{covergae_log_g}
\end{figure}

\newpage
\section{Analysis of Lake Vostok and EPICA Dome C Data}\label{s6}
For the analyses of the ice core data sets, we discard the data from depths less than 40 meters from the surface, in order to avoid excessive variation in ice density \citep{greve2009}. We choose the Epanechnikov kernel with bandwidth $h=14$, as in Subsection~\ref{s5.1}. 
The standard error of the estimators and the confidence limits are computed from $1000$ bootstrap samples. 

In order to select a working value of the sampling interval $\rho_n$ of the presumed error process, we use a number of sub-samples of the data at hand and compute $\tilde{\beta}_n\rho_n$ from the right hand side of \eqref{beta_tilde} for each subsample. It was shown in Section~\ref{s3} that $\tilde{\beta}_n$ is a consistent estimator of $\beta$. If $\rho_n$ is proportional to $n^{-\kappa}$ for some positive constant $\kappa$, then  $\log(\tilde{\beta}_n\rho_n)$ should have a linear relationship with $\log n$ with slope $-\kappa$. A scatter plot of $\log(\tilde{\beta}_n\rho_n)$ vs. $\log n$ for the different subsamples should have points around a straight line with slope $-\kappa$. We choose the subsamples (of approximate size $n/k$) by selecting every $k$th sample of the original series with $k$ different starting points (1, 2, \ldots, $k$), for $k=1,2,\ldots,8$. Figure~\ref{Fig_rho} shows the scatter plots for the data sets of Lake Vostok and EPICA Dome C, along with the respective least squares fitted lines. The scatters in both cases do show a downward pattern. The least squares estimates of $\kappa$ turn out to be 0.45 and 0.26 for Lake Vostok and EPICA Dome C, respectively. We use these values for the subsequent data analysis.

\begin{figure}
\centering
\begin{tabular}{cc}
 {\underline{Lake Vostok}} &{\underline{EPICA Dome C}}\\
\includegraphics[width=2.5in,height=1.8in]{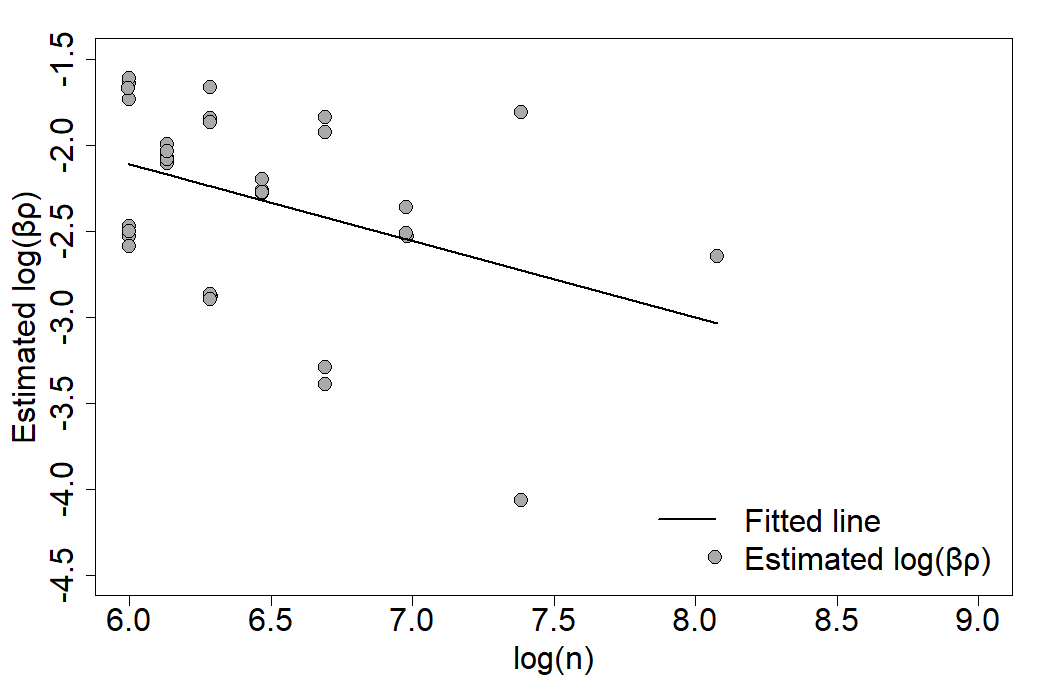}&
\includegraphics[width=2.5in,height=1.8in]{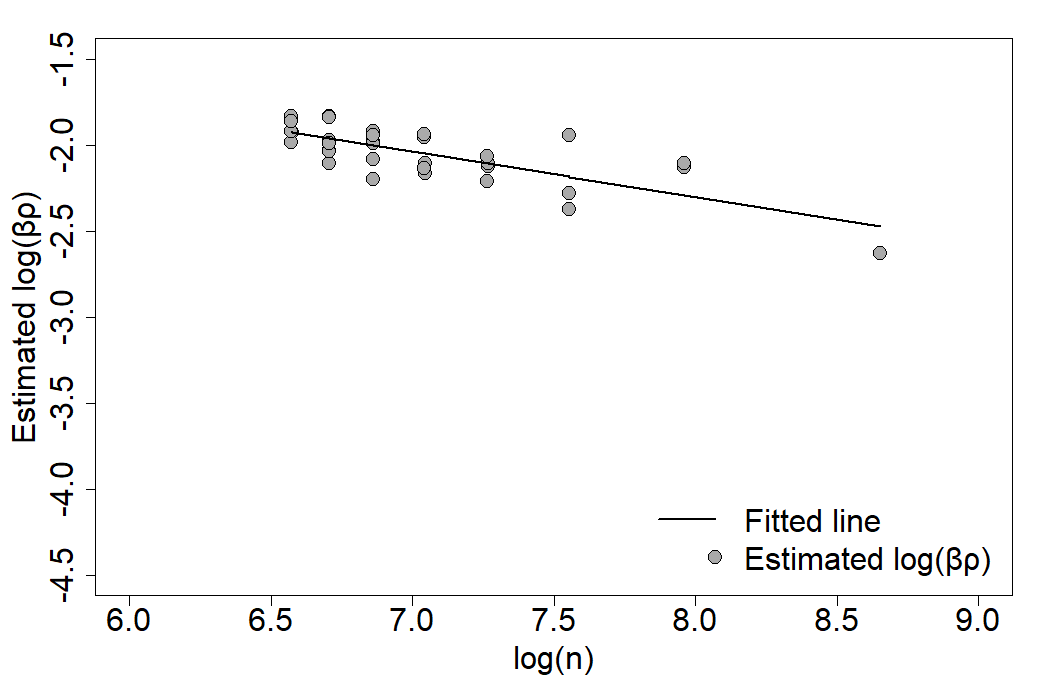}
\end{tabular}
\caption{The scatter plot of $\log(\tilde{\beta} \rho_n)$ against $n$ along with the fitted line, for subsamples from the data of Lake Vostok (left panel) and EPICA Dome C (right panel).}
\label{Fig_rho}
\end{figure}

Table~\ref{Tab4} shows the estimated values of the parameters $\gamma$, $\beta$ and $\sigma$ along with their respective bootstrap standard errors (s.e.) and the 95\%\ coverage bootstrap confidence intervals (c.i.), for both locations. The table indicates that the estimate of $\gamma$ for the two locations are similar. The confidence intervals are also of comparable width. 

\begin{table}[h]
  \caption{Estimates of the $\gamma$, $\beta$ and $\sigma$ with bootstrap s.e. and confidence interval.}
  \centering
  \begin{tabular}{|lll|}
    \hline
     & Lake Vostok & EPICA Dome C\\
     &($\rho_n=n^{-0.45}$) &($\rho_n=n^{-0.26}$)\\
     \hline
    $\tilde\gamma_n$ & 0.05747 & 0.06162\\
    Bootstrap s.e. ($\tilde\gamma_n$) & 0.00159& 0.00141\\
    95\% Bootstrap c.i. of $\gamma$ & (0.05431, 0.06042)& (0.05777, 0.06337)\\
        \hline
    $\tilde\beta_n$ & 2.68898  & 0.68419\\
    Bootstrap s.e. ($\tilde\beta_n$) & 0.30278 & 0.05545\\
    95\% Bootstrap c.i. of $\beta$ & (2.63187, 3.80233)& (0.68331, 0.90202)\\
     
        \hline
        $\tilde\sigma_n$ & 0.16847 & 0.12650\\
    Bootstrap s.e. ($\tilde\sigma_n$) & 0.00180 & 0.00109\\
    95\% Bootstrap c.i. of $\sigma_n$ & (0.16434,  0.17111)&(0.12629, 0.13063)\\
     
        \hline
  \end{tabular}\label{Tab4}
\end{table}

The plots of the estimated function $\log(\tilde{g})$ against age for the two locations, along with $95\%$ pointwise confidence intervals, are shown in the top panel of Figure~\ref{Fig4}. The plots suggest somewhat different thinning patterns at the two sites. This may have been caused not only by the differences in topography, but also by the fact that there is a lake underneath the ice sheet at one site and solid bedrock at the other. The confidence intervals get wider at older ages.

The plots of the observed and the fitted values of log(AAR) against Age for the two locations are shown in the bottom panel of Figure~\ref{Fig4}. The  plots indicate that the proposed model~\eqref{model} fits both the data sets reasonably well, though there is some misfit at older ages. 

\begin{figure}
\centering
\begin{tabular}{cc}
 {\underline{Lake Vostok}} &{\underline{EPICA Dome C}}\\
\includegraphics[width=2.5in,height=1.8in]{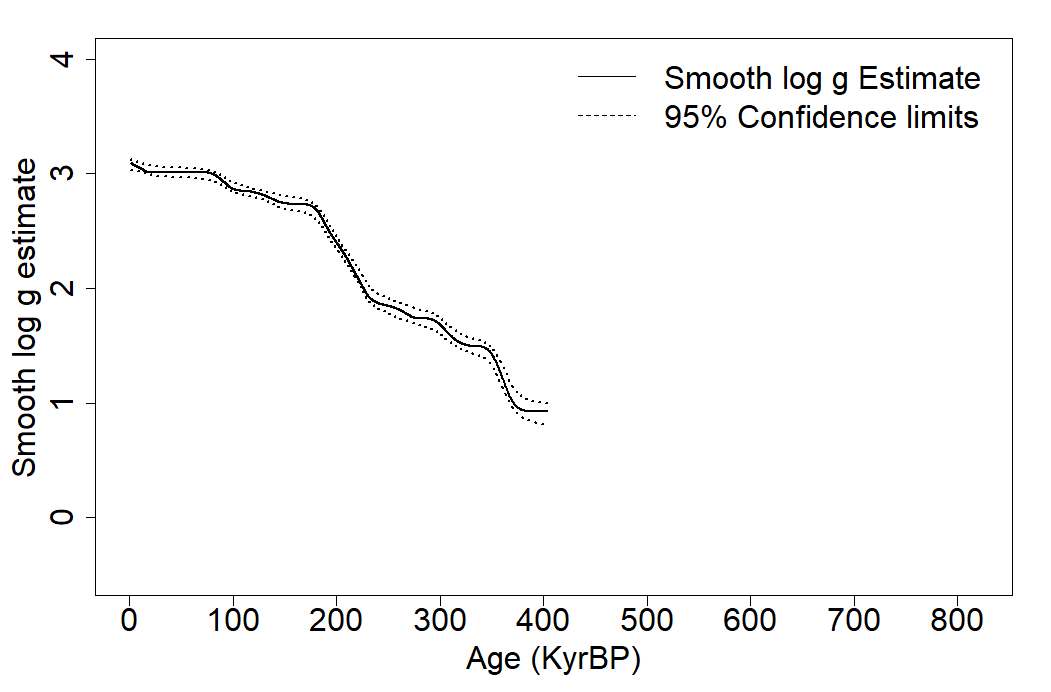}&
\includegraphics[width=2.5in,height=1.8in]{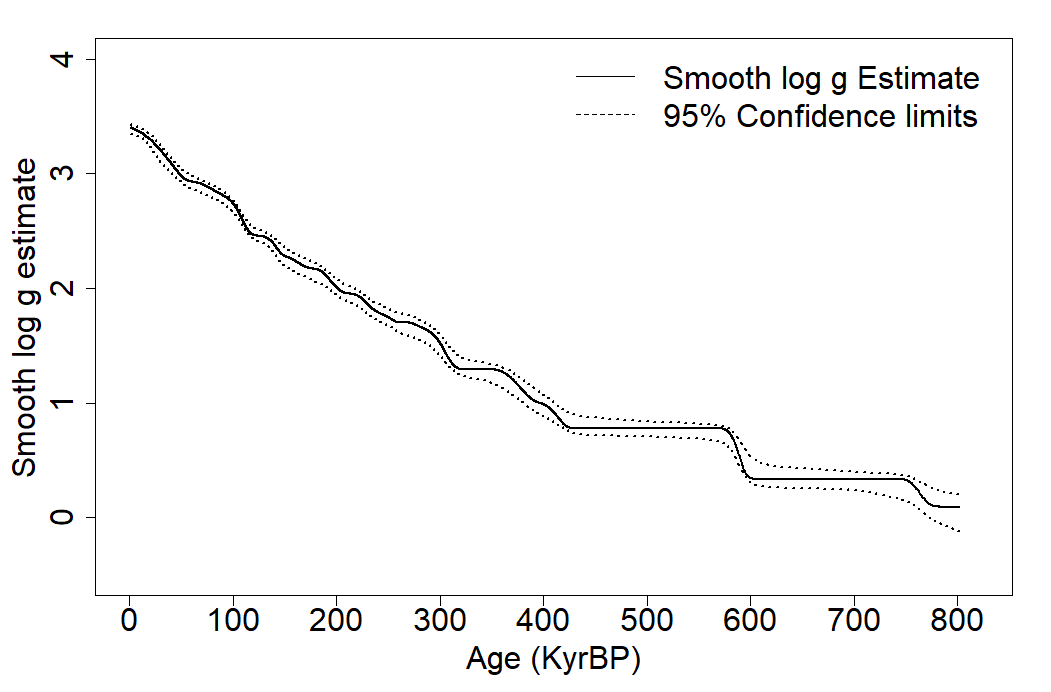}
\\
\includegraphics[width=2.5in,height=1.8in]{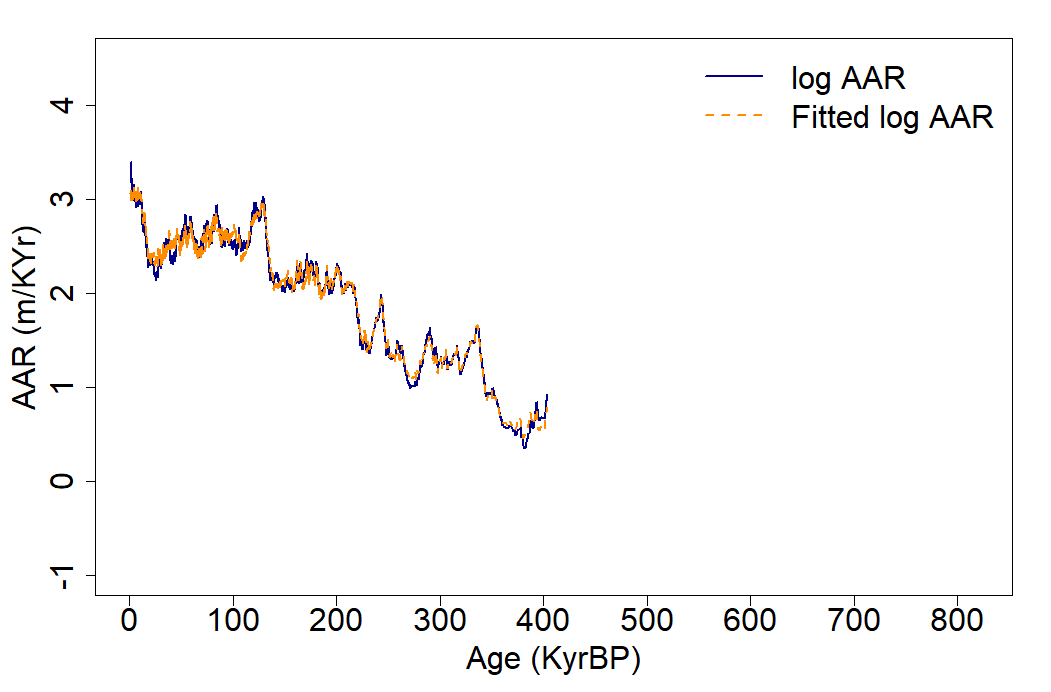}&
\includegraphics[width=2.5in,height=1.8in]{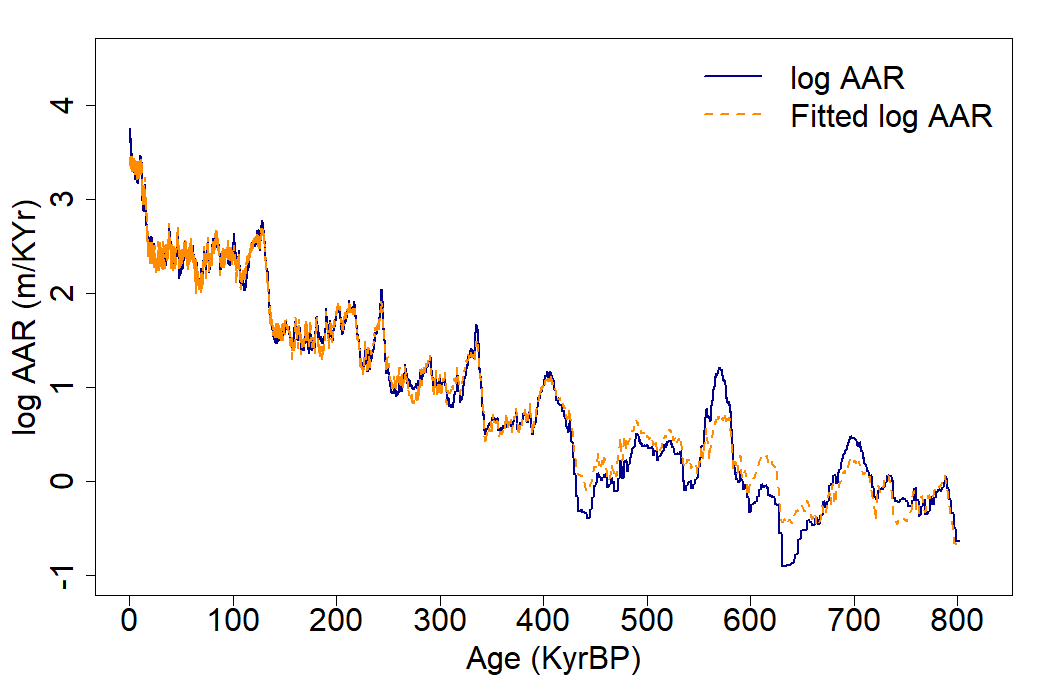}
\end{tabular}
\caption{Plots of $\log\tilde{g}_n$ with pointwise bootstrap confidence intervals against age (top panel) and the plots of $\log$AAR together with fitted $\log$AAR (bottom panel) at the two locations.}
\label{Fig4}
\end{figure}

\begin{figure}
\centering
\begin{tabular}{cc}
 {\underline{Lake Vostok}} &{\underline{EPICA Dome C}}\\
\includegraphics[width=2.5in,height=1.8in]{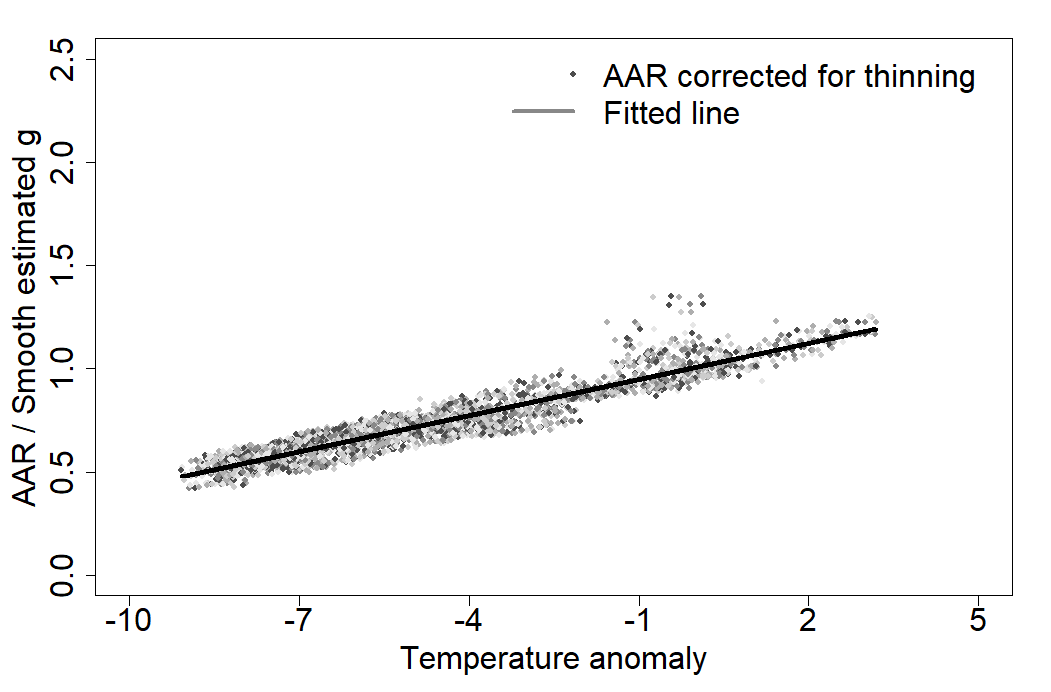}&
\includegraphics[width=2.5in,height=1.8in]{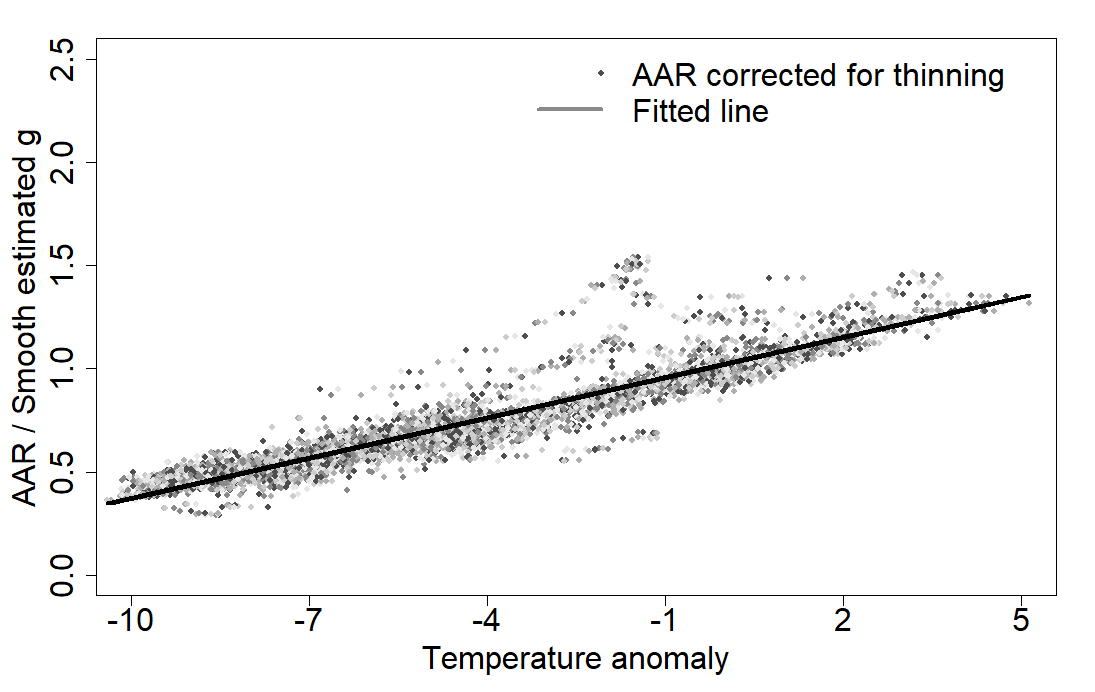}\\
\\
\includegraphics[width=2.5in,height=1.8in]{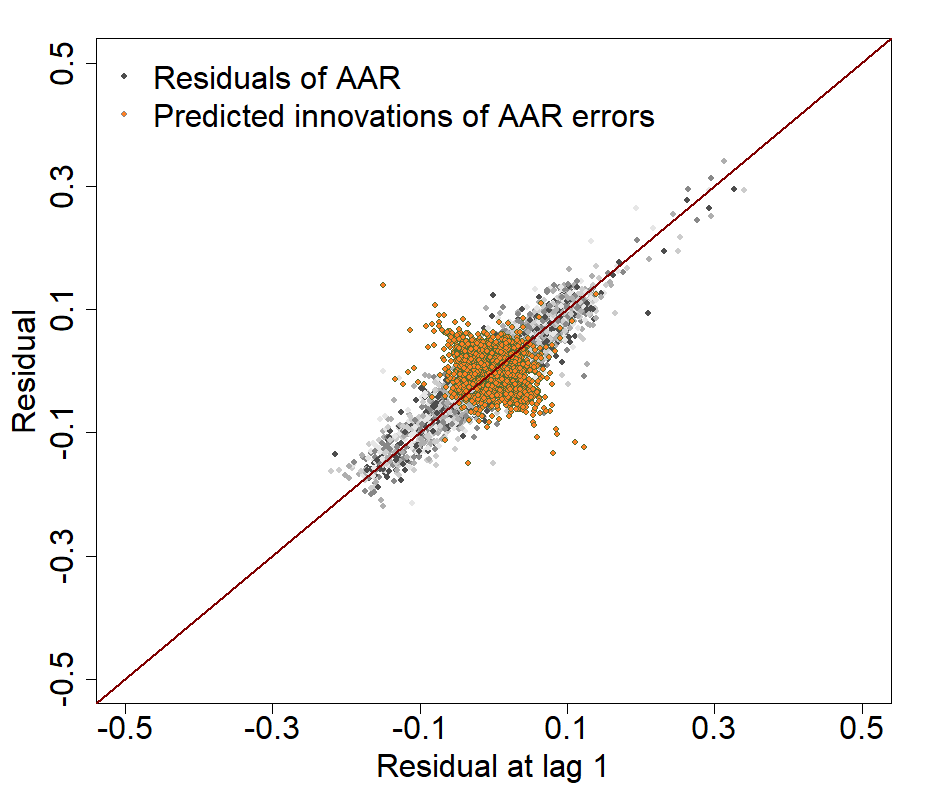}&
\includegraphics[width=2.5in,height=1.8in]{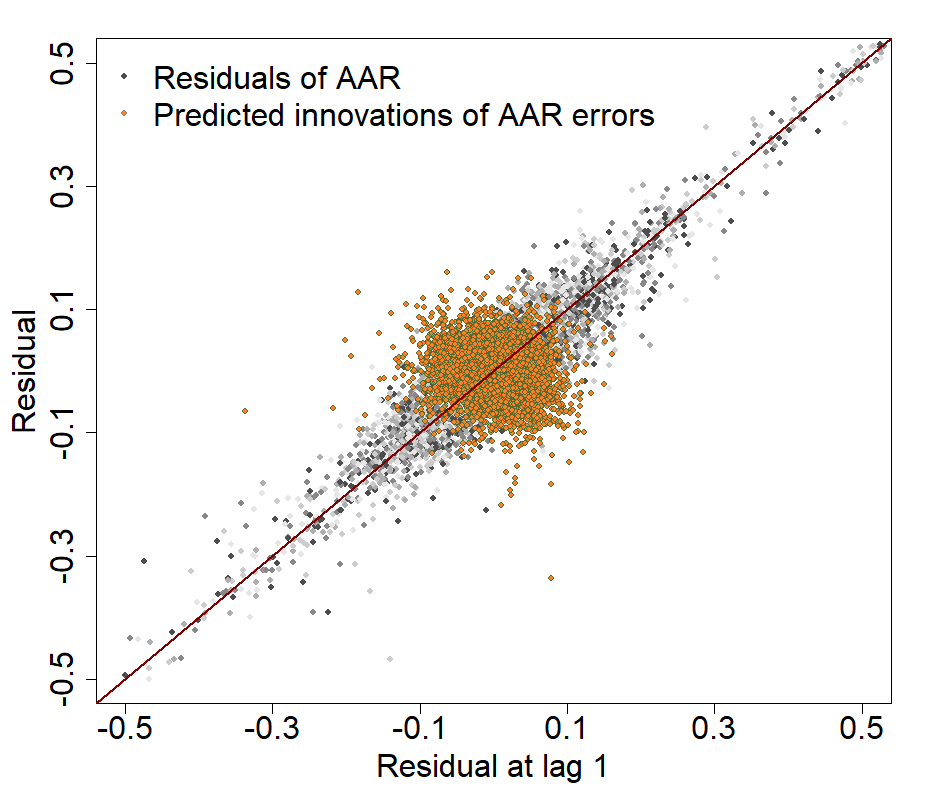}
\\
\includegraphics[width=2.5in,height=1.8in]{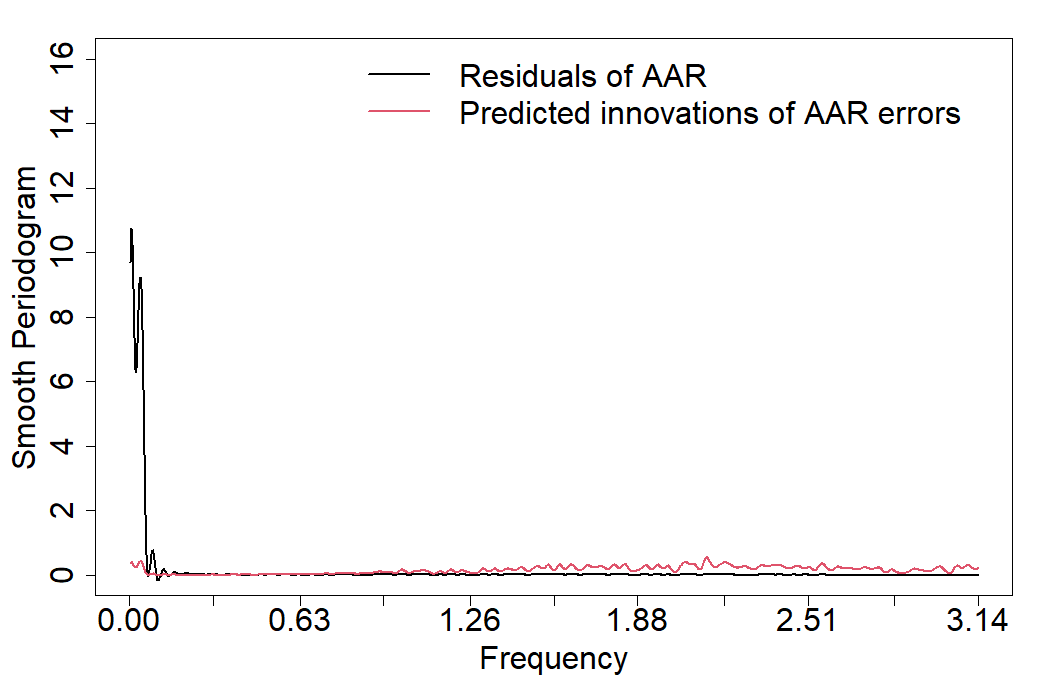}&
\includegraphics[width=2.5in,height=1.8in]{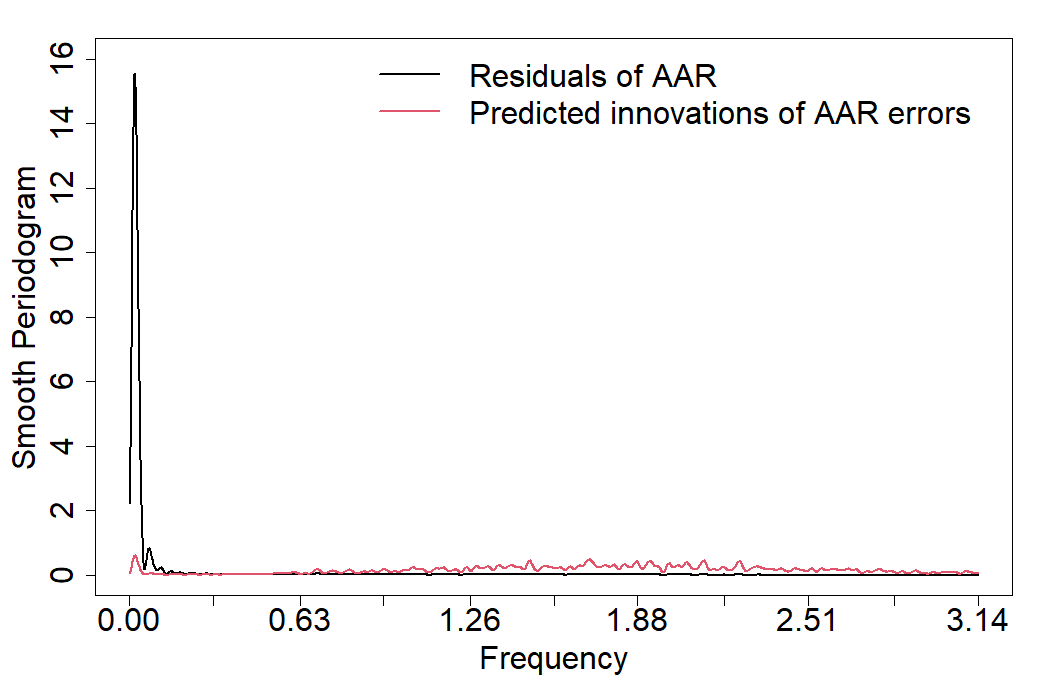}
\end{tabular}
\caption{Scatter plots of AAR/$\tilde{g}$ against temperature anomaly along with the fitted line (top panel), scatter plots of $\tilde{\varepsilon}_{i}$ against $\tilde{\varepsilon}_{i-1}$ in gray points and $\tilde{\zeta}_{i}$ against  $\tilde{\zeta}_{i-1}$ in brown points (middle panel) and the plots of smooth periodogram of $\tilde{\varepsilon}_{i}$ and $\tilde{\zeta}_{i}$ (bottom panel).}
\label{Fig4_diagnostic}
\end{figure}
The top panel of Figure~\ref{Fig4_diagnostic} shows the plots of the fitted AAR (adjusted for thinning) against temperature ($x$), for the two locations. The adjustment for thinning, based on model~\eqref{model}, is made by dividing the observed AAR by the estimated $g$. The line $1+\tilde\gamma_nx$ is overlaid on the scatter as the ``fitted line''. There are occasional departures from the fitted line in the plots for both the locations. The larger departures for EPICA Dome C occur in cycles at the distant past (see bottom panel of Figure~\ref{Fig4}). Apart from these systematic departures, which may be attributed to factors not included in the model, the points in the scatter plots stay very close to the fitted line. This indicates a persistently linear relationship with temperature disturbed at times by unaccounted factors. Please see Section~\ref{s7} for a discussion on the unaccounted factors.  

The middle panel of Figure~\ref{Fig4_diagnostic} shows the lag plots of the predicted innovations $\tilde{\zeta}_i$ of the AAR errors, overlaid on the lag plots of the residuals $\tilde{\varepsilon}_i$ of AAR, at both the locations. The conspicuous stretch of the lag plot of the $\tilde{\varepsilon}_i$s indicate the strong correlation of the errors that had to be incorporated in the model. The lag plot of the $\tilde{\zeta}_i$s is mostly free from this anomaly. The bottom panel of Figure~\ref{Fig4_diagnostic} shows smoothed periodogram of the predicted innovations $\tilde{\zeta}_i$ of the AAR errors, overlaid on the periodogram of the residuals $\tilde{\varepsilon}_i$ of AAR. At both the locations, the sharp peak of the periodogram near zero in the case of residuals is not visible in the case of the predicted innovations.

The findings from the lag plots and smoothed periodograms are complemented by the fact that the estimates of $\beta$ reported in Table~\ref{Tab4} are not too large, in spite of the pattern of over-estimation reported in Section~\ref{s5}. Smaller values of $\beta$ correspond to stronger dependence in the OU process. These findings justify the use of the OU process model for estimating standard errors. 

\section{Discussion}\label{s7}

The strong correlation in the residuals of the model for both Lake Vostok and EPICA Dome C deserves an explanation. The model~\eqref{model} does not take into account the effect of lateral flow of ice. Distortions also arise from spatial redistribution of mass through short-term cycles of mass exchange between moisture and precipitate. The extent of these physical processes vary with time. Some of the slowly varying unaccounted factors induce slowly varying changes in AAR, which would naturally be attributed to the model error. Another issue is the dependence of the AICC2012 scale of time on a number of glaciological factors and stratigraphic markers of absolute ice age \citep{bazin2013}, which in turn are physical processes. If age gets locally distorted in a systematic way, and remain unaccounted in the systematic part of the model, that would also induce correlation in the error part. A third probable factor relates to how AAR depends on temperature. There may be changes in the pattern of dependence in different interglacial cycles. Since the long-term model only accounts for a single linear effect of time, any short-term effect is likely to be expressed as slowly varying distortions that induce correlation among the model errors. In summary, the dependence of AAR on time, indicated by the strong serial correlation among the residuals, may merely be a proxy for its dependence on unaccounted influencing factors that vary slowly with time. As mentioned in Section~\ref{s1}, a part of the correlation also arises from the manner we compute the average ages and temperatures at successive depth intervals. 

The criterion~\eqref{Square_loss} minimized for estimation of the model parameters does not make use of the dependence of the errors. It is known that in the case of linear regression, serial correlation leads to under-estimation of error variance, but no effect on the bias and not much effect on the variance of the least squares estimators \citep{Kramer83}. It is for this reason why we chose the least squares criterion but drew bootstrap resamples from the innovations of residuals modeled as samples from a fitted OU process, rather than resampling from the residuals themselves. One may also explore whether any improvement of efficiency is possible through a provably consistent method.

There may be a question about the reliability of ice core data records as a source of information on the relationship between temperature and accumulation. It is known that ice core data records are affected by post-deposition processes in low-accumulation areas \citep{Casado}. There are issues of coherent climate variations being confounded with local proxy variability \citep{Munch}. Another complication arises from the difference between gas ages needed for temperature reconstructions and ice ages needed for accumulation rate reconstructions \citep{bradley}. Nevertheless, the strong relationship between temperature and AAR adjusted for thinning observed over a vast scale of time (see Figure~\ref{Fig4_diagnostic}, top panel) is hard to dismiss. Further confidence on the that finding may be derived from the fact that the general decreasing function $g_0$ used in the model~\eqref{model} permits the analysis to be resilient to any minor mis-specification in age.

The estimated rise of AAR by 5.7\%\ at Lake Vostok and by 6.2\%\ at EPICA Dome C per degree Celsius rise in temperature, as reported in Section~\ref{s6} is an average phenomenon over several hundred millennia. The model may be applied to data over a shorter time scale for a more nuanced understanding of the issue. Such an empirical study may help us understand whether variability in the shorter and longer time scales are influenced by different dynamic and thermodynamic processes. There may be a methodological challenge owing to shortage of data, which may be overcome by an extension of the proposed model that permits the coefficient $\gamma$ (effect of temperature on precipitation) to change slowly with time.

The model and the method presented here may be extended to sampling at non-uniform depths, enabling analysis of ice core data from a wider range of sites for a better understanding of spatial variations. This exercise will also be useful for understanding the mass balance of the entire Antarctic ice sheet.

It is well known that reconstruction of paleoclimatic temperature from isotopes is a challenging task \citep{Markle, Casado}. Since isotopic composition is influenced by factors other than local temperatures also, there are bound to be uncertainties in the temperature reconstructions. This uncertainty may be incorporated in the proposed model by bringing in errors in the temperature variable $x_i$. This can be another direction of future work.

While the analysis of Section~\ref{s6} shows a positive effect of temperature on AAR over several hundred millennia, it is not clear how fluctuations in temperature have affected \textit{precipitation} during the same period. The rate of accumulation differs from the rate of precipitation due to several factors, such as diurnal sublimation–condensation cycles, melting, lateral flow and so on. The relation between temperature and precipitation is a matter of direct relevance to the issue of ice mass balance. An assessment of precipitation variability from the accumulation rates will be needed to fill the gap.




\section*{Supplementary material}

\renewcommand{\theequation}{S.\arabic{equation}}

In this supplementary material, we provide the proofs of all theorems of the main manuscript in section~\ref{s.s1} and a few additional finite sample studies in section~\ref{s.s2}.

\spacingset{1.9}
\section{Proof of theorems of the main manuscript}\label{s.s1}
\begin{proof}[Proof of Theorem~1] By using equation (4) of the main manuscript, we can write
\begin{equation}\label{Qn_decompose}
Q_n(\gamma,g)
=\frac{1}{n}\sum_{i=1}^{n}[v_i+v_i']^2 +\frac{1}{n}\sum_{i=1}^{n}\varepsilon_i^2 \quad\mbox{} +\frac{2}{n}\sum_{i=1}^{n}v_i\varepsilon_i+\frac{2}{n}\sum_{i=1}^{n}v_i'\varepsilon_i,
\end{equation}
where $v_i=\left(\log (1 +\gamma_0 x_{i})-\log (1 +\gamma x_{i})\right)$ and $v_i'=\left(\log g_0(z_i)-\log g(z_i)\right)$, for $i=1,2,\ldots,n$. 

From Assumptions~1 and~2, the first term on the right hand side (RHS) of \eqref{Qn_decompose} can be viewed as the Riemann sum,
\begin{eqnarray*}
\frac{1}{\Delta}\sum_{i=1}^{n}[\log (1+\gamma_0 x(d_i))-\log (1 +\gamma x(d_i))+\log g_0(z(d_i))-\log g(z(d_i))]^2\frac{\Delta}{n},    
\end{eqnarray*}
which converges to $Q(\gamma,g)$. We now turn to the second term on RHS of \eqref{Qn_decompose}. Note that
\begin{eqnarray}
    \frac1n\sum_{i=1}^n \varepsilon_i^2&\!\!=\!\!&\frac{1}{n\rho_n}\sum_{i=1}^n \varepsilon^2(i\rho_n) \rho_n\nonumber\\
    &\!\!=\!\!&\frac{1}{n\rho_n}\int_{0}^{n\rho_n}\varepsilon^2(u) du \!+\! 
    \frac{1}{n\rho_n}\int_{0}^{n\rho_n}\left(\sum_{i=1}^n \varepsilon^2(i\rho_n) I_{((i-1)\rho_n,i\rho_n]}(u)-\varepsilon^2(u)\right) du,\qquad\mbox{}\label{2term_1}     
\end{eqnarray}
where  
\begin{equation*}
    I_{((i-1)\rho_n, i\rho_n]}(u)=\begin{cases}
    1 &\mbox{if } u\in ((i-1)\rho_n, i\rho_n]\\
    0& \mbox{otherwise}
\end{cases}
\end{equation*}
for $i=1,2,\ldots,n$.\\
Now for any $u\in[0, \ n\rho_n]$, note that   
$\sum_{i=1}^n \varepsilon^2(i\rho_n) I_{((i-1)\rho_n,i\rho_n]}(u)=\varepsilon^2\left (\lceil u/\rho_n\rceil \rho_n \right)$ where $\lceil\cdot\rceil$ is the ceiling function.
Therefore, by using \eqref{2term_1}, we have 
\begin{eqnarray}
 \left|\frac1n\sum_{i=1}^n \varepsilon_i^2-\frac{\sigma^2}{2\beta}\right|\le \left|\frac{1}{n\rho_n}\int_{0}^{n\rho_n}\varepsilon^2(u) du-\frac{\sigma^2}{2\beta}\right|\!+\!  \left|\frac{1}{n\rho_n}\int_{0}^{n\rho_n}\left(\varepsilon^2\left (\lceil u/\rho_n\rceil \rho_n \right)-\varepsilon^2(u)\right) du\right|. \label{2term_2}  
\end{eqnarray}
The first term on the RHS of \eqref{2term_2} converges to $0$ almost surely by using Lemma~\ref{lem:OU} given below.
We now consider the second term on the RHS of \eqref{2term_2}. 
Note that, by using equation (11) of the main manuscript and $u\le\lceil u/\rho_n\rceil \rho_n$,  we have
\begin{eqnarray}
    \varepsilon(u)&=& e^{-\beta u}  \ \varepsilon(0)+ \sigma e^{-\beta u }\int_{0}^u e^{\beta s} dW(s)\label{OU_u}\\
    \varepsilon(\lceil u/\rho_n\rceil \rho_n)&=&e^{-\beta \lceil u/\rho_n\rceil \rho_n} \  \varepsilon(0)+ \sigma e^{-\beta \lceil u/\rho_n\rceil \rho_n } \int_{0}^u e^{\beta s} dW(s) \nonumber\\&&+ \sigma e^{-\beta \lceil u/\rho_n\rceil \rho_n } \int_{u}^{\lceil u/\rho_n\rceil \rho_n} e^{\beta s} dW(s). \label{OU_urhon}
\end{eqnarray}
 Now, by using \eqref{OU_u} and \eqref{OU_urhon}, the integrand of the second term on the RHS of \eqref{2term_2} is decomposed as follows.
 \begin{eqnarray}
    &&\hskip-20pt\varepsilon^2(\lceil u/\rho_n\rceil \rho_n) -\varepsilon^2(u)\nonumber\\&=& \ \ \ (e^{-2\beta \lceil u/\rho_n\rceil \rho_n }-e^{-2\beta u}) \ \varepsilon^2(0)\nonumber\\&&+
    \left((e^{-2\beta \lceil u/\rho_n\rceil \rho_n }-e^{-2\beta u}) (e^{2\beta u}-1)\right)\left(\frac{\sigma}{\sqrt{e^{2\beta u}-1}}\int_{0}^u e^{\beta s} dW(s)\right)^2\nonumber\\
    &&+  \left(e^{-2\beta \lceil u/\rho_n\rceil \rho_n }(e^{2\beta \lceil u/\rho_n\rceil \rho_n}-e^{2\beta u}) \right)\left(\frac{\sigma}{\sqrt{e^{2\beta \lceil u/\rho_n\rceil \rho_n}-e^{2\beta u}}}\int_{u}^{\lceil u/\rho_n\rceil \rho_n} e^{\beta s} dW(s)\right)^2\nonumber\\
    &&+2\left((e^{-2\beta \lceil u/\rho_n\rceil \rho_n }-e^{-2\beta u})\sqrt{e^{2\beta u}-1}\right)\left(\varepsilon(0)\frac{\sigma}{\sqrt{e^{2\beta u}-1}}\int_{0}^u e^{\beta s} dW(s)\right)\nonumber\\
    &&+ 2 \left(e^{-2\beta \lceil u/\rho_n\rceil \rho_n } \sqrt{e^{2\beta \lceil u/\rho_n\rceil \rho_n}-e^{2\beta u}}\right)\left( \varepsilon(0) \frac{\sigma}{\sqrt{e^{2\beta \lceil u/\rho_n\rceil \rho_n}-e^{2\beta u}}} \int_{u}^{\lceil u/\rho_n\rceil \rho_n} e^{\beta s} dW(s)\right)\nonumber\\
    &&+2 \left(e^{-2\beta \lceil u/\rho_n\rceil \rho_n } \sqrt{e^{2\beta u}-1} \sqrt{e^{2\beta \lceil u/\rho_n\rceil \rho_n}-e^{2\beta u}}\right)\nonumber\\&&\hskip20pt\left(\frac{\sigma}{\sqrt{e^{2\beta u}-1}}\int_{0}^u e^{\beta s} dW(s) \frac{\sigma}{\sqrt{e^{2\beta \lceil u/\rho_n\rceil \rho_n}-e^{2\beta u}}}\int_{u}^{\lceil u/\rho_n\rceil \rho_n} e^{\beta s} dW(s)\right)\nonumber\\
    &=&T_{21}+T_{22}+T_{23}+T_{24}+T_{25}+T_{26}, \ (\mbox{say}).\label{Integrand_decompose}
 \end{eqnarray}
We now show by using Lemma~\ref{lem:gen} that each term on the RHS of \eqref{Integrand_decompose} converges uniformly (over all u) to $0$ almost surely. We begin with the term $T_{21}$ on the RHS of \eqref{Integrand_decompose}. Note that
\begin{equation}
    \left|e^{-2\beta \lceil u/\rho_n\rceil \rho_n }-e^{-2\beta u}\right|\le 2\beta |\lceil u/\rho_n\rceil \rho_n-u|\le 2\beta \rho_n\label{T21_fn}.
\end{equation} 
Further, from Assumption~3,  $\varepsilon^2(0)\stackrel{\mathcal{L}}{\equiv}\frac{\sigma^2}{2\beta} \chi^2_1$ (where the notation $\stackrel{\mathcal{L}}{\equiv}$ indicates distributional equivalence) and it 
satisfies \eqref{le2:eq1}. Thus, by 
Lemma~\ref{lem:gen}, $T_{21}$ converges uniformly to $0$ almost surely. 

We consider the term $T_{22}$ on the RHS of \eqref{Integrand_decompose}.
Note that 
\begin{eqnarray}
   \left|(e^{-2\beta \lceil u/\rho_n\rceil \rho_n }-e^{-2\beta u}) (e^{2\beta u}-1)\right|\le |e^{-2\beta (\lceil u/\rho_n\rceil \rho_n-u) }-1|\le   2\beta|\lceil u/\rho_n\rceil \rho_n-u|\le 2\beta\rho_n. \label{T22_fn}
\end{eqnarray}
Further, by using Wiener integration, we have
\begin{equation}
    \left(\frac{\sigma}{\sqrt{e^{2\beta u}-1}}\int_{0}^u e^{\beta s} dW(s)\right)^2\stackrel{\mathcal{L}}{\equiv}\frac{\sigma^2}{2\beta} \chi^2_1\label{T22_vn}, 
\end{equation}
and the LHS of \eqref{T22_vn} satisfies \eqref{le2:eq1}. Thus, it follows from Assumption~3, \eqref{T22_fn}, \eqref{T22_vn} and Lemma~\ref{lem:gen} that $T_{22}$ converges uniformly to $0$ almost surely.

We now turn to the term $T_{23}$ on the RHS of \eqref{Integrand_decompose}. 
Note that 
\begin{eqnarray}
   \left|e^{-2\beta \lceil u/\rho_n\rceil \rho_n }(e^{2\beta \lceil u/\rho_n\rceil \rho_n}-e^{2\beta u}) \right|\le |e^{-2\beta (\lceil u/\rho_n\rceil \rho_n-u) }-1|\le   2\beta|\lceil u/\rho_n\rceil \rho_n-u|\le 2\beta\rho_n. \label{T23_fn}
\end{eqnarray}
Further, by using Wiener integration, we have 
\begin{equation}
\left(\frac{\sigma}{\sqrt{e^{2\beta \lceil u/\rho_n\rceil \rho_n}-e^{2\beta u}}}\int_{u}^{\lceil u/\rho_n\rceil \rho_n} e^{\beta s} dW(s)\right)^2\stackrel{\mathcal{L}}{\equiv}\frac{\sigma^2}{2\beta} \chi^2_1, \label{T23_vn}   
\end{equation}
 and the LHS of \eqref{T23_vn} satisfies \eqref{le2:eq1}. Thus, it follows from Assumption~3, \eqref{T23_fn}, \eqref{T23_vn}, and Lemma~\ref{lem:gen} that $T_{23}$ converges uniformly to $0$ almost surely.

Similar arguments establish 
the uniform convergence of the terms $T_{24}$, $T_{25}$ and $T_{26}$ on the RHS of \eqref{Integrand_decompose} to $0$ uniformly almost surely. This shows that the second term on the RHS of 
\eqref{2term_2} converges to $0$ almost surely. Therefore, we have
\begin{equation}
    \lim_{n\to\infty}\frac1n \sum_{i=1}^n \varepsilon_i^2=\frac{\sigma^2}{2\beta} \quad \mbox{ almost surely.}\label{quad_err_conv}
\end{equation}
We now turn to the third term on the RHS of \eqref{Qn_decompose}. Note that
\begin{equation}
    \frac{1}{n}\sum_{i=1}^n v_i\varepsilon_i=  \frac{1}{n\rho_n}\sum_{i=1}^n  \mathcal{V}\left(i\rho_n \frac{\Delta}{n\rho_n}\right)\varepsilon(i\rho_n) \rho_n=\frac{1}{n\rho_n}\int_0^{n\rho_n}\mathcal{V}\left(\left\lceil u/\rho_n\right\rceil\rho_n  \frac{\Delta}{n\rho_n}\right) \varepsilon(\lceil u/\rho_n\rceil \rho_n) du,
\label{Qn_third_term}
\end{equation}
where
$\mathcal{V}(s)=\log \left(1 +\gamma_0 x(s\right))-\log \left(1 +\gamma x(s)\right)$ for $s\in[0,\Delta]$. Now, by using \eqref{Qn_third_term}, we have
\begin{eqnarray}
    \left|\frac{1}{n}\sum_{i=1}^n v_i\varepsilon_i\right|&\le& \left|\frac{1}{n\rho_n}\int_0^{n\rho_n}\left(\mathcal{V}\left(\left\lceil u/\rho_n\right\rceil \rho_n \frac{\Delta}{n\rho_n}\right) \varepsilon(\lceil u/\rho_n\rceil \rho_n) -\mathcal{V}\left(u\frac{\Delta}{n\rho_n}\right)\varepsilon(u)\right) du\right|\nonumber\\&&+ \left|\frac{1}{n\rho_n}\int_0^{n\rho_n}\mathcal{V}\left(u\frac{\Delta}{n\rho_n}\right) \varepsilon(u) du\right|.\label{Qn_third_decompose}
\end{eqnarray}
We intend to show that both the terms on the RHS of \eqref{Qn_third_decompose} converge to $0$ almost surely. By using \eqref{OU_u} and \eqref{OU_urhon}, we decompose the integrand of the first term on the RHS of \eqref{Qn_third_decompose} as follows. 
\begin{eqnarray}
 && \hskip-0.3in\mathcal{V}\left(\left\lceil u/\rho_n\right\rceil\rho_n  \frac{\Delta}{n\rho_n}\right) \varepsilon(\lceil u/\rho_n\rceil \rho_n) -\mathcal{V}\left(u\frac{\Delta}{n\rho_n}\right)\varepsilon(u)\nonumber\\&=&
 \left(\mathcal{V}\left(\left\lceil u/\rho_n\right\rceil\rho_n  \frac{\Delta}{n\rho_n}\right) e^{-\beta \lceil u/\rho_n\rceil \rho_n} -\mathcal{V}\left(u\frac{\Delta}{n\rho_n}\right)e^{-\beta u}\right) \  \varepsilon(0)\nonumber\\
 &&\hskip-0.15in+ \left(\left(\mathcal{V}\left(\left\lceil u/\rho_n\right\rceil\rho_n  \frac{\Delta}{n\rho_n}\right) e^{-\beta \lceil u/\rho_n\rceil \rho_n} -\mathcal{V}\left(u\frac{\Delta}{n\rho_n}\right)e^{-\beta u}\right) \sqrt{e^{2\beta}-1} \right)\left(\frac{\sigma}{\sqrt{e^{2\beta u}-1}}\int_0^u e^{\beta s} dW(s) \right)\nonumber\\&&\hskip-0.15in+   \mathcal{V}\left(\left\lceil u/\rho_n\right\rceil\rho_n  \frac{\Delta}{n\rho_n}\right) e^{-\beta \lceil u/\rho_n\rceil \rho_n} \sqrt{e^{2\beta \lceil u/\rho_n\rceil \rho_n}-e^{2\beta u}} \left(\frac{\sigma}{\sqrt{e^{2\beta \lceil u/\rho_n\rceil \rho_n}-e^{2\beta u}}}\int_{u}^{\lceil u/\rho_n\rceil \rho_n} e^{\beta s} dW(s)\right) \nonumber\\
 &=& T_{31} + T_{32} + T_{33}, (\mbox{ say}).\label{eq:T3_first_decompose}
\end{eqnarray}
We begin with the term $T_{31}$ on the RHS of \eqref{eq:T3_first_decompose}. Note that 
\begin{eqnarray}
 &&\mbox{}\hskip-30pt\left|\mathcal{V}\left(\left\lceil u/\rho_n\right\rceil\rho_n  \frac{\Delta}{n\rho_n}\right) e^{-\beta \lceil u/\rho_n\rceil \rho_n} -\mathcal{V}\left(u\frac{\Delta}{n\rho_n}\right)e^{-\beta u}\right|\nonumber\\&\le& \left|\mathcal{V}\left(\left\lceil u/\rho_n\right\rceil\rho_n  \frac{\Delta}{n\rho_n}\right) e^{-\beta( \lceil u/\rho_n\rceil \rho_n-u)} -\mathcal{V}\left(u\frac{\Delta}{n\rho_n}\right)\right|\nonumber\\&\le& 
    \left|\mathcal{V}\left(\left\lceil u/\rho_n\right\rceil\rho_n  \frac{\Delta}{n\rho_n}\right) -\mathcal{V}\left(u\frac{\Delta}{n\rho_n}\right)\right| +
    \left|\mathcal{V}\left(\left\lceil u/\rho_n\right\rceil\rho_n  \frac{\Delta}{n\rho_n}\right)\right| \beta| \lceil u/\rho_n\rceil \rho_n-u|.\qquad\ \label{T31_fn}
\end{eqnarray}
It follows from the definition of $\Gamma$ and Assumption~3 that $\mathcal{V}$ is a Lipschitz continuous function. Now by using \eqref{T31_fn}, we have
\begin{eqnarray}
    \left|\mathcal{V}\left(\left\lceil u/\rho_n\right\rceil\rho_n  \frac{\Delta}{n\rho_n}\right) e^{-\beta \lceil u/\rho_n\rceil \rho_n} -\mathcal{V}\left(u\frac{\Delta}{n\rho_n}\right)e^{-\beta u}\right|\le k_0 \frac{\Delta}{n}+ \sup_{s\in[0,\Delta]}\mathcal{V}(s) \beta\rho_n,\label{T31_fn1}
\end{eqnarray}
where the constant $k_0>0$ does not depend on $u$. Now, by using \eqref{T31_fn1}, Assumption~3 and Lemma~\ref{lem:gen}, the term $T_{31}$ converges to $0$ uniformly almost surely.

We now consider the term $T_{32}$ on the RHS of \eqref{eq:T3_first_decompose}.
Note that
\begin{eqnarray}
    &&\hskip-30pt\left|\left(\mathcal{V}\left(\left\lceil u/\rho_n\right\rceil\rho_n  \frac{\Delta}{n\rho_n}\right) e^{-\beta \lceil u/\rho_n\rceil \rho_n} -\mathcal{V}\left(u\frac{\Delta}{n\rho_n}\right)e^{-\beta u}\right) \sqrt{e^{2\beta u}-1} \right|\nonumber\\&\le&\left|\mathcal{V}\left(\left\lceil u/\rho_n\right\rceil\rho_n  \frac{\Delta}{n\rho_n}\right) e^{-\beta( \lceil u/\rho_n\rceil \rho_n-u)} -\mathcal{V}\left(u\frac{\Delta}{n\rho_n}\right)\right|. \label{T32_fn1}
\end{eqnarray}
Therefore, by using \eqref{T31_fn}, \eqref{T31_fn1} and \eqref{T32_fn1}, we have
\begin{equation}
  \left|\left(\mathcal{V}\left(\left\lceil u/\rho_n\right\rceil\rho_n  \frac{\Delta}{n\rho_n}\right) e^{-\beta \lceil u/\rho_n\rceil \rho_n} -\mathcal{V}\left(u\frac{\Delta}{n\rho_n}\right)e^{-\beta u}\right) \sqrt{e^{2\beta u}-1} \right|\to 0 \mbox{ uniformly }.\label{T32_fn2}  
\end{equation}
Since the Wiener integral $\frac{\sigma}{\sqrt{e^{2\beta}-1}}\int_0^u e^{\beta s} dW(s)$ for any $u$ is a normally distributed random variable with mean $0$ and variance $\frac{\sigma^2}{2\beta}$, it satisfies condition  \eqref{le2:eq1} of Lemma~\ref{lem:gen}. It follows from that lemma that $T_{32}$ converges uniformly to $0$ almost surely.

It follows from a similar line of argument involving \eqref{T23_fn} and Lemma~\ref{lem:gen} that the term $T_{33}$ on the RHS of \eqref{eq:T3_first_decompose} converges uniformly to $0$ almost surely. This shows that the first term on the RHS of \eqref{Qn_third_decompose} converges to $0$ almost surely. 

We now consider the second term on the RHS of \eqref{Qn_third_decompose}. By using equation (11) of the main manuscript, we have
\begin{eqnarray}
 \mathsf{T}_n&=&   \frac{1}{n\rho_n} \int_{0}^{n\rho_n} \mathcal{V}\left( u\frac{\Delta}{n\rho_n}\right)\varepsilon(u)du\nonumber\\ &=&\frac{1}{n\rho_n} \int_{0}^{n\rho_n} \mathcal{V}\left( u\frac{\Delta}{n\rho_n}\right)e^{-\beta u}du \ \varepsilon(0) +
    \frac{1}{n\rho_n} \int_{0}^{n\rho_n} \mathcal{V}\left( u\frac{\Delta}{n\rho_n}\right)\sigma \int_{0}^{u} e^{-\beta(u-s)}dW(s)du\nonumber\\
    &=& \frac{1}{n\rho_n} \int_{0}^{n\rho_n} \mathcal{V}\left( u\frac{\Delta}{n\rho_n}\right)e^{-\beta u}du \ \varepsilon(0) +
    \frac{\sigma}{n\rho_n} \int_{0}^{n\rho_n} \left(\int_{s}^{n\rho_n} \mathcal{V}\left( u\frac{\Delta}{n\rho_n}\right) e^{-\beta(u-s)}du \right)dW(s).\hskip-20pt 
    \mbox{ } \nonumber\\
    &&\label{qn_third_second_part}
\end{eqnarray}
By using \eqref{qn_third_second_part} and Assumption~3, we observe that $\mathsf{T}_n$ is a $0$ mean Gaussian random variable with
\begin{eqnarray}
    Var(\mathsf{T}_n)&=&\frac{\sigma^2}{2\beta}\frac{1}{(n\rho_n)^2}\left(\int_{0}^{n\rho_n} \mathcal{V}\left( u\frac{\Delta}{n\rho_n}\right)e^{-\beta u}du\right)^2\nonumber\\
    &&+ \frac{\sigma^2}{(n\rho_n)^2} \int_{0}^{n\rho_n} \left(\int_{s}^{n\rho_n} \mathcal{V}\left( u\frac{\Delta}{n\rho_n}\right) e^{-\beta(u-s)}du \right)^2ds\nonumber\\
    &\le& \frac{\sigma^2(\sup_{u\in[0,\Delta]}\mathcal{V}(u))^2}{2\beta^3 (n\rho_n)^2}+ \frac{\sigma^2(\sup_{u\in[0,\Delta]}\mathcal{V}(u))^2}{\beta^2(n\rho_n)}\le \frac{\tau}{n\rho_n},\label{var_Tn_bound} 
\end{eqnarray}
where $\tau$ is a positive constant. By using the tail bounds of the Gaussian random variable and \eqref{var_Tn_bound}, for any $\delta>0$, we have
\begin{eqnarray}
    P(|\mathsf{T}_n|\ge \delta)&\le & 2e^{-\frac{\delta^2}{Var(\mathsf{T}_n)}}\le 2e^{-\frac{\delta^2} {\tau} \  n\rho_n}.\label{tailbound}
\end{eqnarray}
Since $\sum_{n=1}^\infty e^{-\frac{\delta^2}{\tau} \ n\rho_n}<\infty$ from Assumption~3, the above inequality and the Borel-Cantelli lemma imply that $\mathsf{T}_n$ converges to $0$ almost surely. Thus, we have 
\begin{eqnarray}
    \lim_{n\to \infty }\frac 1n\sum_{i=1}^n v_i \varepsilon_i=0 \quad \mbox{almost surely}.\label{weight_avg_convergence}
\end{eqnarray}

We can show by using Assumption~1, Assumption~3 and an argument similar to that used in establishing the convergence of \eqref{weight_avg_convergence}, that the fourth term on the RHS of \eqref{Qn_decompose} converges to $0$ almost surely. This completes the proof. 
\end{proof}

\begin{lemma}\label{lem:OU}
Let $f$ be a real valued function such that $\int_{-\infty}^\infty  f(s) \frac{1}{\sqrt{2\pi \sigma^2/2\beta}}e^{-\frac12\frac{s^2}{\sigma^2/2\beta}} \ ds<\infty$. Then, under Assumption~3,
\begin{eqnarray*}
\lim_{n\to \infty}    \frac1{n\rho_n} \int_{0}^{n\rho_n} f(\varepsilon(s))ds&=& \int_{-\infty}^\infty  f(s) \frac{1}{\sqrt{2\pi \sigma^2/2\beta}}e^{-\frac12\frac{s^2}{\sigma^2/2\beta}} \ ds \  \mbox { almost surely}. 
\end{eqnarray*}
\end{lemma}

\begin{proof}[Proof of Lemma~\ref{lem:OU}] 
The proof follows from Theorem~5.1 of \citep{Hasminskii1980} that establishes the strong law of large number of homogeneous Markov process with invariant distribution arising from stochastic differential equations. Note that, from Assumption~3, the OU process $\{\varepsilon(s), s\in[0,\infty) \}$ is given by the stochastic differential equation 
\begin{eqnarray}
    d\varepsilon(s)=-\beta \varepsilon(s) ds +\sigma dW(s),
\end{eqnarray}
and is a homogeneous recurrent Markov process with invariant distribution $N(0,\frac{\sigma^2}{2\beta})$.  The OU process satisfies the conditions of Theorem~5.1 of \citep{Hasminskii1980}. This completes the proof.   
\end{proof}

\begin{lemma}\label{lem:gen}
Let $f_n$ be a sequence of real valued functions defined over $[0,\infty)$  such that $f_n\to 0$ uniformly as $n\to \infty $. Let $\{\vartheta_n(u), \ u\in[0,\infty)\}$ be a sequence of stochastic processes so that, for any $\eta>0$, there exists $M>0$ such that 
\begin{equation}
    P\left(|\vartheta_n(u)|<M \ \ \forall \ u\in[0,\infty) \right)\ge 1-\eta.\label{le2:eq1}
\end{equation}
Then, for any $\delta, \ \eta>0$, there exists an integer $N(\delta)$ such that
\begin{equation}
    P\left(|f_n(u) \vartheta_n(u)|<\delta \ \forall \ n\ge N(\delta) \mbox{ and } u\in [0,\infty)\right)\ge 1-\eta.\label{le2:eq2}
\end{equation}
\end{lemma}

\begin{proof}[Proof of Lemma~\ref{lem:gen}]
Since $f_n(u) \to 0$ uniformly as $n\to\infty$, for any given $\delta>0$ there exists an integer $N(\delta)$ such that
\begin{equation}
    |f_n(u)|\le \frac{\delta}{M} \ \ \forall n\ge N(\delta) \mbox{ and } u\in [0,\infty).\label{le2:eq3}
\end{equation}
Since 
\begin{eqnarray*}
    \left\{|f_n(u) \vartheta_n(u)|\le\delta \right\}\supseteq \left\{|f_n(u) |<\frac{\delta}{M},  |\vartheta_n(u)|\le M\right\},
\end{eqnarray*}
the assertion of the lemma follows from \eqref{le2:eq1} and \eqref{le2:eq3}. This completes the proof.
\end{proof}

\begin{proof}[Proof of Theorem~2]
The functional $Q(\gamma,g)$ defined on the parameter space $\Gamma\times\mathcal{G}$ is continuous with respect to the distance metric defined at the beginning of Section~3 of the main manuscript. 
Let $(\hat{\gamma}_{n_k},\log \hat{g}_{n_k})$ be a subsequence of $(\hat{\gamma}_n,\log\hat{g}_n)$ that converges to the limit point ($\gamma',\log g'$). Then by using Theorem~1, we have
\begin{equation*}
Q_{n_k}(\hat{\gamma}_{n_k},\hat{g}_{n_k})\overset{a.s.}\to Q(\gamma',g')+\frac{\sigma^2}{2\beta} \ \mbox{as}\ n_k\to \infty
\end{equation*}
Since $\hat{\gamma}_{n_k}, \hat{g}_{n_k}$ minimizes $Q_{n_{k}}$, we have $Q_{n_k}(\hat{\gamma}_{n_k},\hat{g}_{n_k})\leq Q_{n_k}(\gamma,g)$ for all $\gamma, g$ in the parameter space. Therefore,
\begin{equation*}
\lim_{n_k\rightarrow\infty}Q_{n_k}(\gamma_{n_k},g_{n_k})\leq \lim_{n_k\rightarrow\infty}Q_{n_k}(\gamma_{0},g_{0})\overset{a.s.}=\frac{\sigma^2}{2\beta}.
\end{equation*}
This implies that $Q(\gamma',g')=0$. Since $Q$ has a unique minimum at $(\gamma_0,g_0)$, $(\gamma',g')$ must coincide with $(\gamma_0,g_0)$. This completes the proof.  
\end{proof}

\begin{proof}[Proof of Theorem~3] {\it Part (i)}: Note that
\begin{eqnarray}
    \hskip-20pt|\log\tilde{g}_{n}(z)-\log g_0(z)|&=&\left|\frac{\frac1{nh}\sum_{i=1}^n K\left(\frac{z-z_i}{h}\right)[\log\hat{g}_n(z_i)-\log g_0(z)] }{\frac1{nh}\sum_{i=1}^n K\left(\frac{z-z_i}{h}\right)}\right|\nonumber\\
    &\le&\frac{\frac1{nh}\sum_{i=1}^n K\left(\frac{z-z_i}{h}\right)|\log\hat{g}_n(z_i)-\log g_0(z_i)| }{\frac1{nh}\sum_{i=1}^n K\left(\frac{z-z_i}{h}\right)}
\nonumber\\&&+\left|\frac{\frac1{nh}\sum_{i=1}^n K\left(\frac{z-z_i}{h}\right)\log g_0(z_i) }{\frac1{nh}\sum_{i=1}^n K\left(\frac{z-z_i}{h}\right)}-\log g_0(z)\right|.\label{thm3_eq1}
\end{eqnarray}
From Theorem~2, we have 
\begin{eqnarray}
    \sup_{z} |\log \hat{g}_n(z)-\log g_0(z)| \to 0 \mbox{ almost surely},
\end{eqnarray}
i.e., for an arbitrary $\eta>0$ there exists a (possibly random) positive integer $N(\eta)$ such that
\begin{eqnarray}
   \sup_{z} |\log(\hat{g}_n(z))-\log (g_0(z))|<\eta \ \ \mbox{for all $n\ge N(\eta)$ with probability 1}.  
\end{eqnarray}
Therefore, the first term on the RHS of \eqref{thm3_eq1} converges uniformly over $z$ almost surely.
The second term is non-random. By using first-order Taylor series expansion of $\log g_0(z_i)$ around $z$, we have
\begin{eqnarray}\label{Taylor}
    \log g_0(z_i)=\log g_0(z)+ \frac{g_0'(\xi_i)}{g_0(\xi_i)} (z_i-z),
\end{eqnarray}
for some $\xi_i\in (z_i,z)$. 
It follows from the assumption on $g_0$ made in the theorem and the definition of $\mathcal{G}$ that
\begin{eqnarray}\label{g_deri_bound}
   \left |\frac{g_0'(z)}{g_0(z)}\right|\le \mathcal{C}_{g_0}
\end{eqnarray}
for some constant $\mathcal{C}_{g_0}$. Thus, the second term on the RHS of \eqref{thm3_eq1} is bounded as
\begin{eqnarray*}
\left|\frac{\displaystyle\mathop{\sum}_{i\in\{1,2,\ldots,n\}: |z-z_i|<h}  K\left(\frac{z-z_i}{h}\right)\frac{g_0'(\xi_i)}{g_0(\xi_i)} (z_i-z) }{\displaystyle\mathop{\sum}_{i\in\{1,2,\ldots,n\}: |z-z_i|<h}  K\left(\frac{z-z_i}{h}\right)}\right|&\le& \mathcal{C}_{g_0}h \mbox{ uniformly over } z,  
\end{eqnarray*}
which goes to zero as $h\to 0$. This completes the proof of part (i).\\
{\it Part (ii)}: We can expand $\tilde{Q}_n(\gamma)=\frac1n\sum_{i=1}^n
\left(y_i-\log \tilde{g}_n(z_i)-\log(1+\gamma x_i)\right)^2$ as follows.
\begin{eqnarray}
\tilde{Q}_n(\gamma)&=&\frac1n\sum_{i=1}^n\varepsilon_i^2+\frac{1}n \sum_{i=1}^n[\log g_0(z_i)-\log \tilde{g}_n(z_i)]^2
+\frac1n\sum_{i=1}^n[\log(1+\gamma_0 x_i)-\log(1+\gamma x_i)]^2\nonumber\\
&&+\frac{2}n \sum_{i=1}^n\varepsilon_i[\log g_0(z_i)-\log \tilde{g}_n(z_i)]
+\frac2n\sum_{i=1}^n\varepsilon_i[\log(1+\gamma_0 x_i)-\log(1+\gamma x_i)]\nonumber\\
&&+\frac{2}n \sum_{i=1}^n[\log g_0(z_i)-\log \tilde{g}_n(z_i)]
[\log(1+\gamma_0 x_i)-\log(1+\gamma x_i)].\label{Qn_tilde}
\end{eqnarray}
We will establish the convergence of each term on the RHS of \eqref{Qn_tilde}. The almost sure convergence of the first term to $\frac{\sigma^2}{2\beta}$ follows directly from \eqref{quad_err_conv}. 
The second term converges almost surely to~0 by virtue of Part~(i), since 
\begin{eqnarray}
    \left|\frac{1}n \sum_{i=1}^n[\log g_0(z_i)-\log \tilde{g}_n(z_i)]^2\right|&\le& \sup_{z} \left|\log g_0(z)-\log \tilde{g}_n(z)\right|^2.
\end{eqnarray}
From Assumption~2, the third term on the RHS of \eqref{Qn_tilde} can be viewed as the Riemann sum which converges as follows.
\begin{eqnarray*}
  &&\hskip-1in \frac1{\Delta}
\lim_{n\to\infty}\sum_{i=1}^n\left[\log\left(1+\gamma_0 x\left(\frac{i\Delta}{n}\right)\right)-\log\left(1+\gamma x\left(\frac{i\Delta}{n}\right)\right)\right]^2\frac{\Delta}{n}\\&&=
   \frac1{\Delta}
\int_{0}^{\Delta}[\log(1+\gamma_0 x(w)-\log(1+\gamma x(w))]^2dw.
\end{eqnarray*}
As for the fourth term, note that 
\begin{eqnarray}\label{Qn_tilde_4term}
    \left|\frac{2}n \sum_{i=1}^n\varepsilon_i[\log g_0(z_i)-\log \tilde{g}_n(z_i)]\right|&\le&2 \sup_{z} \left|\log g_0(z)-\log \tilde{g}_n(z)\right|\times \frac1n\sum_{i=1}^n|\varepsilon(i\rho_n)|.
\end{eqnarray}
The first factor on the RHS of \eqref{Qn_tilde_4term} converges almost surely to $0$, by virtue of Part (i). Note that
\begin{eqnarray}
 \left|\frac1n\sum_{i=1}^n|\varepsilon(i\rho_n)|-  \sqrt{\frac{\sigma^2 }{2\beta }}\sqrt{\frac{2}{\pi}}\right|&\le& \left|\frac1{n\rho_n}\int_{0}^{n\rho_n}(|\varepsilon(\lceil u/\rho_n \rceil\rho_n)|- |\varepsilon(u)|) du \right|\nonumber\\&&+ \left|\frac1{n\rho_n}\int_{0}^{n\rho_n} |\varepsilon(u)| du-  \sqrt{\frac{\sigma^2 }{2\beta }}\sqrt{\frac{2}{\pi}}\right|  
\end{eqnarray}
A similar argument as the one used to establish the convergence of \eqref{2term_2} shows that the second factor on the RHS of \eqref{Qn_tilde_4term} converges to $\sqrt{\frac{\sigma^2 }{2\beta }}\sqrt{\frac{2}{\pi}}$ almost surely. Thus, the product of the two factors converges to 0 almost surely. 
The fifth term on the RHS of \eqref{Qn_tilde} is identical to the third term on the RHS of~\eqref{Qn_decompose}, which has already been shown to converge almost surely to~$0$. Finally, the sixth term reduces, due to the boundedness of $\log(1+\gamma x_i )$, to 
\begin{eqnarray}
  &&\hskip-1in\left|\frac{2}n \sum_{i=1}^n[\log g_0(z_i)-\log \tilde{g}_n(z_i)]
[\log(1+\gamma_0 x_i)-\log(1+\gamma x_i)]\right|\nonumber\\&\le&
2 \sup_{z} \left|\log g_0(z)-\log \tilde{g}_n(z)\right|\times \sup_{x,\gamma} \left|\log(1+\gamma_0 x_i)-\log(1+\gamma x_i)\right|,\label{sixth_term}
\end{eqnarray}
which converges to 0 almost surely by Part (i). Thus, we have
\begin{equation}
\lim_{n\to\infty}\tilde{Q}_n(\gamma)= \frac{\sigma^2}{2\beta}+\frac1{\Delta}
\int_{0}^{\Delta}[\log(1+\gamma_0 x(w)-\log(1+\gamma x(w))]^2dw,\ \mbox{almost surely}.
\end{equation}
The almost sure convergence of $\tilde{\gamma}_n=\arg\min_{\gamma\in\Gamma}\tilde{Q}_n(\gamma)$ to $\gamma_0$ follows along the lines of the proof of Theorem~2. This completes the proof of Part (ii).
\end{proof}

\begin{proof}[Proof of Theorem~4] {\it Part (i)}: 
By using equation (12) of the main manuscript, we have
\begin{eqnarray}
\tilde{\varepsilon}_i
&=&\varepsilon(i\rho_n)+\mathsf{T}_n(d_i) \mbox{ for } i=1,2,\ldots,n,\label{tilde_eps_sys}
\end{eqnarray}
where 
\begin{equation}
 \mathsf{T}_n(d)= \log (1+\gamma_0 x(d))+\log g_0(z(d))-\log (1+\tilde{\gamma}_n x(d))-\log \tilde{g}_n(z(d)).\label{def_Tn_sys}
\end{equation}
Now, by using \eqref{tilde_eps_sys}, we have
\begin{eqnarray}
\frac1n\sum_{i=1}^{n-1}\tilde{\varepsilon}_i\tilde{\varepsilon}_{i+1}&=& \frac1n\sum_{i=1}^{n-1}\varepsilon(i\rho_n)\varepsilon(i\rho_n+\rho_n)+\frac1n\sum_{i=1}^{n-1}\mathsf{T}_n(d_i)\mathsf{T}_n(d_{i+1})+\frac1n\sum_{i=1}^{n-1}\varepsilon(i\rho_n)\mathsf{T}_n(d_{i+1})\nonumber\\
&&+\frac1n\sum_{i=1}^{n-1}\varepsilon(i\rho_n+\rho_n)\mathsf{T}_n(d_i)\label{tilde_beta_num}\\
\frac1n\sum_{i=1}^{n}\tilde{\varepsilon}^2_i&=& \frac1n\sum_{i=1}^{n}\varepsilon^2(i\rho_n)+\frac1n\sum_{i=1}^{n}\mathsf{T}_n^2(d_i)+\frac2n\sum_{i=1}^{n}\varepsilon(i\rho_n)\mathsf{T}_n(d_i)\label{tilde_beta_den}.
\end{eqnarray}
By using equation (11) of the main manuscript, we have 
\begin{eqnarray}
    \varepsilon(i\rho_n+\rho_n)&=& e^{-\beta \rho_n}\left(\varepsilon(i\rho_n)+ \sigma \int_{i\rho_n}^{i\rho_n+\rho_n}e^{\beta (s-i\rho_n)} dW(s)\right).\label{eps_i+1}
\end{eqnarray}
Therefore, by using \eqref{tilde_beta_num}, \eqref{eps_i+1},  we have
\begin{eqnarray}
  \frac1n\sum_{i=1}^{n-1}\tilde{\varepsilon}_i\tilde{\varepsilon}_{i+1}&=& e^{-\beta \rho_n}\left(\frac1n\sum_{i=1}^{n-1}\varepsilon^2(i\rho_n)+\frac1n\sum_{i=1}^{n-1}\varepsilon(i\rho_n)\sigma \int_{i\rho_n}^{i\rho_n+\rho_n}e^{\beta (s-i\rho_n)} dW(s)\right.\nonumber\\&&\left.\hskip40pt+\frac{e^{\beta\rho_n}}{n}\sum_{i=1}^{n-1}\mathsf{T}_n(d_i)\mathsf{T}_n(d_{i+1})+  \frac{1}{n}\sum_{i=1}^{n-1}\varepsilon(i\rho_n)(e^{\beta\rho_n}\mathsf{T}_n(d_{i+1})+\mathsf{T}_n(d_{i}))\right.\nonumber\\
  &&\left.\hskip40pt+\frac{1}{n}\sum_{i=1}^{n-1}\sigma \int_{i\rho_n}^{i\rho_n+\rho_n}e^{\beta (s-i\rho_n)} dW(s)\mathsf{T}_n(d_{i})
  \right).\mbox{}\quad\label{tilde_beta_num1}
\end{eqnarray}
Now, by using equation (13) of the main manuscript, \eqref{tilde_beta_den} and  \eqref{tilde_beta_num1}, we have
\begin{eqnarray}
   \tilde{\beta}_n-\beta&=& -\frac1{\rho_n}\log\left(1+\frac{\mathsf{U}_{1n}}{\mathsf{D}_{n}}+\frac{\mathsf{U}_{2n}}{\mathsf{D}_{n}}+\frac{\mathsf{U}_{3n}}{\mathsf{D}_{n}}+\frac{\mathsf{U}_{4n}}{\mathsf{D}_{n}}+\frac{\mathsf{U}_{5n}}{\mathsf{D}_{n}}\right)\label{tilde_beta_gap}
,\end{eqnarray}
where
\begin{eqnarray}
  \mathsf{U}_{1n}&=&\frac1n\sum_{i=1}^{n-1}\varepsilon(i\rho_n)\sigma \int_{i\rho_n}^{i\rho_n+\rho_n}e^{\beta (s-i\rho_n)} dW(s)\label{def_U1n}\\
  \mathsf{U}_{2n}&=& \frac{1}{n}\sum_{i=1}^{n-1}(e^{\beta\rho_n}\mathsf{T}_n(d_i)\mathsf{T}_n(d_{i+1})-\mathsf{T}_n^2(d_i))\label{def_U2n}\\
  \mathsf{U}_{3n}&=& \frac{1}{n}\sum_{i=1}^{n-1}\varepsilon(i\rho_n)(e^{\beta\rho_n}\mathsf{T}_n(d_{i+1})-\mathsf{T}_n(d_{i}))\label{def_U3n}\\
  \mathsf{U}_{4n}&=& \frac{1}{n}\sum_{i=1}^{n-1}\sigma \int_{i\rho_n}^{i\rho_n+\rho_n}e^{\beta (s-i\rho_n)} dW(s)\mathsf{T}_n(d_{i})\label{def_U4n}\\
  \mathsf{U}_{5n}&=& -\frac{1}{n}\varepsilon^2(n\rho_n)-\frac1n\mathsf{T}_n^2(d_n)-\frac2n\varepsilon(n\rho_n)\mathsf{T}_n(d_n)\label{def_U5n}\\
  \mathsf{D}_{n}&=& \frac1n\sum_{i=1}^{n}\varepsilon^2(i\rho_n)+\frac1n\sum_{i=1}^{n}\mathsf{T}_n^2(d_i)+\frac1n\sum_{i=1}^n \varepsilon(i\rho_n)\mathsf{T}_n(d_i).\label{def_Dn}
\end{eqnarray}
Now by using \eqref{tilde_beta_gap}, we have
\begin{eqnarray}
    \left|\tilde{\beta}_n-\beta\right|&\le& \frac{\frac1{\rho_n}|\mathsf{U}_{1n}|}{\mathsf{D}_{n}}+\frac{\frac1{\rho_n}|\mathsf{U}_{2n}|}{\mathsf{D}_{n}}+\frac{\frac1{\rho_n}|\mathsf{U}_{3n}|}{\mathsf{D}_{n}}+\frac{\frac1{\rho_n}|\mathsf{U}_{4n}|}{\mathsf{D}_{n}}+\frac{\frac1{\rho_n}|\mathsf{U}_{5n}|}{\mathsf{D}_{n}}.   \label{tilde_beta_gap_bound}
\end{eqnarray}
To complete the proof, we show that each term on the RHS of \eqref{tilde_beta_gap_bound} converges to 0. 
Note that $\mathsf{T}_n(d)$ as given in  \eqref{def_Tn_sys} is a Lipschitz continuous function by using Assumptions~1 and~2. Therefore, by using Theorem~3, for any $\delta>0$, there exists $N_0$ (does not depend on $d_i$) such that
\begin{eqnarray}
 |\mathsf{T}_n(d_i)|\le \delta \mbox{ for all } n\ge N_0 \mbox{ 
 almost surely uniformly over } d_i. \label{Tn_bound}    
\end{eqnarray} 
Therefore, by using \eqref{quad_err_conv}, \eqref{Tn_bound} and Lemma~\ref{lem:OU}, we have
\begin{eqnarray}
\lim_{n\to\infty}\mathsf{D}_n = \frac{\sigma^2}{2\beta} \qquad \mbox{ almost surely.}    
\end{eqnarray}
Now by using Assumption~3, $E\left(\frac1{\rho_n} \mathsf{U}_{1n}\right)=0$. Now, note that 
\begin{eqnarray}
   Var\left( \frac{1}{\rho_n}\mathsf{U}_{1n}\right)&=& \frac1{n^2\rho_n^2} \sum_{i=1}^{n-1} Var\left(\varepsilon(i\rho_n)\sigma \int_{i\rho_n}^{i\rho_n+\rho_n}e^{\beta (s-i\rho_n)} dW(s)\right)\nonumber\\ &&+\frac2{n^2\rho_n^2} \sum_{i=1}^{n-1}\sum_{j=1}^{i-1} Cov\left(\varepsilon(i\rho_n)\sigma \int_{i\rho_n}^{i\rho_n+\rho_n}e^{\beta (s-i\rho_n)} dW(s), \right.\nonumber\\&&\hskip1.5in\left.\ \varepsilon(j\rho_n)\sigma \int_{j\rho_n}^{j\rho_n+\rho_n}e^{\beta (s-j\rho_n)} dW(s)\right).\label{var_u1n}
\end{eqnarray}
For $j<i$, by using the independence of Wiener integrals, the second term on the RHS of \eqref{var_u1n} vanishes. Therefore, by using Assumption~3 and Wiener integration, \eqref{var_u1n} reduces as follows.
\begin{eqnarray}
   Var\left( \frac{1}{\rho_n}\mathsf{U}_{1n}\right)&=& \frac{(n-1)}{n^2\rho_n^2}  \frac{\sigma^2}{2\beta} \left(\frac{e^{2\beta\rho_n}-1}{2\beta}\right)=o\left(\frac{1}{n\rho_n^2}\right).
\end{eqnarray}
Therefore, we have 
\begin{eqnarray}
    \frac1{\rho_n} \mathsf{U}_{1n}\to 0 \mbox{ in probability}. 
\end{eqnarray}
Since $d_{i+1}-d_i=\frac{\Delta}{n}$, by using Lipschitz continuity of $\mathsf{T}_n$ and \eqref{Tn_bound}, we have
\begin{eqnarray}
    \left|\frac{T_n(d_i)-e^{\beta\rho_n}T_n(d_{i+1})}{\rho_n}\right|= o(1) \mbox{ almost surely uniformly over } d_i\mbox{'s}.\label{Tn_variation}
\end{eqnarray}
Now, by using \eqref{Tn_variation}, we have
\begin{eqnarray}
  \frac{1}{\rho_n}|\mathsf{U}_{2n}|&=&o(1) \mbox{ almost surely}. 
\end{eqnarray}
Using a similar line of argument and Lemma~\ref{lem:OU}, we have 
\begin{eqnarray}
  \frac{1}{\rho_n}|\mathsf{U}_{3n}|&=&o(1) \mbox{ almost surely}.
\end{eqnarray}   
Note that we have
\begin{eqnarray}
    \frac1{\rho_n}\mathsf{U}_{4n}&=& \frac1{\sqrt{n\rho_n}} \frac{1}{\sqrt{n}}\sum_{i=1}^{n-1}  \sqrt{\frac{e^{2\beta \rho_n}-1}{2\beta\rho_n}}\left( \frac{\sigma}{\sqrt{\frac{e^{2\beta \rho_n}-1}{2\beta\rho_n}}}\int_{i\rho_n}^{i\rho_n+\rho_n}e^{\beta (s-i\rho_n)} dW(s)\right)\mathsf{T}_n(d_{i}).\qquad\mbox{}
\end{eqnarray}
Since the Wiener integrals $ \frac{\sigma}{\sqrt{\frac{e^{2\beta \rho_n}-1}{2\beta\rho_n}}}\int_{i\rho_n}^{i\rho_n+\rho_n}e^{\beta (s-i\rho_n)} dW(s)$  are independent zero mean normal random variables with variance $\sigma^2$  and $\sqrt{\frac{e^{2\beta \rho_n}-1}{2\beta\rho_n}} \mathsf{T}_n(d_i)=o(1)$ almost surely uniformly over $d_i$s, we have
\begin{eqnarray}
    \frac1{\rho_n} \mathsf{U}_{4n}\to 0 \mbox{ in probability}. 
\end{eqnarray}
Now, by using equation (11) of the main manuscript, \eqref{Tn_bound} and Lemma~\ref{lem:OU}, we have
\begin{eqnarray}
  \frac{1}{\rho_n}|\mathsf{U}_{5n}|&=&O\left(\frac{1}{n\rho_n}\right) \mbox{ almost surely}.
\end{eqnarray}   
This completes the proof.

\noindent
{\it Part (ii)}: The proof follows using Part (i) and \eqref{quad_err_conv}.
\end{proof}

\section{Additional bootstrap simulation results}\label{s.s2}
In this section, we report the results of additional simulation study as indicated in section~5.3 of the main manuscript. The simulation set up is the same as described in section~5.1 of the main manuscript.  

We first report the distribution of $0.025$ and $0.975$ bootstrap quantiles of $\tilde\gamma$. 
The left and right columns of Figure~\ref{quantile_hist} show the histograms of these quantiles along with  a vertical line indicating the position of $\gamma_0$. 
The rows of Figure~\ref{quantile_hist} correspond to sample sizes $n=750, \ 1500, \ 3000$ and $6000$, respectively.
The empirical percentages of the occasions where the lower qunatile lies above $\gamma_0$ or the upper quantile lies below $\gamma_0$ are indicated on the plots. With increasing sample size, these empirical percentages reduce and the histograms are more concentrated around values away from $\gamma_0$.   

\begin{figure}[!h]\label{quantile_hist}
\centering
\begin{tabular}{cc}
2.5\% bootstrap quantile of $\gamma$ & 97.5\% bootstrap quantile of $\gamma$ \\
\includegraphics[width=2.5in,height=1.5in]{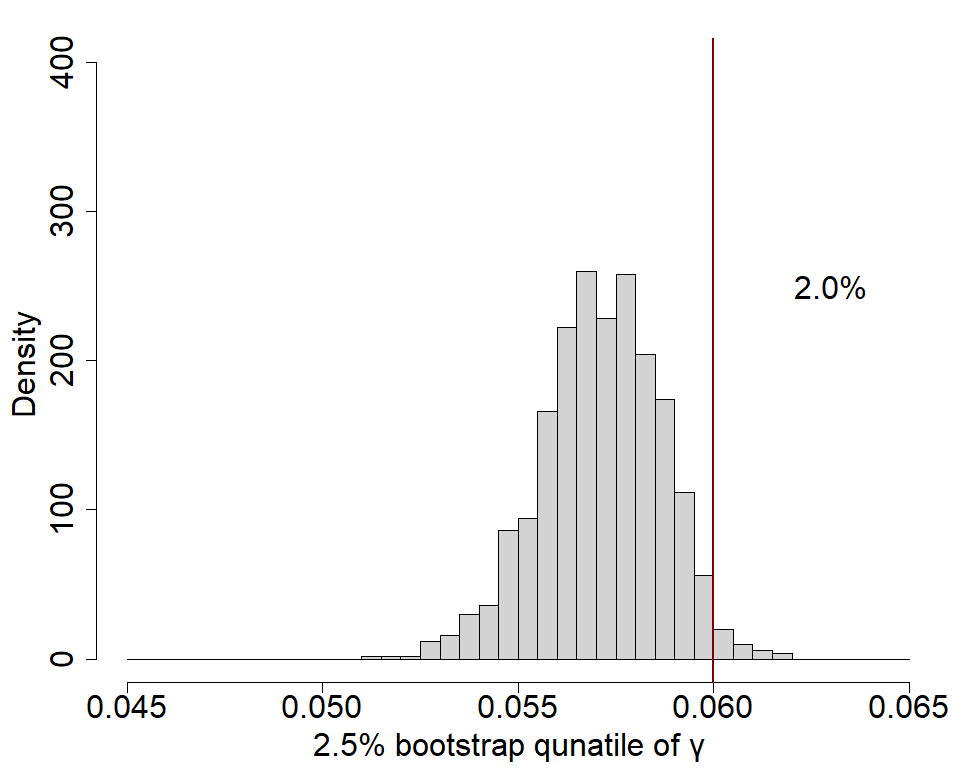}
&
\includegraphics[width=2.5in,height=1.5in]
{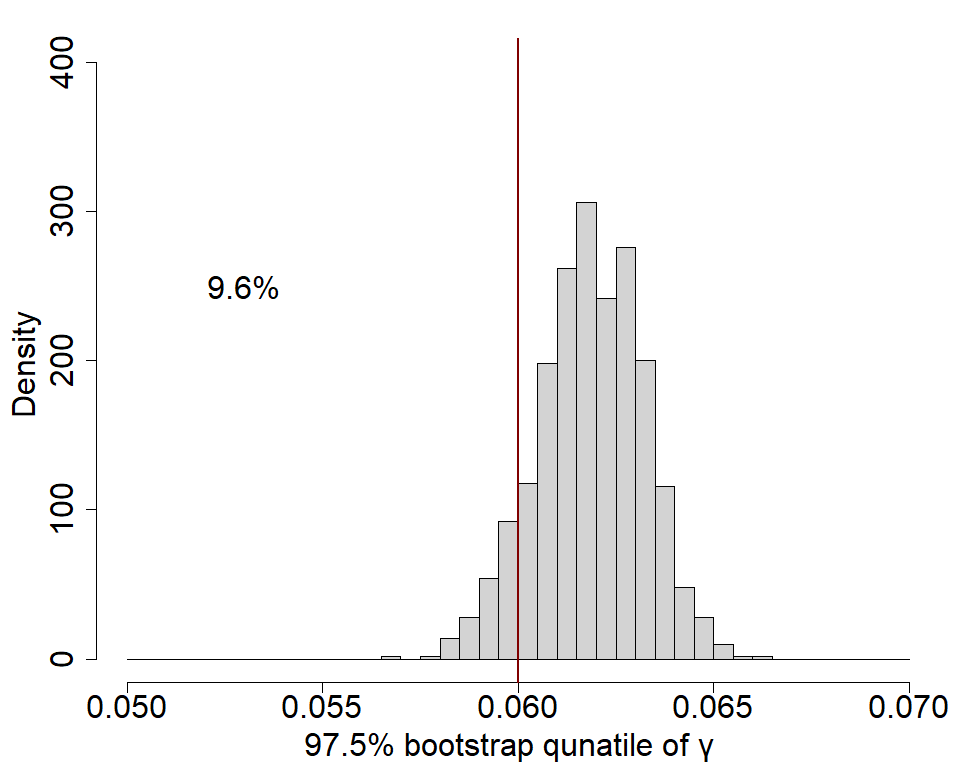}\\
\includegraphics[width=2.5in,height=1.5in]{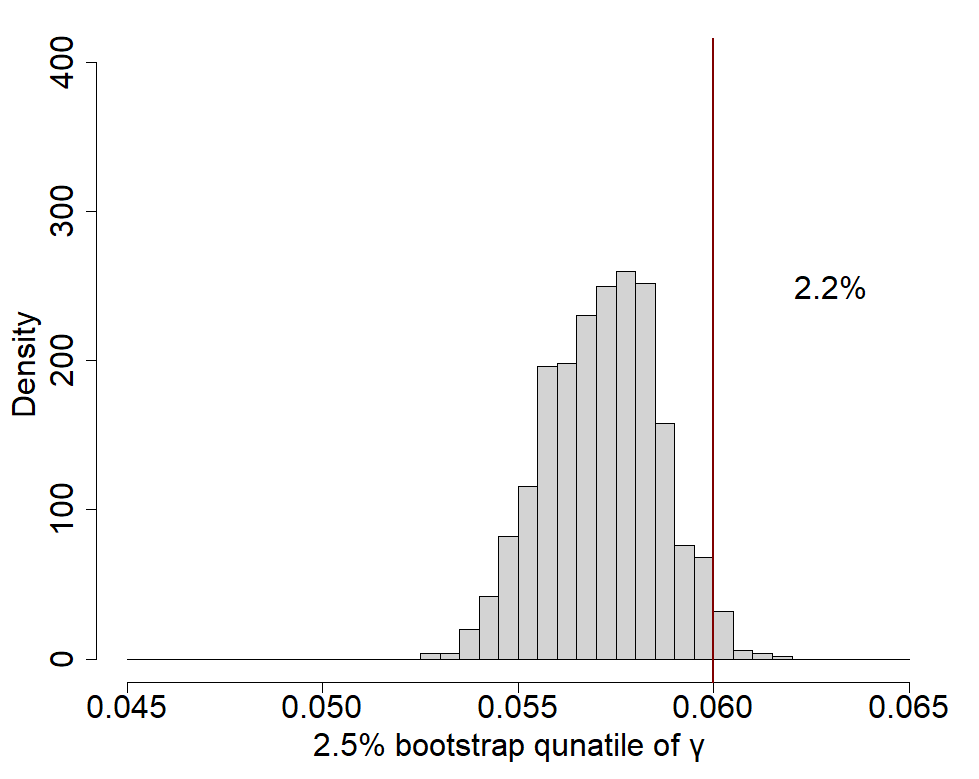} &
\includegraphics[width=2.5in,height=1.5in]
{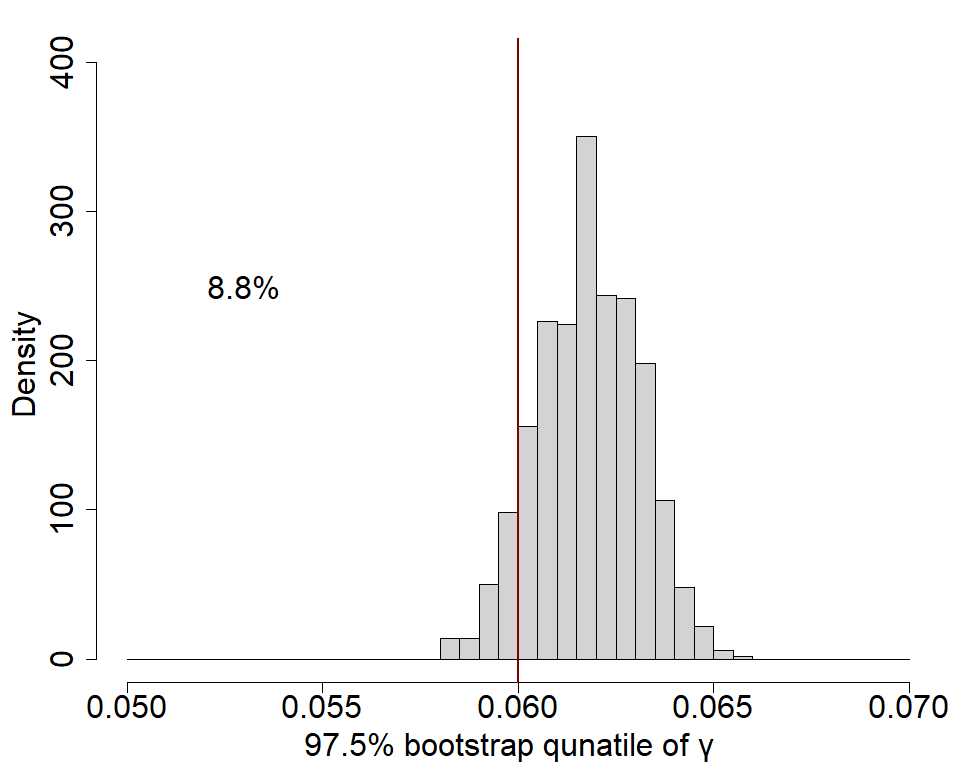}\\
\includegraphics[width=2.5in,height=1.5in]{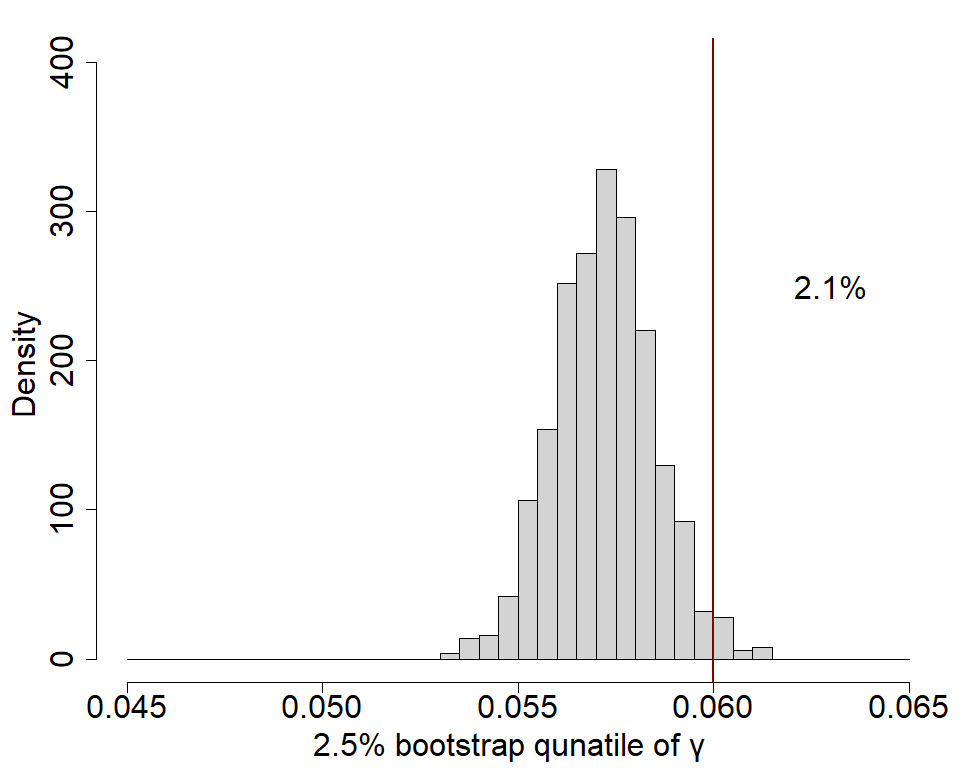} &
\includegraphics[width=2.5in,height=1.5in]
{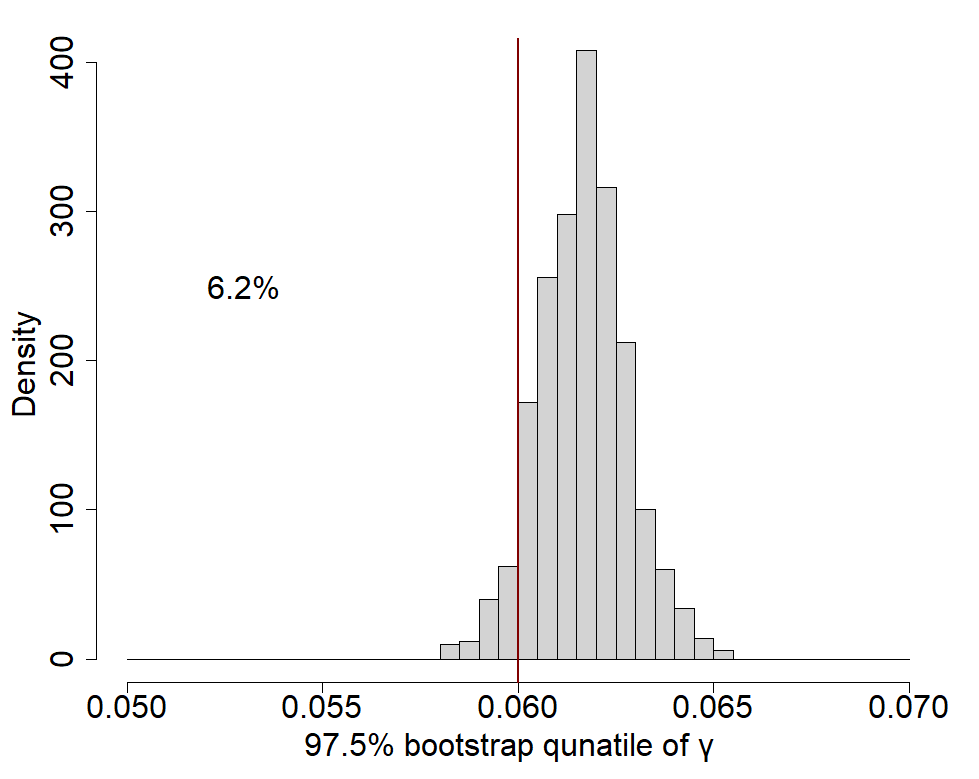}\\
\includegraphics[width=2.5in,height=1.5in]{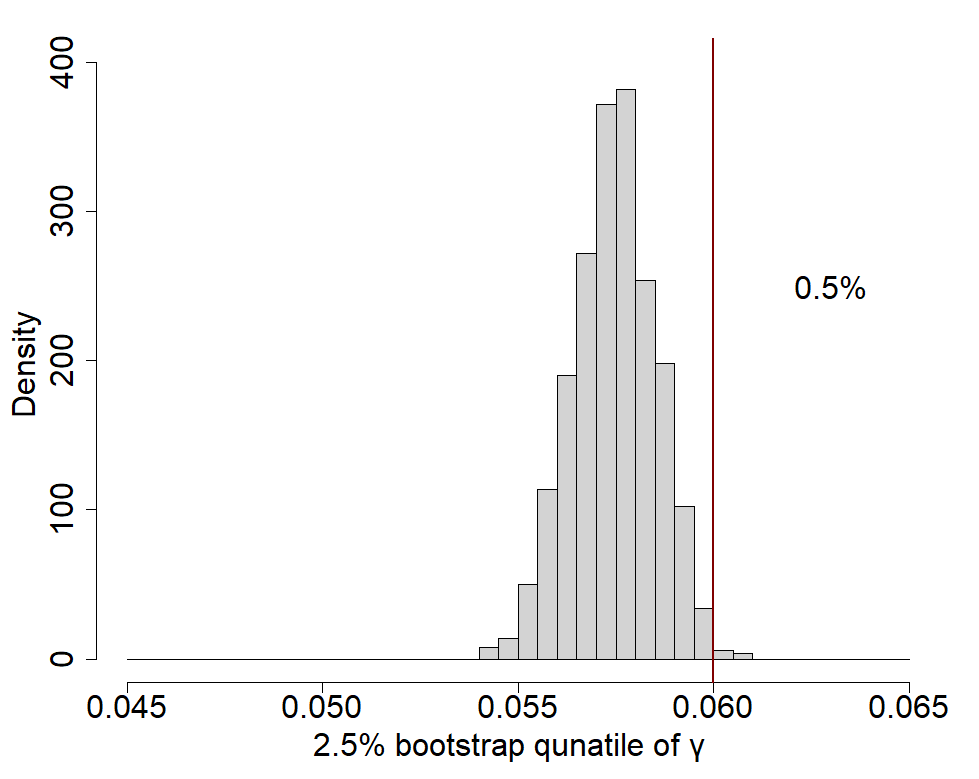} &
\includegraphics[width=2.5in,height=1.5in]
{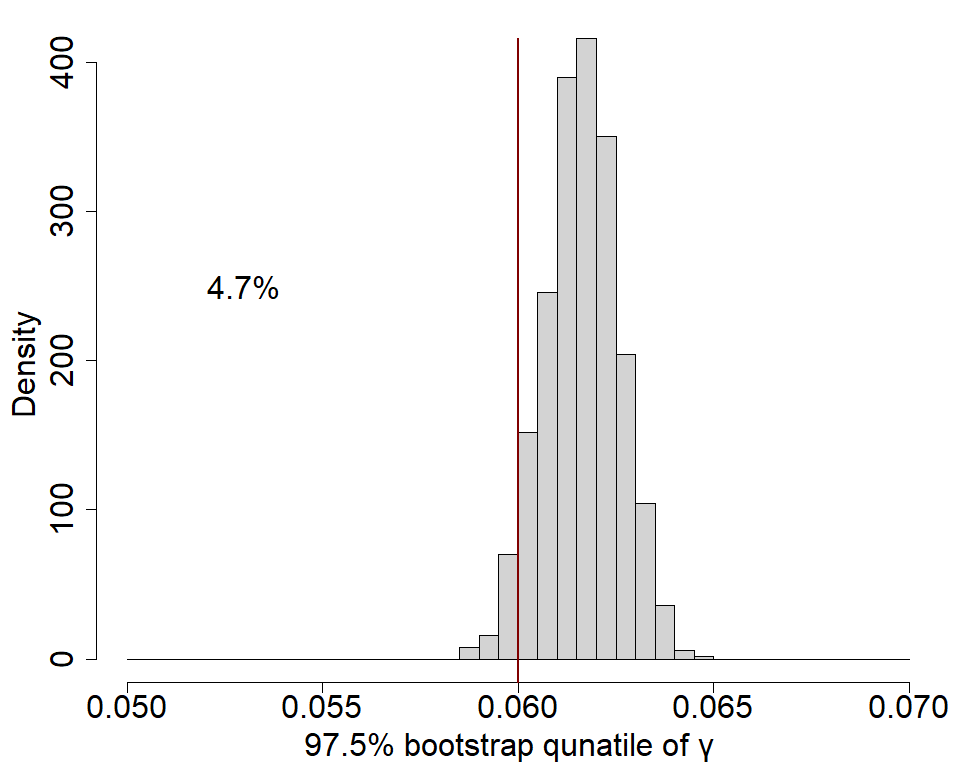}
\end{tabular}
\caption{Histogram of 0.025 bootstrap qunatile  (left) and histogram of 0.975 bootstrap quantile (right) of $\tilde\gamma$,  along with $\gamma_0$ (maroon vertical line). The rows from top to bottom correspond to sample sizes, $n= 750, 
\ 1500, \ 3000$ and $6000$, respectively.}
\label{f4}
\end{figure}

\begin{figure}[ht!]
\centering
\begin{tabular}{cc}
Average  CI of log(g) & se of log(g) \\
\includegraphics[width=2.5in,height=1.5in]{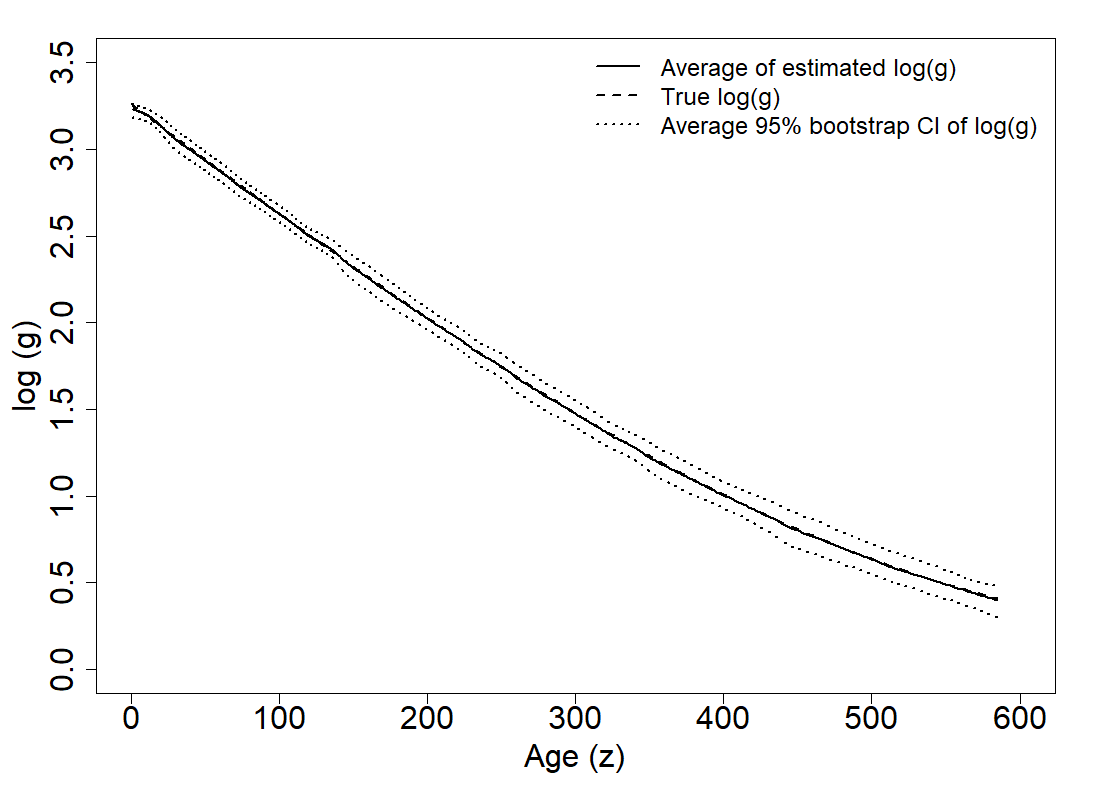} &
\includegraphics[width=2.5in,height=1.5in]{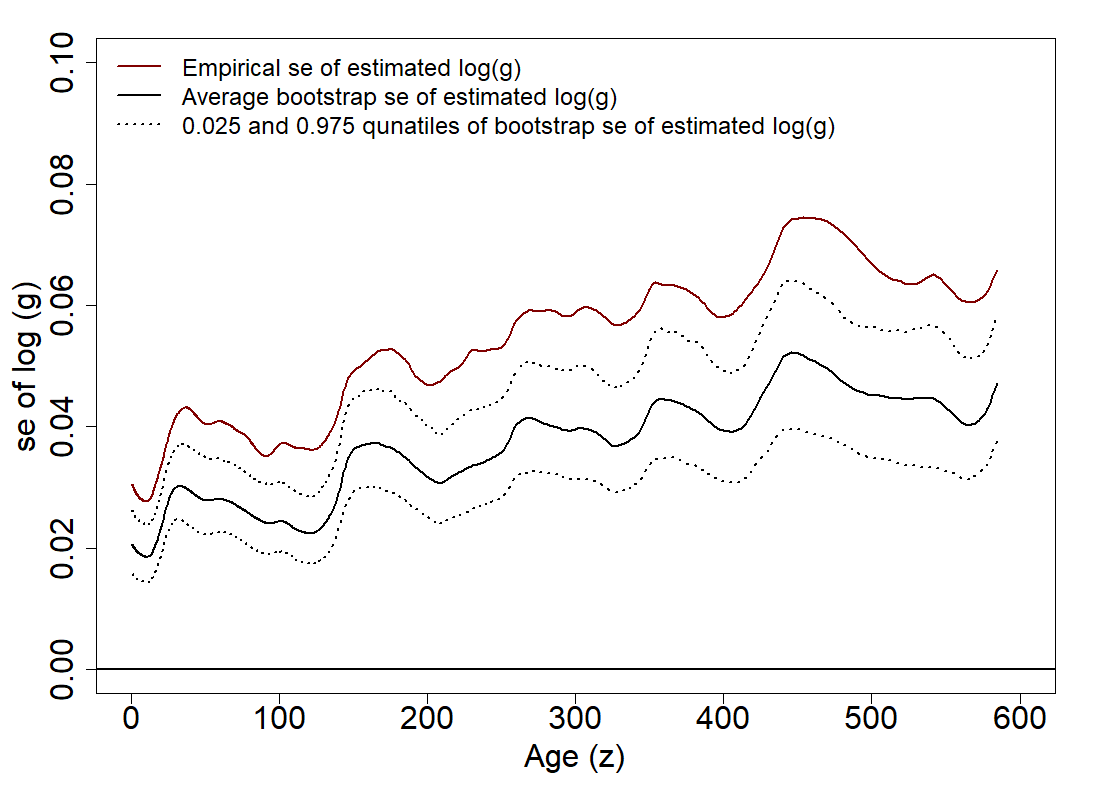}\\
\includegraphics[width=2.5in,height=1.5in]{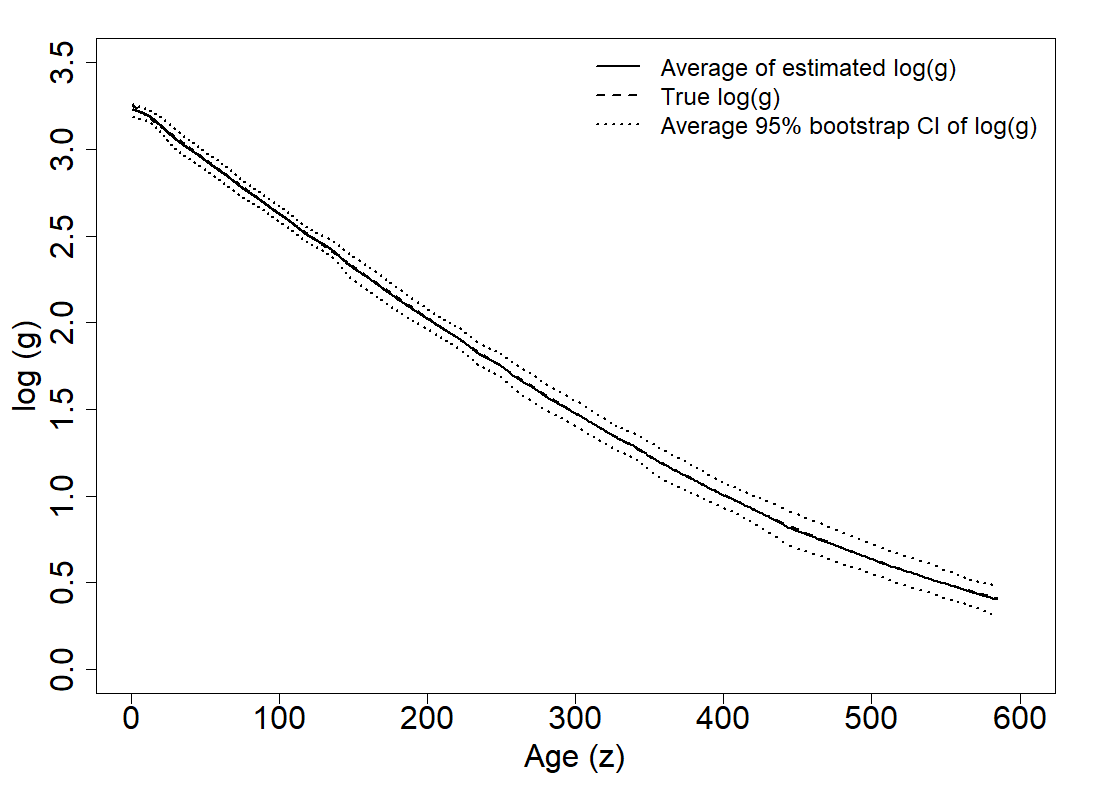} &
\includegraphics[width=2.5in,height=1.5in]
{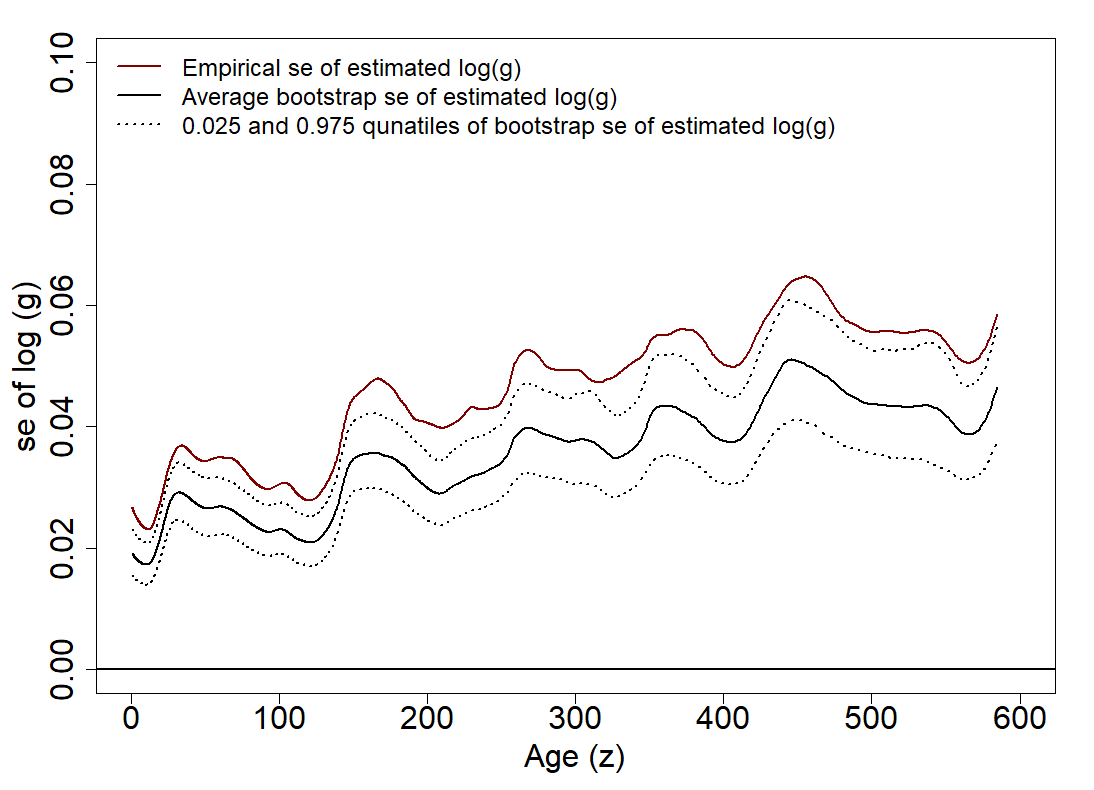}\\
\includegraphics[width=2.5in,height=1.5in]{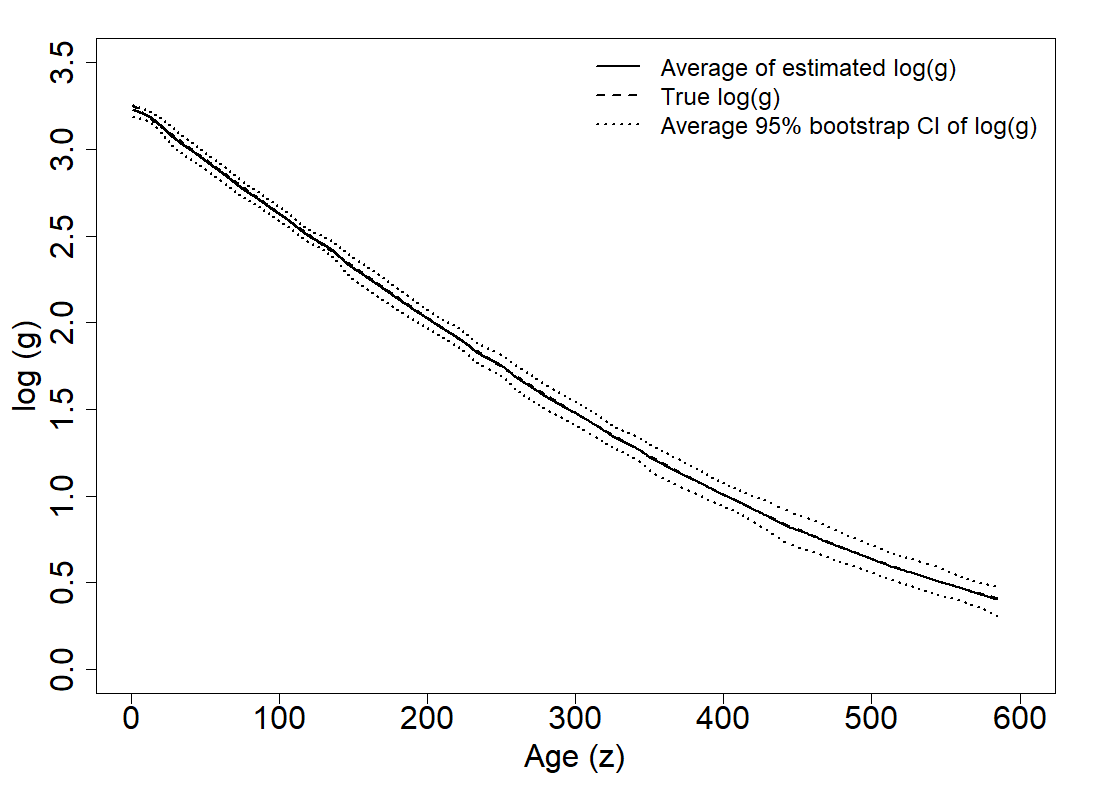} &
\includegraphics[width=2.5in,height=1.5in]
{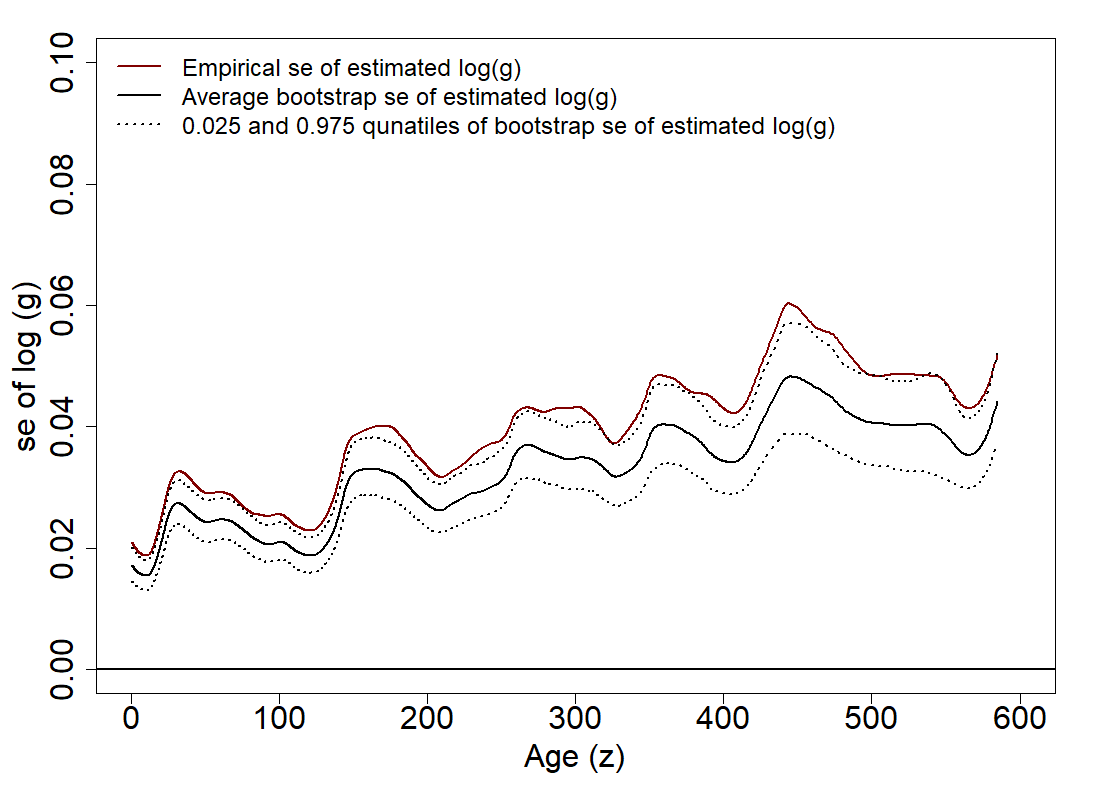}\\
\includegraphics[width=2.5in,height=1.5in]{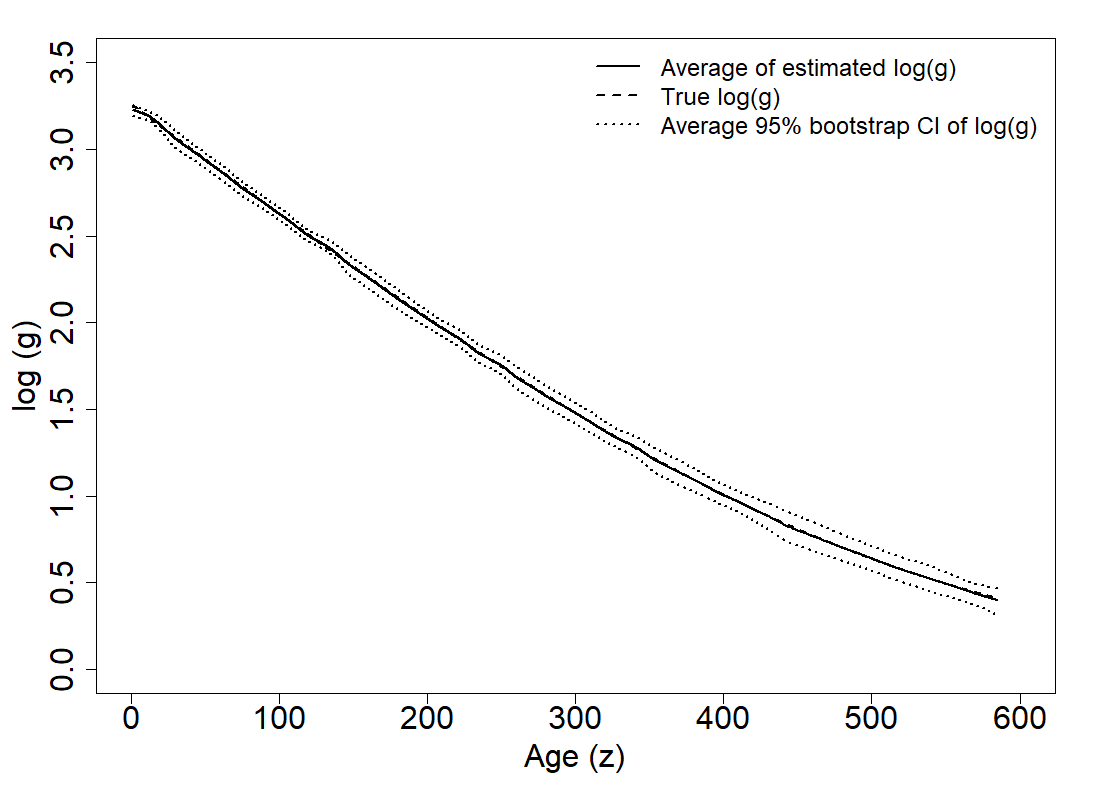} &
\includegraphics[width=2.5in,height=1.5in]
{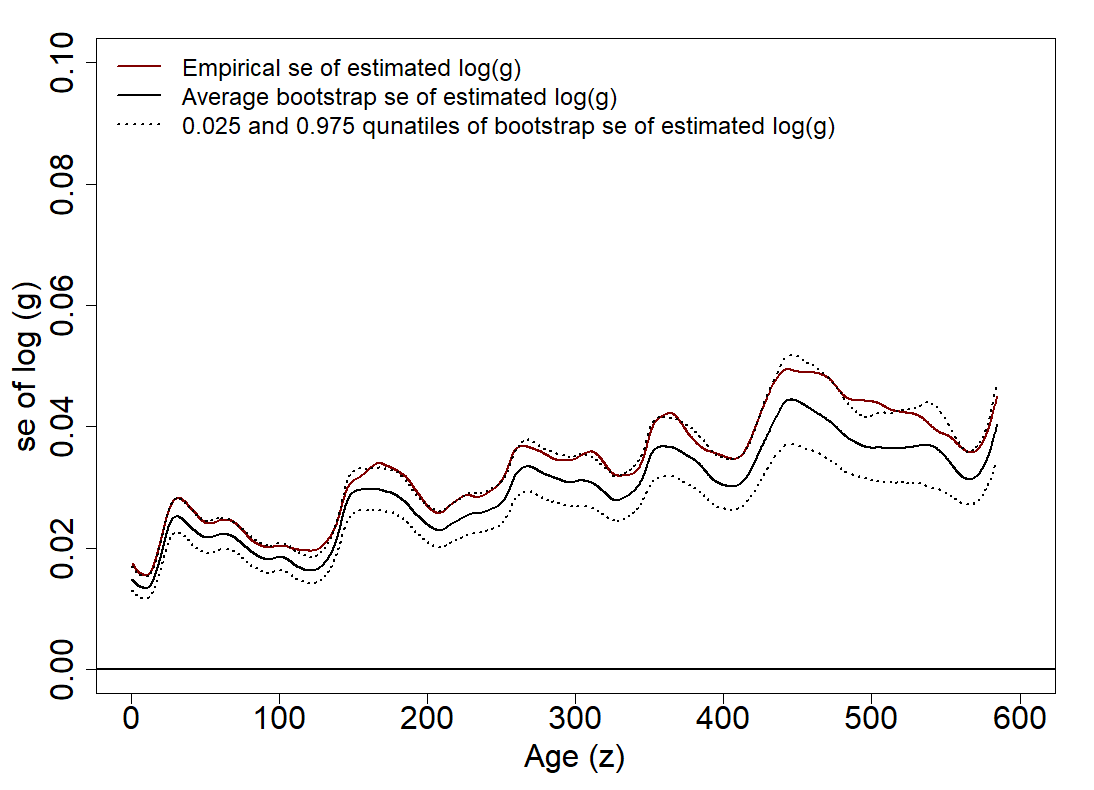}
\end{tabular}
\caption{Average of bootstrap confidence limits of $\log\tilde g_n$ (left) and the bootstrap standard errors of $\log\tilde` g_n$ (right). The rows from top to bottom correspond to sample size, $n= 750, 
\ 1500, \ 3000$ and $6000$, respectively}
\label{f4}
\end{figure}

The left panel of Figure~\ref{f4} shows the plots of $\log g_0$ as a function of age, the average estimate (across runs) of $\log\tilde g_n$, and the average of $95\%$ bootstrap confidence limits of $\log g_0$. The right panel of Figure~\ref{f4} shows the plots of the empirical pointwise standard errors (across runs) of $\log\tilde g_n$ in solid maroon line as a function of age, and the average, the $0.025$ quantile and the $0.975$ quantile (across runs) of the bootstrap pointwise standard errors of $\log\tilde g_n$. The rows of Figure~\ref{f4} correspond to sample sizes $n=750, \ 1500, \ 3000$ and $6000$, respectively. It is seen that $\log g_0$ lies in the average of bootstrap confidence limits for all the sample sizes. The average width of the bootstrap confidence limits also seen to be smaller for larger sample sizes. The lower and upper quantiles of standard errors of $\log\tilde g_n$ contains the empirical pointwise standard errors (across runs) for sample size 3000 and 6000.

\bibliographystyle{Chicago}

\bibliography{AAR_Arxiv}

\begin{thebibliography}{}

\bibitem[\protect\citeauthoryear{A{\"\i}t-Sahalia}{A{\"\i}t-Sahalia}{2002}]{ait2002}
A{\"\i}t-Sahalia, Y. (2002).
\newblock Maximum likelihood estimation of discretely sampled diffusions: {A}
  closed-form approximation approach.
\newblock {\em Econometrica\/}~{\em 70\/}(1), 223--262.

\bibitem[\protect\citeauthoryear{Algarra, Nieto, Ramos, Eiras-Barca, Trigo, and
  Gimeno}{Algarra et~al.}{2020}]{Algarra2020}
Algarra, I., R.~Nieto, A.~M. Ramos, J.~Eiras-Barca, R.~M. Trigo, and L.~Gimeno
  (2020).
\newblock Significant increase of global anomalous moisture uptake feeding
  landfalling atmospheric rivers.
\newblock {\em Nature {C}ommunications\/}~{\em 11\/}(1), 2041--1723.

\bibitem[\protect\citeauthoryear{Allan and Soden}{Allan and
  Soden}{2008}]{Allan2008}
Allan, R.~P. and B.~J. Soden (2008).
\newblock Atmospheric warming and the amplification of precipitation extremes.
\newblock {\em Science\/}~{\em 321\/}(5895), 1481--1484.

\bibitem[\protect\citeauthoryear{{Bazin}, {Landais}, {Lemieux-Dudon}, {Toy\'{e}
  Mahamadou Kele}, {Veres}, {Parrenin}, {Martinerie}, {Ritz}, {Capron},
  {Lipenkov}, {Loutre}, {Raynaud}, {Vinther}, {Svensson}, {Rasmussen},
  {Severi}, {Blunier}, {Leuenberger}, {Fischer}, {Masson-Delmotte},
  {Chappellaz}, and {Wolff}}{{Bazin} et~al.}{2013}]{bazin2013}
{Bazin}, L., A.~{Landais}, B.~{Lemieux-Dudon}, H.~{Toy\'{e} Mahamadou Kele},
  D.~{Veres}, F.~{Parrenin}, P.~{Martinerie}, C.~{Ritz}, E.~{Capron}, V.~Y.
  {Lipenkov}, M.-F. {Loutre}, D.~{Raynaud}, B.~M. {Vinther}, A.~M. {Svensson},
  S.~O. {Rasmussen}, M.~{Severi}, T.~{Blunier}, M.~C. {Leuenberger},
  H.~{Fischer}, V.~{Masson-Delmotte}, J.~A. {Chappellaz}, and E.~W. {Wolff}
  (2013).
\newblock An optimized multi-proxy, multi-site {A}ntarctic ice and gas orbital
  chronology ({AICC2012}): 120-800 ka.
\newblock {\em Climate of the Past\/}~{\em 9\/}(4), 1715--1731.

\bibitem[\protect\citeauthoryear{Bradley}{Bradley}{2015}]{bradley}
Bradley, R.~S. (2015).
\newblock {\em Paleoclimatology: Reconstructing Climates of the Quaternary\/}
  (Third ed.).
\newblock Academic Press, Oxford.

\bibitem[\protect\citeauthoryear{Casado, Landais, Picard, Muench, Laepple,
  Stenni, Dreossi, Ekaykin, Arnaud, Genthon, Touzeau, Masson-Delmotte, and
  Jouzel}{Casado et~al.}{2018}]{Casado}
Casado, M., A.~Landais, G.~Picard, T.~Muench, T.~Laepple, B.~Stenni,
  G.~Dreossi, A.~Ekaykin, L.~Arnaud, C.~Genthon, A.~Touzeau,
  V.~Masson-Delmotte, and J.~Jouzel (2018).
\newblock Archival processes of the water stable isotope signal in east
  {A}ntarctic ice cores.
\newblock {\em The Cryosphere\/}~{\em 12}, 1745–--1766.

\bibitem[\protect\citeauthoryear{Dai, Song, Barber, and Raskutti}{Dai
  et~al.}{2020}]{dai2020bias}
Dai, R., H.~Song, R.~F. Barber, and G.~Raskutti (2020).
\newblock The bias of isotonic regression.
\newblock {\em Electronic journal of statistics\/}~{\em 14\/}(1), 801--834.

\bibitem[\protect\citeauthoryear{DiCiccio and Efron}{DiCiccio and
  Efron}{1996}]{Efron1996}
DiCiccio, T.~J. and B.~Efron (1996).
\newblock Bootstarp confidence interval.
\newblock {\em Statistical Science\/}~{\em 11}, 189--228.

\bibitem[\protect\citeauthoryear{Efron and Tibshirani}{Efron and
  Tibshirani}{1993}]{Efron1993}
Efron, B. and R.~J. Tibshirani (1993).
\newblock {\em An Introduction to Bootstrap}.
\newblock Florida, USA: Chapman and Hall CRC.

\bibitem[\protect\citeauthoryear{Fudge, Markle, Cuffey, Buizert, Taylor, Steig,
  Waddington, Conway, and Koutnik}{Fudge et~al.}{2016}]{Fudge2016}
Fudge, T.~J., B.~R. Markle, K.~M. Cuffey, C.~Buizert, K.~C. Taylor, E.~J.
  Steig, E.~D. Waddington, H.~Conway, and M.~Koutnik (2016).
\newblock Variable relationship between accumulation and temperature in {W}est
  {A}ntarctica for the past 31,000 years.
\newblock {\em Geophysical Research Letters\/}~{\em 43\/}(8), 3795--3803.

\bibitem[\protect\citeauthoryear{Ghosh and Chaudhuri}{Ghosh and
  Chaudhuri}{2004}]{GhoshChau2004}
Ghosh, A.~K. and P.~Chaudhuri (2004).
\newblock Optimal smoothing in kernel discriminant analysis.
\newblock {\em Statistica Sinica\/}~{\em 14\/}(2), 457--483.

\bibitem[\protect\citeauthoryear{Greve and Blatter}{Greve and
  Blatter}{2009}]{greve2009}
Greve, R. and H.~Blatter (2009).
\newblock {\em Dynamics of Ice Sheets and Glaciers}.
\newblock Berlin: Springer.

\bibitem[\protect\citeauthoryear{Guntuboyina and Sen}{Guntuboyina and
  Sen}{2018}]{guntuboyina2018}
Guntuboyina, A. and B.~Sen (2018).
\newblock Nonparametric shape-restricted regression.
\newblock {\em Statistical Science\/}~{\em 33\/}(4), 568--594.

\bibitem[\protect\citeauthoryear{H{\"a}rdle}{H{\"a}rdle}{1990}]{Hardle1990}
H{\"a}rdle, W. (1990).
\newblock {\em Applied Nonparametric Regression}.
\newblock Cambridge, England: Cambridge University Press.

\bibitem[\protect\citeauthoryear{Jennrich}{Jennrich}{1969}]{jennrich1969}
Jennrich, R.~I. (1969).
\newblock Asymptotic properties of non-linear least squares estimators.
\newblock {\em The Annals of Mathematical Statistics\/}~{\em 40\/}(2),
  633--643.

\bibitem[\protect\citeauthoryear{{Jouzel} and {Masson-Delmotte}}{{Jouzel} and
  {Masson-Delmotte}}{2007}]{Jouzel2007}
{Jouzel}, J. and V.~{Masson-Delmotte} (2007).
\newblock {EPICA Dome C Ice Core 800KYr deuterium data and temperature
  estimates}.
\newblock Supplement to: Jouzel, Jean et al. (2007): Orbital and millennial
  Antarctic climate variability over the past 800,000 years. Science,
  317(5839), 793-797.

\bibitem[\protect\citeauthoryear{Kessler}{Kessler}{1997}]{kessler1997}
Kessler, M. (1997).
\newblock Estimation of an ergodic diffusion from discrete observations.
\newblock {\em Scandinavian Journal of Statistics\/}~{\em 24\/}(2), 211--229.

\bibitem[\protect\citeauthoryear{Khasminskii}{Khasminskii}{2012}]{Hasminskii1980}
Khasminskii, R. (2012).
\newblock {\em Stochastic Stability of Differential Equations\/} (Second ed.).
\newblock Berlin: Springer-Verlag.

\bibitem[\protect\citeauthoryear{Kr\"{a}mer}{Kr\"{a}mer}{1983}]{Kramer83}
Kr\"{a}mer, W. (1983).
\newblock High correlation among errors and the efficiency of ordinary least
  squares in linear models.
\newblock {\em Statistische Hefte\/}~{\em 25}, 135–--142.

\bibitem[\protect\citeauthoryear{Lahiri}{Lahiri}{2003}]{Lahiri2003}
Lahiri, S.~N. (2003).
\newblock {\em Resampling Methods for Dependent Data}.
\newblock New York: Springer.

\bibitem[\protect\citeauthoryear{Lu and Park}{Lu and Park}{2019}]{lu2019}
Lu, Y. and J.~Y. Park (2019).
\newblock Estimation of longrun variance of continuous time stochastic process
  using discrete sample.
\newblock {\em Journal of Econometrics\/}~{\em 210\/}(2), 236--267.

\bibitem[\protect\citeauthoryear{Luss, Rosset, and Shahar}{Luss
  et~al.}{2012}]{Luss2012}
Luss, R., S.~Rosset, and M.~Shahar (2012).
\newblock Efficient regularized isotonic regression with application to
  gene–gene interaction search.
\newblock {\em Annals of Applied Statistics\/}~{\em 6\/}(1), 253--–283.

\bibitem[\protect\citeauthoryear{Markle and Steig}{Markle and
  Steig}{2022}]{Markle}
Markle, B.~R. and E.~J. Steig (2022).
\newblock Improving temperature reconstructions from ice-core water-isotope
  records.
\newblock {\em Climate of the Past\/}~{\em 18}, 1321–--1368.

\bibitem[\protect\citeauthoryear{Masson-Delmotte, Zhai, Pirani, Connors,
  P\'{e}an, Berger, Caud, Chen, Goldfarb, Gomis, Huang, Leitzell, Lonnoy,
  Matthews, Maycock, Waterfield, Yelek\c{c}i, Yu, and Zhou}{Masson-Delmotte
  et~al.}{2021}]{IPCC}
Masson-Delmotte, V., P.~Zhai, A.~Pirani, S.~Connors, C.~P\'{e}an, S.~Berger,
  N.~Caud, Y.~Chen, L.~Goldfarb, M.~Gomis, M.~Huang, K.~Leitzell, E.~Lonnoy,
  J.~Matthews, T.~Maycock, T.~Waterfield, O.~Yelek\c{c}i, R.~Yu, and B.~Zhou
  (2021).
\newblock {\em Climate Change 2021: The Physical Science Basis}.
\newblock Cambridge: Cambridge University Press.
\newblock Working Group I Contribution to the IPCC Sixth Assessment Report.

\bibitem[\protect\citeauthoryear{M\"unch and Laepple}{M\"unch and
  Laepple}{2018}]{Munch}
M\"unch, T. and T.~Laepple (2018).
\newblock What climate signal is contained in decadal- to centennial-scale
  isotope variations from {A}ntarctic ice cores?
\newblock {\em Climate of the Past\/}~{\em 14}, 2053–--2070.

\bibitem[\protect\citeauthoryear{Niculescu-Mizil and Caruana}{Niculescu-Mizil
  and Caruana}{2005}]{Nicules2005}
Niculescu-Mizil, A. and R.~Caruana (2005).
\newblock Predicting good probabilities with supervised learning.
\newblock In {\em ICML'05: Proceedings of the 22nd International Conference on
  Machine learning}.

\bibitem[\protect\citeauthoryear{Parrenin, Barnola, Beer, Blunier, Castellano,
  Chappellaz, Dreyfus, Fischer, Fujita, Jouzel, Kawamura, Lemieux-Dudon,
  Loulergue, Masson-Delmotte, Narcisi, Petit, Raisbeck, Raynaud, Ruth,
  Schwander, Severi, Spahni, Steffensen, Svensson, Udisti, Waelbroeck, and
  Wolff}{Parrenin et~al.}{2007}]{Parrenin2007}
Parrenin, F., J.-M. Barnola, J.~Beer, T.~Blunier, E.~Castellano, J.~Chappellaz,
  G.~Dreyfus, H.~Fischer, S.~Fujita, J.~Jouzel, K.~Kawamura, B.~Lemieux-Dudon,
  L.~Loulergue, V.~Masson-Delmotte, B.~Narcisi, J.-R. Petit, G.~Raisbeck,
  D.~Raynaud, U.~Ruth, J.~Schwander, M.~Severi, R.~Spahni, J.~P. Steffensen,
  A.~Svensson, R.~Udisti, C.~Waelbroeck, and E.~Wolff (2007).
\newblock The {EDC3} chronology for the {EPICA} {D}ome {C} ice core.
\newblock {\em Climate of the Past\/}~{\em 3\/}(3), 485--497.

\bibitem[\protect\citeauthoryear{Petit, Jouzel, Raynaud, Barkov, Barnola,
  Basile, Bender, Chappellaz, Davis, Delaygue, Delmotte, Kotlyakov, Legrand,
  Lipenkov, Lorius, P\'{e}pin, Ritz, Saltzman, and Stievenard}{Petit
  et~al.}{1999}]{Petit1999}
Petit, J.~R., J.~Jouzel, D.~Raynaud, N.~I. Barkov, J.-M. Barnola, I.~Basile,
  M.~Bender, J.~Chappellaz, M.~Davis, G.~Delaygue, M.~Delmotte, V.~M.
  Kotlyakov, M.~Legrand, V.~Y. Lipenkov, C.~Lorius, L.~P\'{e}pin, C.~Ritz,
  E.~Saltzman, and M.~Stievenard (1999).
\newblock Climate and atmospheric history of the past 420,000 years from the
  {V}ostok ice core, {A}ntarctica.
\newblock {\em Nature\/}~{\em 399\/}(6735), 429--436.

\bibitem[\protect\citeauthoryear{Rapp}{Rapp}{2019}]{Rapp2019}
Rapp, D. (2019).
\newblock {\em Ice Core Data}, pp.\  83--118.
\newblock Cham: Springer International Publishing.

\bibitem[\protect\citeauthoryear{Robertson, Wright, and Dykstra}{Robertson
  et~al.}{1988}]{RWR}
Robertson, T., F.~T. Wright, and R.~Dykstra (1988).
\newblock {\em Order Restricted Statistical Inference}.
\newblock New York: Wiley.

\bibitem[\protect\citeauthoryear{Robinson}{Robinson}{1997}]{robinson1997large}
Robinson, P.~M. (1997).
\newblock Large-sample inference for nonparametric regression with dependent
  errors.
\newblock {\em The Annals of Statistics\/}~{\em 25\/}(5), 2054--2083.

\bibitem[\protect\citeauthoryear{Ruppert, Wand, and Carroll}{Ruppert
  et~al.}{2003}]{ruppert2003semiparametric}
Ruppert, D., M.~P. Wand, and R.~J. Carroll (2003).
\newblock {\em Semiparametric Regression}.
\newblock Number~12. Cambridge university press.

\bibitem[\protect\citeauthoryear{Scambos and Shuman}{Scambos and
  Shuman}{2016}]{scambos2016}
Scambos, T. and C.~Shuman (2016).
\newblock Comment on `{M}ass gains of the {A}ntarctic ice sheet exceed
  losses’ by {H. J. Z}wally and others.
\newblock {\em Journal of Glaciology\/}~{\em 62\/}(233), 599–603.

\bibitem[\protect\citeauthoryear{Sen, Banerjee, and Woodroofe}{Sen
  et~al.}{2010}]{Bodhi2010}
Sen, B., M.~Banerjee, and M.~Woodroofe (2010).
\newblock Inconsistency of bootstrap: The {G}renander estimator.
\newblock {\em Annals of Statistics\/}~{\em 38}, 1953--1977.

\bibitem[\protect\citeauthoryear{Siegert}{Siegert}{2003}]{Siegert2003}
Siegert, M.~J. (2003).
\newblock Glacial–interglacial variations in central {E}ast {A}ntarctic ice
  accumulation rates.
\newblock {\em Quaternary Science Reviews\/}~{\em 22}, 741--750.

\bibitem[\protect\citeauthoryear{Tang and Chen}{Tang and Chen}{2009}]{Tang2009}
Tang, C.~Y. and S.~X. Chen (2009).
\newblock Parameter estimation and bias correction for diffusion processes.
\newblock {\em Journal of Econometrics\/}~{\em 149\/}(1), 65--81.

\bibitem[\protect\citeauthoryear{Veres, Bazin, Landais, Toye Mahamadou~Kele,
  Lemieux-Dudon, Parrenin, Martinerie, Blayo, Blunier, Capron, Chappellaz,
  Rasmussen, Severi, Svensson, Vinther, and Wolff}{Veres et~al.}{2013}]{VERES}
Veres, D., L.~Bazin, A.~Landais, H.~Toye Mahamadou~Kele, B.~Lemieux-Dudon,
  F.~Parrenin, P.~Martinerie, E.~Blayo, T.~Blunier, E.~Capron, J.~Chappellaz,
  S.~O. Rasmussen, M.~Severi, A.~Svensson, B.~Vinther, and E.~W. Wolff (2013).
\newblock The antarctic ice core chronology (aicc2012): {A}n optimized multi-
  parameter and multi-site dating approach for the last 120 thousand years.
\newblock {\em Climate of the Past\/}~{\em 9}, 1733--1748.

\bibitem[\protect\citeauthoryear{Wang}{Wang}{2013}]{Wang2013}
Wang, P.~K. (2013).
\newblock {\em Physics and Dynamics of Clouds and Precipitation}.
\newblock Cambridge, England: Cambridge University Press.

\bibitem[\protect\citeauthoryear{Yang, Yao, Wang, and Gou}{Yang
  et~al.}{2006}]{Yang}
Yang, M., T.~Yao, H.~Wang, and X.~Gou (2006).
\newblock Correlation between precipitation and temperature variations in the
  past 300 years recorded in {G}uliya ice core, {C}hina.
\newblock {\em Annals of Glaciology\/}~{\em 43\/}(1), 137--141.

\bibitem[\protect\citeauthoryear{Zwally, Li, Robbins, Saba, Yi, and
  Brenner}{Zwally et~al.}{2015}]{zwally2015}
Zwally, H.~J., J.~Li, J.~W. Robbins, J.~L. Saba, D.~Yi, and A.~C. Brenner
  (2015).
\newblock Mass gains of the {A}ntarctic ice sheet exceed losses.
\newblock {\em Journal of Glaciology\/}~{\em 61\/}(230), 1019–1036.

\end{thebibliography}
\end{document}